\newcommand{\unif}{\stackrel{_?}=}
\newcommand{\unifn}{\stackrel{_?}\approx}
\newcommand{\trans}[1]{{\left\llbracket {#1} \right\rrbracket}}
\newcommand{\etrans}[1]{{{\it Eq}( {#1})}}
\newcommand{\transb}[1]{{\left\llbracket #1  \right\rrbracket^{-1}}}
\newcommand{\transn}[1]{\left\llbracket {\sf #1}  \right\rrbracket_\nabla}
\newcommand{\dom}{{\rm Dom}}
\newcommand{\FV}{{\rm FV}}
\newcommand{\Vars}{{\rm Vars}}
\newcommand{\fresh}{\,\#\,}
\newcommand{\infrule}[3]{\infer[\mbox{(#3)}]{#2}{#1}}
\newcommand{\pair}[2]{\langle #1,#2\rangle}
\newcommand{\mdot}{\kern-.35ex\cdot\kern-.35ex}
\newtheorem{theorem}{Theorem}[section]
\newtheorem{definition}[theorem]{Definition}
\newtheorem{lemma}[theorem]{Lemma}
\newtheorem{corollary}[theorem]{Corollary}
\newtheorem{exampleaux}[theorem]{Example}
\newtheorem{remark}[theorem]{Remark}
\newenvironment{example}{\begin{exampleaux}\rm}{\end{exampleaux}}
\title{Nominal Unification\\from a Higher-Order Perspective}
\author{JORDI LEVY\\
Artificial Intelligence Research Institute (IIIA), \\
Spanish Council for Scientific Research (CSIC)
\and 
MATEU VILLARET\\
Departament d'Inform\`atica i Matem\`atica Aplicada (IMA),\\
Universitat de Girona (UdG)
}
\keywords{Higher-Order Pattern Unification, Nominal Unification}
\begin{abstract}
  Nominal Logic is a version of first-order logic with equality,
  name-binding, renaming via name-swapping and freshness of
  names. Contrarily to higher-order logic, bindable names, called
  atoms, and instantiable variables are considered as distinct
  entities. Moreover, atoms are capturable by instantiations, breaking
  a fundamental principle of lambda-calculus.  Despite these
  differences, nominal unification can be seen from a higher-order
  perspective. From this view, we show that nominal unification can be
  reduced to a particular fragment of higher-order unification
  problems: Higher-Order Pattern Unification. This reduction proves
  that nominal unification can be decided in quadratic deterministic
  time, using the linear algorithm for Higher-Order Pattern
  Unification. We also prove that the translation preserves most
  generality of unifiers.
\end{abstract}
\begin{document}
\sloppy

\begin{bottomstuff}
This research has been partially founded by the CICYT research project
TIN2007-68005-C04-01/02/03.
\end{bottomstuff}

\maketitle

\section{Introduction}
\label{sec-introduction}

\emph{Nominal Logic} is a version of first-order many-sorted logic
with equality and primitives for renaming via name-swapping,
name-binding, and freshness of names. It is characterized by a
syntactic distinction between \emph{atoms} (that roughly correspond to
the notion of bound variable) and \emph{variables} (that would
correspond to free variables). Therefore, binders can only bind atoms,
we can only instantiate variables, and atoms are not instantiable even
if they are not bounded. It also provides a
\emph{new-quantifier}~\cite{GP99}, to model name generation and
locality. Nominal logic was introduced at the beginning of this decade
by Gabbay and Pitts~\cite{GP99,Pit01,GP01,Pitts03}.  These first works
have inspired a sequel of papers where bindings and freshness are
introduced in other areas, like nominal algebra~\cite{GM06,GM07,GM09},
equational logic~\cite{CP07}, rewriting~\cite{FG05,FG07},
unification~\cite{UPG03,UPG04}, and Prolog~\cite{CU04,UC05}.

This paper is concerned with \emph{Nominal Unification}, the problem
of deciding if two nominal terms can be made $\alpha$-equivalent by
instantiating their variables by nominal terms. In this instantiation,
variables are allowed to \emph{capture} atoms. Urban, Pitts and
Gabbay~\cite{UPG03,UPG04} describe a sound and complete, but
inefficient (exponential), algorithm for nominal
unification. Fern{\'a}ndez and Gabbay~\cite{FG05} extend this
algorithm to deal with the new-quantifier and locality. Nominal
Logic's equivariance property suggested to Cheney~\cite{CheRTA05} a
stronger form of unification called equivariant unification.  He
proves that equivariant unification and matching are NP-hard problems.
Another variant of nominal unification is permissive unification,
defined by Dowek, Gabbay and Mulligan~\cite{DGM09,DGM10}, that is also
reducible to Higher-Order Pattern Unification.  Calv{\`e}s and
Fern{\'a}ndez describe in~\cite{CF07} a direct but exponential
implementation of a nominal unification algorithm in Maude, and
in~\cite{CF08} a polynomial implementation, based on the use of a
graph representation of terms, and a lazy propagation of
swappings. In~\cite{LV08} we prove that Nominal Unification can be
decided in quadratic time by reduction to Higher-Order Pattern
Unification. The present paper is an extension of this preliminary
paper, where we have simplified the reduction by removing freshness
equations, and we have included the proof of some important properties
of pattern unifiers. In particular, we prove that most general
higher-order pattern unifiers can be written without using other
bound-variable names than the ones used in the presentation of the
unification problem. Moreover, we establish a precise correspondence
between most general nominal unifiers and most general pattern
unifiers. Sections~\ref{sec-removing}, \ref{sec-properties} and
~\ref{sec-correspondence} are completely new in this extended
version. Recently, Calv\`es and Fern\'andez~\cite{Cal10}, and
ourself~\cite{LV10} have independently found direct quadratic nominal
unification algorithms based on the Paterson and Wegman's linear
first-order unification algorithm~\cite{PW78}.

The use of $\alpha$-equivalence and binders in nominal logic
immediately suggests to look at nominal unification from a higher-order
perspective, the one that we adopt in this paper. Some intuitions
about this relation were already roughly described by Urban, Pitts and
Gabbay in~\cite{UPG04}. Cheney~\cite{CheUNIF05} reduces
higher-order pattern unification to nominal unification (here we prove
the opposite reduction).

The main benefit of nominal logic, compared to higher-order logic, is
that it allows the use of binding and $\alpha$-equivalence without the
other difficulties associated with the $\lambda$-calculus. In
particular, with respect to unification, we have that nominal
unification is unitary (most general unifiers are unique) and
decidable~\cite{UPG03,UPG04}, whereas higher-order unification is
undecidable and infinitary~\cite{Luc72,Gol81,Lev98,LV00}.  In this
paper we fully develop the study of nominal unification from the
higher-order logics' view. We show that \emph{full} higher-order
unification is not needed, and \emph{Higher-order Pattern Unification}
suffices to encode Nominal Unification.  This subclass of problems was
introduced by Miller~\cite{Mil91}. Contrarily to general higher-order
unification, higher-order pattern unification is decidable and
unitary~\cite{Mil91,Nip93}. Moreover, unifiability can be decided in
linear time~\cite{Qia96}. All this will lead us to show how to reduce
nominal unification to higher-order pattern unification, and to
conclude its decidability in quadratic deterministic time.

From a higher-order perspective, nominal unification can be seen as a
variant of higher-order unification where: 
\begin{enumerate} 

\item variables are all first-order typed, and constants are of order
at most three, 

\item unification is performed modulo $\alpha$-equivalence, instead of 
the usual $\alpha$ and $\beta$-equivalence, 

\item instances of variables are allowed to capture atoms, contrarily 
to the standard higher-order definition, and

\item apart from the usual equality predicate, we use a \emph{freshness}
predicate $a \fresh t$ with the intended meaning: atom $a$ does not
occur free in $t$.
\end{enumerate}

The third point is the key that makes nominal unification an
interesting subject of research. Variable capture is always a trouble
spot. Roughly speaking, the main idea of this paper is to translate
atoms into bound variables, and variables into free variables with the
list of atoms that they can \emph{capture} as arguments.  The first
point will ensure that, since variables do not have parameters, after
translation, the only arguments of free variables will be list of
pairwise distinct bound variables, hence higher-order patterns.
Moreover, since bound variables will be first-order typed, and
constants third-order typed, the translated problems will be
\emph{second-order} patterns.  The second point is not a
difficulty. Since all nominal variables are first-order typed, their
instantiation does not introduce $\beta$-redexes.  Finally, the fourth
point can also be overcome by translating freshness equations into
equality equations, as described in Section~\ref{sec-removing}.

The remainder of the paper proceeds as follows. After some
preliminaries in Section~\ref{sec-preliminaries}, in
Section~\ref{sec-examples} we illustrate by examples the main ideas of
the reduction at the same time that we show the main features of
nominal unification.  In Section~\ref{sec-removing}, we prove that
freshness equations can be linearly translated into equality
equations.  In Section~\ref{sec-translation}, we show how to translate
a nominal unification problem into a higher-order patterns unification
problem.  Then, after proving some properties of Higher-Order Pattern
Unification in Section~\ref{sec-properties}, we prove that this
translation is effectively a quadratic time reduction, in
Section~\ref{sec-reverse}. In Section~\ref{sec-correspondence}, we
establish a correspondence between nominal unifiers and pattern
unifiers of the translated problems. In particular, we prove that the
translation function and its inverse are monotone w.r.t. the more
general relation, and both translate most general unifiers into most
general unifiers. We conclude in Section~\ref{sec-conclusions}.

\section{Preliminaries}
\label{sec-preliminaries}

In this section we present some basic definitions of Nominal
Unification and Higher-Order Pattern Unification. We will use two
distinct typographic fonts to represent nominal terms and $\lambda$-terms
along this paper.

\subsection{Nominal Unification}

Nominal terms contain \emph{variables} and \emph{atoms}.  Only
variables may be instantiated, and only atoms may be bounded.  They
roughly correspond to the notions of free and bound variables in
$\lambda$-calculus, respectively, but are considered as completely
different entities. However, atoms are not necessarily bounded, and
when they occur free, they are not instantiable.

In \emph{nominal signatures} we have \emph{sorts of atoms} (typically
$\nu$) and \emph{sorts of data} (typically $\delta$) as disjoint sets. 
\emph{Atoms} (typically $\sf a,b,\dots$) have one of the sorts of atoms.
\emph{Variables}, also called \emph{unknowns}, (typically $\sf X,Y,\dots$) 
have a sort of atom or sort of data, i.e. of the form
$\nu\,|\,\delta$. \emph{Nominal function symbols} (typically
$\sf f,g,\dots$) have an \emph{arity} of the form $\tau_1\times\cdots\times\tau_n\to \delta$, where
$\delta$ is a sort of data and $\tau_i$ are sorts given by the grammar
$\tau ::=\nu\,|\,\delta\,|\,\langle\nu\rangle\tau$.
Abstractions have sorts of the form $\langle\nu\rangle\tau$.

\emph{Nominal terms} (typically $\sf t,u,\dots$) are given by the grammar: 
\[
\sf t ::= f(t_1,\dots,t_n)\,|\, a\,|\, a.t \,|\,\pi\mdot X
\] 
where $\sf f$ is a n-ary function symbol, $\sf a$ is an atom, $\pi$ is a permutation
(finite list of swappings), and $\sf X$ is a variable.  They are called
respectively \emph{application}, \emph{atom},
\emph{abstraction} and \emph{suspension}. 
The set of variables of a term $\sf t$ is denoted by $\sf \Vars(t)$.

A \emph{swapping} $\sf (a\,b)$ is a pair of atoms of the same sort. The
effect of a swapping over an atom is defined by $\sf (a\,b)\mdot a = b$
and $\sf (a\,b)\mdot b = a$ and $\sf (a\,b)\mdot c = c$, when $\sf c\neq a,b$.
For the rest of terms the extension is straightforward, in particular,
$\sf (a\,b)\mdot (c.t)=\big((a\,b)\mdot c \big). \big((a\,b)\mdot t\big)$.
A \emph{permutation} is a (possibly empty) sequence of swappings. Its
effect is defined by $\sf (a_1\,b_1)\dots(a_n\,b_n)\mdot t = (a_1\,b_1)
\mdot \big( (a_2\,b_2)\dots(a_n\,b_n)\mdot t\big)$.  Notice that every
permutation $\pi$ naturally defines a bijective function from the set
of atoms to the sets of atoms, that we will also represent as $\pi$.
\emph{Suspensions} are uses of variables with a permutation
of atoms waiting to be applied once the variable is instantiated.
Occurrences of an atom $\sf a$ are said to be \emph{bound} if they are in the
scope of an abstraction of $\sf a$, otherwise are said to be
\emph{free}.

\emph{Substitutions} are finite sets of pairs $\sf[X_1\mapsto
t_1,\dots,X_n\mapsto t_n]$ where $\sf X_i$ and $\sf t_i$ have the same
sort, and the $\sf X_i$'s are pairwise distinct variables.  They can
be extended to sort-respecting functions between terms, and behave
like in first-order logic, hence allowing atom capture. For instance
$\sf [X\mapsto a]a.X = a.a$.  Remember that when applying a
substitution to a suspension, the permutation is immediately applied,
for instance 
$$
\sf [X\mapsto g(a)] f\big( (a\,b)\mdot X, X\big) =
f\big( (a\, b)\mdot g(a), g(a)\big)= f\big(g((a\,b)\mdot a), g(a)\big) = f\big( g(b),g(a)\big)
$$
The domain of a substitution $\sf\sigma=[X_1\mapsto
t_1,\dots,X_n\mapsto t_n]$ is $\sf\dom(\sigma)=\{X_1,\dots,X_n\}$. For
convenience we consider $\sf\dom([X\mapsto X]) = \{X\} \neq \{Y\} =
\dom([Y\mapsto Y])$, although both substitutions have the same effect
when applied to any term.\footnote{We have adopted this definition
  motivated by Remark~\ref{rem-domain}.} Composition of substitutions
is defined by $\sf \sigma_1\circ\sigma_2=[X\mapsto
\sigma_1(\sigma_2(X))\ |\ X\in\dom(\sigma_1)\cup\dom(\sigma_2)]$.  The
restriction of a substitution $\sigma$ to a set of variables $\sf V$,
written $\sf\sigma|_V$, is defined as $\sf\sigma|_V = [X\mapsto
\sigma(X)\ |\ X\in V]$.

A \emph{freshness environment} (typically $\nabla$) is a list of
\emph{freshness constraints} $\sf a\fresh X$ stating that the
instantiation of $\sf X$ cannot contain free occurrences of~$\sf a$.

The notion of \emph{$\alpha$-equivalence} between terms, noted
$\approx$, is defined by means of the following theory:
\[
\begin{array}{c@{\hspace{10mm}}c}
\infrule{}{\sf \nabla \vdash a \approx a}{$\approx$-atom} 
&
\infrule{\sf a \# X\in\nabla \mbox{ for all $\sf a$ such that $\sf \pi\mdot a\neq \pi'\mdot a$}}{\sf \nabla \vdash \pi\,\mdot\,X\approx \pi'\,\mdot\, X}{$\approx$-susp.}
\\[8mm]
\multicolumn{2}{c}{
\infrule{\sf \nabla \vdash t_1\approx t_1' \qquad \cdots\qquad\nabla\vdash t_n\approx t_n'}{\sf \nabla \vdash f(t_1,\dots,t_n)\approx f(t_1',\dots,t_n')}{$\approx$-application}}
\\[8mm]
\infrule{\sf \nabla\vdash t\approx t'}{\sf \nabla \vdash a.t\approx a.t'}{$\approx$-abst-1}
&
\infrule{\sf a\neq a'\quad \nabla\vdash t\approx (a\,a')\mdot t'\quad \nabla\vdash a\# t'}{\sf \nabla\vdash a.t\approx a'.t'}{$\approx$-abst-2}
\end{array}
\]
where the \emph{freshness} predicate $\#$ is defined by:
$$
\begin{array}{c@{\hspace{10mm}}c}
\infrule{\sf a \neq a'}{\sf \nabla \vdash a\# a'}{\#-\mbox{atom}}
&
\infrule{ \sf (\pi^{-1}\mdot a\,\# X)\in\nabla }{\sf \nabla \vdash a\#\pi\,\mdot\,X}{\#-susp.}
\\[8mm]
\multicolumn{2}{c}{
\infrule{\sf \nabla \vdash a\# t_1 \qquad \cdots\qquad\nabla\vdash a\# t_n}{\sf \nabla \vdash a \# f(t_1,\dots,t_n)}{\#-\mbox{application}}}
\\[8mm]
\infrule{}{\sf \nabla \vdash a\# a.t}{\#-\mbox{abst-1}}
&
\infrule{\sf a\neq a'\quad \nabla\vdash a\# t}{\sf \nabla\vdash a\# a'.t}{\#-abst-2}
\end{array}
$$

Their intended meanings are:
\begin{itemize}
\item
$\sf \nabla\vdash a\fresh t$ holds if, for every substitution $\sigma$
respecting the freshness environment $\nabla$ (i.e. avoiding the atom
captures forbidden by $\nabla$), $\sf a$ is not free in $\sf \sigma(t)$;
\item
$\sf \nabla\vdash t\approx u$ holds if, for every substitution $\sigma$
respecting the freshness environment $\nabla$, $\sf t$ and $\sf u$ are
$\alpha$-equivalent.
\end{itemize}

A \emph{nominal unification problem} (typically $\sf P$) is a set of
equations of the form $\sf t\unifn u$, or of the form $\sf a\fresh^?t$,
called \emph{equality equations} and \emph{freshness equations},
respectively.

A \emph{solution} or \emph{unifier} of a nominal problem $\sf P$ is a pair
$\pair\nabla\sigma$ satisfying $\sf \nabla\vdash
a\fresh\sigma(t)$, for all freshness equations $\sf a\fresh^?t\in P$,
and $\sf \nabla\vdash\sigma(t)\approx\sigma(u)$, for all equality
equations $\sf t\unifn u\in P$.  Later, in
Section~\ref{sec-translation}, we will also require solutions to
satisfy $\sf\dom(\sigma)=\Vars(P)$.  In Remark~\ref{rem-domain} we
justify why this does not affect to solvability of nominal problems.

Given two substitutions $\sf\sigma_1$ and $\sf\sigma_2$, and two
freshness environments $\sf\nabla_1$ and $\sf\nabla_2$, we say that
$\sf\nabla_2\vdash \sigma_1(\nabla_1)$, if $\sf \nabla_2\vdash a\fresh
\sigma_1(X)$ holds for each $\sf a\fresh X\in\nabla_1$; and we say
that $\sf\nabla_1\vdash \sigma_1\approx\sigma_2$, if
$\sf\nabla_1\vdash \sigma_1(X)\approx \sigma_2(X)$ holds for all $\sf
X\in \dom(\sigma_1)\cup\dom(\sigma_2)$.  Given a nominal unification
problem $\sf P$, we say that a solution $\pair{\nabla_1}{\sigma_1}$ is
\emph{more general} than another solution $\pair{\nabla_2}{\sigma_2}$,
if  there exists a
substitution $\sigma'$ satisfying $\nabla_2\vdash \sigma'(\nabla_1)$
and
$\sf\nabla_2\vdash\sigma'\circ\sigma_1|_{\dom(\sigma_1)\cup\dom(\sigma_2)}\approx\sigma_2$. As
usual, we say that a solution $\sigma$ is \emph{most general} if, for
any other solution $\sigma'$ more general than $\sigma$, we have also
that $\sigma$ is also more general than $\sf\sigma'$. Most general
nominal unifiers are \emph{unique}, in the usual sense: if
$\sf\sigma_1$ and $\sf\sigma_2$ are both most general, then
$\sf\sigma_1$ is more general than $\sf\sigma_2$, and vice versa.

\begin{example}
  The solutions of the equation $\sf a.X \unifn b.Y$ can not
  instantiate $\sf X$ with terms containing free occurrences of the
  atom $\sf b$, for instance if we apply the substitution $\sf
  [X\mapsto b]$ to both sides of the equation we get $\sf [X\mapsto b]
  (a.X) = a.b$, for the left hand side, and $\sf [X\mapsto b] (b.Y) =
  b.Y$, for the right hand side, and obviously $\sf a.b\unifn b.Y$ is
  unsolvable.

  A most general solution of this equation is $\sf \pair{\{b\# X\}}
  {Y\mapsto (a\,b)\mdot X]}$. Another most general solution is $\sf
  \pair{\{a\# Y\}}{[X\mapsto (a\,b)\mdot Y]}$. Notice that the first
  unifier is equal to the second composed with $\sf \sigma'=[Y\mapsto
  (a\,b)\cdot X]$, hence the second one is more general than the first
  one. Similarly, the first one is more general that the second
  one. Hence, both are equivalent.
\end{example}

\subsection{Higher-Order Pattern Unification}\label{sec-pattern}

In higher-order signatures we have \emph{types} constructed from a set
of \emph{basic types} (typically $\delta,\nu,\dots$) using the grammar
$\tau ::= \delta\,|\,\nu\,|\,\tau\to\tau$, where $\to$ is associative to the
right.  \emph{Variables} (typically $X,Y,Z,x,y,z,a,b,\dots$) and
\emph{constants} (typically $f,c,\dots$) have an assigned type.

\emph{$\lambda$-terms} are built using the grammar 
\[t ::= x\,|\,c\,|\,\lambda
x.t\,|\, t_1\,t_2
\]
where $x$ is a variable and $c$ is a constant, and are typed as usual.
For convenience, terms of the form $(\dots (a\, t_1)\dots t_n)$, where
$a$ is a constant or a variable, will be written as $a(t_1,\dots,t_n)$,
and terms of the form $\lambda x_1.\cdots.\lambda x_n.t$ as $\lambda
x_1,\dots,x_n.t$. We use $\vec{x}$ as a short-hand for
$x_1,\dots,x_n$.  If nothing is said, terms are assumed to be written
in $\eta$-long $\beta$-normal form. Therefore, all terms have the form
$\lambda x_1.\dots.\lambda x_n. h(t_1,\dots,t_m)$, where $m,n\geq 0$,
$h$ is either a constant or a variable, $t_1,\dots,t_m$ have also this form,
and the term $h(t_1,\dots,t_m)$ has a basic type.

Other standard notions of the simply typed $\lambda$-calculus, like
\emph{bound} and \emph{free} occurrences of variables, 
\emph{$\alpha$-conversion, $\beta$-reduction, $\eta$-long 
$\beta$-normal form}, etc. are defined as usual (see~\cite{dowek}). We
will notate free occurrences of variables with capital letters
$X,Y,\dots$, for the sake of readability. The set of free variables of
a term $t$ is denoted by $\Vars(t)$.  When we write an equality
between two $\lambda$-terms, we mean that they are equivalent modulo
$\alpha$, $\beta$ and $\eta$ equivalence. When we write an equality
$=_\alpha$, we mean that they are $\alpha$-equivalent.

\emph{Substitutions} are finite sets of pairs $\sigma=[X_1\mapsto
t_1,\dots,X_n\mapsto t_n]$ where $X_i$ and $t_i$ have the same type
and the $X_i$'s are pairwise distinct variables. They can be extended
to type preserving function from terms to terms as usual. The
\emph{domain} is $\dom(\sigma) =\{X_1,\dots,X_n\}$. We say that a
substitution $\sigma_1$ is \emph{more general} than another
substitution $\sigma_2$, if there exists a substitution $\sigma'$
satisfying $\sigma'\circ\sigma_1(X)=\sigma_2(X)$, for all $X\in
\dom(\sigma_1)\cup\dom(\sigma_2)$.  We say that a variable $X$
\emph{occurs} in a substitution $\sigma$, if $X\in\Vars(\sigma(Y))$,
for some $Y\in\dom(\sigma)$.

A \emph{higher-order unification problem} is a finite set of equations
$P=\{t_1\unif u_1,\dots,t_n\unif u_n\}$, where $t_i$ and $u_i$ have
the same type. A \emph{solution} or \emph{unifier} of a unification
problem $P$ is a substitution $\sigma$ satisfying
$\sigma(t)=\sigma(u)$, for all equations $t\unif u\in P$. We say that a unifier
$\sigma$ is \emph{most general} if, for any other unifier $\sigma'$
more general than $\sigma$, we have $\sigma$ is also more general than
$\sf\sigma'$.

A \emph{higher-order pattern} is a $\lambda$-term where,
when written in $\beta\eta$-normal form, all free variable occurrences
are applied to lists of pairwise distinct bound variables. For
instance, $\lambda x.f(X(x),Y)$, $f(c,\lambda x.x)$ and $\lambda
x,y.X(\lambda z.x(z),y)$ are patterns, while $\lambda
x.f(X(X(x)),Y)$, $f(X(c),c)$ and $\lambda x.\lambda y.X(x,x)$ are
not. Notice that, since $\lambda z.x(z)$ is equivalent to $x$, the
parameters of $X(\lambda z.x(z),y)$ are considered a list of pairwise
distinct bound variables.

\emph{Higher-order pattern unification} is the problem of deciding if
there exists a unifier for a set of equations between higher-order
patterns.  Like in nominal unification, most general pattern unifiers
are unique. Moreover, most general unifiers instantiate variables by
higher-order patterns.

The following is a set of rules defining Nipkow's
algorithm~\cite{Nip93} that computes, when it exists, the most general
unifier of a pattern unification problem.
\[
\arraycolsep 3pt
\begin{array}{rcl}
\lambda x\,.\, s \unif \lambda x\,.\, t& \to  &\langle  \{s\unif t\}, [\ ]\rangle\\[1mm]
a(t_1,\dots, t_n)\unif a(u_1,\dots, u_n) &\to& \langle \{t_1\unif u_1, \dots, t_n\unif u_n\}, [\ ]\rangle\\
& & \mbox{where $a$ is a constant or bound variable}\\[1mm]
Y(\vec{x})\unif a(u_1,\dots,u_m) & \to & \langle \begin{array}[t]{l}\{Y_1(\vec{x})\unif u_1,\dots, Y_m(\vec{x})\unif u_m\} , \\
{}[Y\mapsto \lambda \vec{x}.a(Y_1(\vec{x}),\dots,Y_m(\vec{x})) ]\,\rangle\end{array}\\
& & \mbox{where $Y\not\in \FV(u_1,\dots,u_m)$}\\
& & \mbox{and $a$ is a constant or $a\in\{\vec{x}\}$ }\\[1mm]
X(\vec{x})\unif X(\vec{y}) & \to & \langle \emptyset, [X\mapsto \lambda \vec{x}.Z(\vec{z}) ] \rangle\\
& & \mbox{where $\{\vec{z}\} = \{x_i\, |\, x_i=y_i\}$}\\[1mm]
X(\vec{x})\unif Y(\vec{y}) & \to & \langle \emptyset, [X\mapsto \lambda \vec{x}.Z(\vec{z}), Y\mapsto \lambda \vec{y}\,.\,Z(\vec{z}) ] \rangle\\
& & \mbox{where $X\neq Y$ and $\{\vec{z}\} = \{\vec{x}\}\cap \{\vec{y}\}   $}\\
\end{array}
\]

The rules transform any equation into a pair $\langle$set of
equations, substitution$\rangle$. The algorithm proceeds by replacing
the equation on the left of the rule by the set of equations on the
right. The substitution is applied to the new set of equations, and
used to, step by step, construct the unifier. Therefore, any rule of
the form $t\unif u\to \pair{E}{\rho}$ produces a transformation of the
form
\[
\pair{P\cup\{t\unif u\}}{\sigma} \Rightarrow \pair{\rho(P)\cup E}{\rho\circ\sigma}
\]
The algorithm starts with the pair $\pair{P}{Id}$ and, if $P$ is
solvable, finishes with $\pair{\emptyset}{\sigma}$, where $\sigma$
with domain restricted to $\FV(P)$ is the most general unifier~\cite[Theorem 3.1]{Nip93}.

In the first rule the binder can be
removed because, in Nipkow's presentation, free and bound variable names
are assume to be from distinct sets, and can be distinguished.  The
equations on the right of the second rule may not be normalized,
i.e. the term $\lambda\vec{x}.Y_i(x_1,\dots,x_n)$ may require a
$\eta$-expansion when $u_i$ is not base typed.

There is an algorithm that finds higher-order pattern unifiers, if
exist, in linear time~\cite{Qia96}.

\section{Four Examples}
\label{sec-examples}

In order to describe the reduction of nominal unification to
higher-order pattern unification, we will use the unification problems
proposed in \cite{UPG03,UPG04} as a quiz.

\begin{example}
The nominal equation 
$$\sf 
a.b. f( X_1,b) 
\unifn
b.a. f( a,X_1) 
$$
has no nominal unifiers.  Notice that, although unification is
performed modulo $\alpha$-equivalence, as far as we allow atom
capture, we can not $\alpha$-convert terms before instantiating
them. Therefore, this problem is \emph{not} equivalent to
$$\sf
a.b. f( X_1,b) 
\unifn
a.b. f( b,X_1) 
$$
which is solvable, and must be $\alpha$-converted as
$$\sf
a.b. f( X_1,b) 
\unifn
a.b. f( b,(a\,b)\mdot X_1) 
$$ 

Recall that $\sf (a\,b)\mdot X_1$ means that, after instantiating $\sf
X_1$ with a term that possibly contain $\sf a$ or $\sf b$, we have to
exchange these variables.

According to the ideas exposed in the introduction, we have to replace
every occurrence of $\sf X_1$ by $ X_1(a,b)$, since $\sf \langle a,b\rangle$ is the
list of atoms (bound variables $a,b$) that can be captured. We
get:
$$
\lambda a.\lambda b. f(X_1(a,b),b) 
\unif
\lambda b.\lambda a. f(a,X_1(a,b)) 
$$ 
Since this is a  higher-order unification problem, we
can $\alpha$-convert one of the sides of the equation and get: 
$$
\lambda a.\lambda b. f(X_1(a,b),b)
\unif
\lambda a.\lambda b. f(b,X_1(b,a))
$$
which is unsolvable, like the original nominal equation.
\end{example}

\begin{example}
The nominal equation 
$$
\sf a.b. f(X_2,b)
\unifn
b.a. f(a,X_3)
$$
 is solvable.
Its translation is
$$
\lambda a.\lambda b. f(X_2(a,b),b)
\unif
\lambda b.\lambda a. f(a,X_3(a,b))
$$
The most general unifier of this higher-order pattern unification problem is
$$
\begin{array}{l}
X_2 \mapsto \lambda x.\lambda y. y\\
X_3 \mapsto \lambda x.\lambda y. x
\end{array}
$$

Now, taking into account that the first argument corresponds to the
atom $\sf a$, and the second one to $\sf b$, we can reconstruct the
most general nominal unifier as:
$$
\begin{array}{l}
\sf X_2 \mapsto b\\
\sf X_3 \mapsto a
\end{array}
$$
\end{example}

\begin{example}
  In some cases, there are interrelationships between the instances of
  variables that make reconstruction of unifiers more difficult. This
  is shown with the following example:
\[
\sf a.b. f(b,X_4) 
\unifn
b.a. f(a,X_5)
\]
that is solvable. Its translation results on:
\[
\lambda a.\lambda b. f(b,X_4(a,b))
\unif
\lambda b.\lambda a. f(a,X_5(a,b))
\]
and its most general unifier is:\footnote{The unifier $X_5 \mapsto
  \lambda x.\lambda y. X_4(y,x)$ is equivalent modulo variable
  renaming. In this case we obtain the also equivalent nominal unifier
  $\sf X_5 \mapsto (a\ b)\mdot X_4$.}
\[
X_4\mapsto \lambda x.\lambda y. X_5(y,x)
\]
This higher-order unifier can be used to reconstruct the nominal unifier
\[
\sf X_4 \mapsto (a\ b)\mdot X_5
\]

The swapping $\sf (a\,b)$ comes from the fact that the arguments of
$X_5$ and the lambda abstractions in front have a different order.
\end{example}

\begin{example}\label{ex-four}
  The solution of a nominal unification problem is not just a
  substitution, but a pair $\sf\pair\nabla\sigma$ where
  $\sf \sigma$ is a substitution and $\sf \nabla$ is a freshness
  environment imposing some restrictions on the atoms that can occur
  free in the fresh variables introduced by $\sf \sigma$. The nominal
  equation
$$\sf 
a.b. f(b,X_6)
\unifn
a.a. f(a,X_7) 
$$
has as solution
$$
\begin{array}{l}
\sf \sigma = [X_6 \mapsto (b\ a)\mdot X_7]\\
\sf \nabla = \{b \fresh X_7\}
\end{array}
$$ 
where the freshness environment is not empty and requires instances of
$\sf X_7$ to not contain (free) occurrences of $\sf b$.  Let us see
how this is reflected when we translate the problem into a
higher-order unification problem. The translation of the equation
using the translation algorithm results on:
\begin{equation}
\lambda a.\lambda b. f(b,X_6(a,b))
\unif
\lambda a.\lambda a. f(a,X_7(a,b))
\label{eq1}\end{equation}
After a convenient $\alpha$-conversion we get
$$
\lambda a.\lambda c. f(c,X_6(a,c))
\unif
\lambda a.\lambda c. f(c,X_7(c,b))
$$
The most general unifier is again unique:
$$
\begin{array}{l}
X_6 \mapsto \lambda x.\lambda y. X_8(y,b)\\
X_7 \mapsto \lambda x.\lambda y. X_8(x,y)
\end{array}
$$

Nevertheless, in this case we cannot reconstruct the nominal
unifier. Moreover, by instantiating the free variable $b$, we get
other (non-most general) higher-order unifier without nominal
counterpart. The translation does not work in this case because $b$
occurs free in the right hand side of (\ref{eq1}).  We translate both
atoms and nominal variables as higher-order variables. Occurrences of
nominal variables become free occurrences of variables, and
occurrences of atoms, if are bounded, become bound occurrences of
variables. Therefore, in most cases, after the translation the
distinction atom/variable become a distinction free/bound
variable. However, if atoms are not bounded, as in this case, they are
translated as free variables, hence are instantiable, whereas atoms
are not instantiable.

To avoid this problem, we have to ensure that any occurrence of an
atom is translated as a bound variable occurrence. This is easily
achievable if we add binders in front of both sides of the
equation. Therefore, the correct translation of this problem is:
$$
\lambda a.\lambda b.\lambda a.\lambda b. f(b,X_6(a,b)) 
\unif
\lambda a.\lambda b.\lambda a.\lambda a. f(a,X_7(a,b))
$$ 
where two new binder $\lambda a.\lambda b$ have been introduced in
front of both sides of the equation.  The most general unifier is now:
$$
\begin{array}{l}
X_6 \mapsto \lambda x.\lambda y. X_8(y)\\
X_7 \mapsto \lambda x.\lambda y. X_8(x)
\end{array}
$$
This can be used to reconstruct the nominal substitution:
$$
\begin{array}{l}
\sf X_6 \mapsto (a\ b)\mdot X_8\\
\sf X_7 \mapsto X_8
\end{array}
$$

As far as $X_8(x)$ is translated back as $\sf X_8$, and $X_8(x)$ does
not uses the second argument (the one corresponding to $\sf b$), we
have to add a supplementary condition ensuring that $\sf X_8$ does not
contain free occurrences of $\sf b$. This results on the freshness
environment $\sf \{b \fresh X_8\}$. Then, $X_8(y)$ is translated back
as $\sf (a\,b)\mdot X_8$.
\end{example}

\section{Removing Freshness Equations}
\label{sec-removing}

In this section we show that freshness equations do not contribute to
make nominal unification more expressive. We prove that nominal
unification can be linearly-reduced to nominal unification without
freshness equations. We call this restriction of nominal unification
\emph{equational nominal unification}.  In next sections we will
describe a quadratic reduction of equational nominal unification to
higher-order pattern unification. The absence of freshness equations
makes the reduction to higher-order pattern unification simpler,
compared with the reduction described in the preliminary version of
this paper~\cite{LV08}.

\begin{definition}
  We define the translation of nominal unification problems into
  equational nominal unification problems inductively as follows:
$$
\begin{array}{ll}
\sf\etrans{\{a \fresh^? t\} \cup P} = \{ a.b.t \unifn b.b.t\} \cup \etrans{P} & \mbox{for some $\sf b\neq a$}\\[1mm]
\multicolumn{2}{l}{\sf\etrans{\{t\unifn u\} \cup P} =  \{t\unifn u\} \cup \etrans{P}}\\
\end{array}
$$ 
\end{definition}

\begin{lemma}
  Given a nominal unification problem $\sf P$, its translation into
  equational nominal unification $\sf \etrans{P}$ can be calculated in
  linear time. Hence, $\sf \etrans{P}$ has linear-size on the size of
  $\sf P$.
\end{lemma}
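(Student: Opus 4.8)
The plan is to read the definition of $\etrans{\cdot}$ as a structural recursion that scans the equations of $\sf P$ one at a time, and to show that the work done on each equation, together with the size of the fragment of $\etrans{\sf P}$ that it produces, is bounded by a constant multiple of the size of that equation. Summing over all equations then yields the linear-time claim, from which the linear-size claim follows immediately: an algorithm running in linear time cannot emit more than linearly many symbols.

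First I would fix a size measure: let $|{\sf t}|$ be the number of nodes in the syntax tree of a nominal term $\sf t$, let the size of an equation be the sum of the sizes of its two sides plus a constant for the $\unifn$ or $\fresh^?$ symbol, and let $|{\sf P}|$ be the sum of the sizes of the equations of $\sf P$. I would then treat the two recursive clauses separately. For an equality equation $\sf t\unifn u$ the translation copies it unchanged, so the time spent and the size produced both equal the size of the equation. For a freshness equation $\sf a\fresh^? t$ the translation emits $\sf a.b.t\unifn b.b.t$: choosing a suitable $\sf b$ is an $O(1)$ operation, and building the output amounts to making two copies of $\sf t$ together with a constant number of abstraction and atom nodes, so the time is $O(|{\sf t}|)$ and the produced fragment has size at most $2|{\sf t}|+c$ for a fixed constant $c$.

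Summing these per-equation bounds, where $e$ ranges over the equations of $\sf P$, gives total running time $\sum_{e} O(|e|)=O(|{\sf P}|)$ and total output size $\sum_{e}(2|e|+c)\le 2|{\sf P}|+cN$, where $N$ is the number of equations. Since every equation has size at least one we have $N\le |{\sf P}|$, so the output size is $O(|{\sf P}|)$, which is the stated linear-size conclusion.

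The only delicate point, and the step I would be most careful about, is the choice of the atom $\sf b$ and why it costs no more than constant time per freshness equation. The definition requires only $\sf b\neq a$ (of the same sort as $\sf a$), so one may keep a single designated auxiliary atom of each sort and reuse it throughout; there is \emph{no} need to search $\sf t$ for an atom that is globally fresh, which is precisely what would otherwise threaten to turn the per-equation cost into more than constant time. Even if one preferred a genuinely fresh $\sf b$, a single linear scan of $\sf P$ to locate an unused atom before starting the translation would preserve overall linearity. The duplication of $\sf t$ in the clause for freshness equations merely introduces the constant factor $2$ and is therefore harmless for a linear bound.
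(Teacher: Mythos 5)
Your proposal is correct. Note that the paper states this lemma without any proof at all---it is treated as immediate from the definition of $\etrans{\cdot}$---so there is no argument in the paper to compare yours against. Your elaboration supplies exactly the details one would want: the only equation class that changes size is the freshness equations, each of which is replaced by $\sf a.b.t \unifn b.b.t$ containing two copies of $\sf t$ plus a constant number of new nodes, giving at most a factor-of-two blowup; and your observation that the definition demands only $\sf b\neq a$ (of the same sort), so a single designated auxiliary atom per sort suffices and no per-equation search is needed, is the right way to secure the constant-time choice of $\sf b$. This is consistent with how the paper later uses the translation, since its proof of the subsequent lemma relies only on $\sf b\neq a$ and not on $\sf b$ being fresh for $\sf t$.
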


\begin{lemma}
  The pair $\sf \pair\nabla\sigma$ solves $\sf P$, if, and
  only if, $\sf \pair\nabla\sigma$ solves $\sf \etrans{P}$.
\end{lemma}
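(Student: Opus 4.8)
The plan is to argue by induction on the structure of the definition of $\sf\etrans{\cdot}$, reducing the whole statement to a single equivalence about one freshness equation. The base case $\sf\etrans{\emptyset}=\emptyset$ is trivial, and since a pair solves a set of equations exactly when it solves each member, in the inductive step it suffices to treat one equation at a time. The case of an equality equation is immediate: as $\sf\etrans{\{t\unifn u\}\cup P}=\{t\unifn u\}\cup\etrans{P}$, the equation $\sf t\unifn u$ is copied unchanged and the claim follows from the induction hypothesis applied to $\sf P$. Hence everything reduces to the freshness case.

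For the freshness case I must show that $\sf\pair\nabla\sigma$ solves $\sf a\fresh^? t$ if and only if it solves $\sf a.b.t\unifn b.b.t$, where $\sf b\neq a$ is the atom chosen by the translation. By definition, solving the freshness equation means $\sf\nabla\vdash a\fresh\sigma(t)$, while solving the equality equation means $\sf\nabla\vdash\sigma(a.b.t)\approx\sigma(b.b.t)$. Since substitutions do not act on atoms, $\sf\sigma(a.b.t)=a.b.\sigma(t)$ and $\sf\sigma(b.b.t)=b.b.\sigma(t)$. Writing $\sf s=\sigma(t)$, the freshness case therefore follows from the key claim: for every nominal term $\sf s$ and every atom $\sf b\neq a$,
$$\sf \nabla\vdash a\fresh s \quad\Longleftrightarrow\quad \nabla\vdash a.b.s\approx b.b.s.$$

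To prove the key claim I would exploit that, because $\sf a\neq b$, the $\approx$-rules are syntax-directed at every abstraction layer: the only rule that can compare $\sf a.(\cdots)$ with $\sf b.(\cdots)$ is ($\approx$-abst-2), so a derivation exists precisely when its premises hold, which yields the biconditional in both directions at once. Unfolding the outer ($\approx$-abst-2) for $\sf a.b.s\approx b.b.s$ gives the premises $\sf\nabla\vdash b.s\approx (a\,b)\mdot(b.s)$ and $\sf\nabla\vdash a\fresh b.s$; the latter reduces by (\#-abst-2) to $\sf\nabla\vdash a\fresh s$. Computing $\sf(a\,b)\mdot(b.s)=a.\big((a\,b)\mdot s\big)$ and applying ($\approx$-abst-2) once more to the first premise produces $\sf\nabla\vdash s\approx (b\,a)\mdot(a\,b)\mdot s$, which holds by reflexivity since $\sf(b\,a)\mdot(a\,b)\mdot s = s$ (swappings are self-inverse), together with $\sf\nabla\vdash b\fresh (a\,b)\mdot s$, which by equivariance of freshness (with $\pi=(a\,b)$) is equivalent to $\sf\nabla\vdash a\fresh s$. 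Thus every nontrivial premise collapses to the single condition $\sf\nabla\vdash a\fresh s$, establishing the key claim.

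The main obstacle is supplying the two auxiliary facts used above: reflexivity of $\approx$ (that $\sf\nabla\vdash s\approx s$ holds unconditionally, which follows because the side condition of ($\approx$-susp.) is vacuous when $\pi=\pi'$), and equivariance of freshness (that $\sf\nabla\vdash a\fresh s$ iff $\sf\nabla\vdash\pi\mdot a\fresh\pi\mdot s$). Both are standard properties of the nominal calculus, obtainable by a routine induction on the structure of $\sf s$ or citable from Urban, Pitts and Gabbay; the only mildly delicate point is checking equivariance in the suspension case against the side condition of (\#-susp.), which is nonetheless straightforward. Once these are in place, the inversion property of the syntax-directed rules delivers the biconditional cleanly, and the induction on $\sf\etrans{\cdot}$ completes the proof.
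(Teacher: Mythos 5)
Your proof is correct and follows essentially the same route as the paper's: the same two nested applications of ($\approx$-abst-2), the same appeal to equivariance of freshness (Lemma~2.7 of Urban--Pitts--Gabbay) to reduce $\sf \nabla\vdash b\#(a\,b)\mdot s$ to $\sf \nabla\vdash a\# s$, and the same inversion argument (these rules are the only applicable ones when $\sf a\neq b$) for the converse direction. The only divergence is that you push $\sigma$ through the added abstractions and prove the biconditional directly for $\sf s=\sigma(t)$, whereas the paper establishes $\sf a\# t\vdash a.b.t\approx b.b.t$ schematically and then transfers to instances via the substitution lemma (Lemma~2.14 of the same reference); your routing is marginally more self-contained but not a different argument.
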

\begin{proof}
  We first prove that $\sf \langle a\# t,Id\rangle$ is a solution of
  $\sf\{ a.b.t \unifn b.b.t\}$ when $\sf b\neq a$

\[\sf
\infer[\mbox{\footnotesize ($\approx$-abst-2)}]%
{\sf a.b.t \approx b.b.t}
{
    	\infer[\mbox{\footnotesize ($\approx$-abst-2)}]%
	{\sf b.t \approx a.(a\,b)\mdot t}
	{
	\infer*{\sf t \approx t}{}
	&
	\infer*[\mbox{\footnotesize (lemma 2.7)}]{\sf b \# (a\,b)\mdot t}
	{\sf a \# t}
	}
&
   	\infer[\mbox{\footnotesize ($\#$-abst-2)}]%
  	 {\sf a \# b.t}
  	 {\sf a \# t}
}
\]

In this proof we prove $\sf t\approx t$ from an empty set of
assumptions. We can prove that this is always possible, for any term
$\sf t$, by structural induction on $\sf t$.  We also prove $\sf
b\#(a\,b)\mdot t$ from $\sf a\# t$, using Lemma~2.7 of \cite{UPG04}.

Lemma 2.14 of \cite{UPG04} states that $\sf\nabla'\vdash
\sigma(\nabla)$ and $\sf\nabla\vdash t\approx t'$ implies
$\sf\nabla'\vdash\sigma(t)\approx\sigma(t')$.  In particular,
$\sf\nabla\vdash \sigma(a\#t)$ and $\sf a\#t\vdash a.b.t \approx b.b.t$
implies $\sf\nabla\vdash\sigma( a.b.t)\approx\sigma(b.b.t)$.  Therefore,
if $\sf\pair\nabla\sigma$ solves $\sf a\#^? t$, then $\sf
\pair\nabla\sigma$ solves $\sf a.b.t \unifn b.b.t$.

Second, analyzing the previous proof, we see that the inference rules
applied in each situation were the only applicable rules. Therefore,
any solution $\sf\pair\nabla\sigma$ solving $\sf a.b.t
\unifn b.b.t$, also solves $\sf a\#^? t$, because any proof of $\sf
\sf\sigma(a.b.t) \approx \sigma(b.b.t)$ contains a proof of $\sf a\#
\sigma(t)$ as a sub-proof.

From, these two facts we conclude that $\sf a \fresh^? t$ and $\sf
a.b.t \unifn b.b.t$ have the same set of solutions, for any $\sf b\neq
a$. Therefore, $\sf \{a \fresh^? t\} \cup P$ and $\sf \{ a.b.t \unifn
b.b.t\} \cup P$, also have the same set of solutions, for any nominal
unification problem $\sf P$.  From this we conclude that $\sf P$ and
$\sf \etrans{P}$ have the same set of solutions.
\end{proof}

\begin{corollary}
Nominal unification can be linearly-reduced to equational nominal unification.
\end{corollary}

\section{The Translation Algorithm}
\label{sec-translation}

In this section we formalize the translation algorithm. We transform
equational nominal unification problems into higher-order unification
problems. Both kinds of problems are expressed using distinct kinds of
signatures. In nominal unification we have sorts of atoms and sorts of
data. In higher-order this distinction is no longer necessary, and we
will have a \emph{base type} for every sort of atoms $\sf\nu$ or sort
of data $\sf\delta$.  We give a \emph{sort to types translation
  function} that allows us to translate any sort into a type.

\begin{definition}\label{def-translation-types}
The translation function is defined on sorts inductively as follows.
$$
\begin{array}{l}
\trans{\sf\delta} = \delta\\
\trans{\sf\nu} = \nu\\
\trans{\sf\tau_1\times\cdots\times\tau_n\to\tau} = \trans{\sf\tau_1} \to \cdots\to\trans{\sf\tau_n} \to\trans{\tau}\\
\trans{\sf\langle\nu\rangle\tau} = \nu\to\trans{\sf\tau}
\end{array}
$$ 
where $\delta$ and $\nu$ are base types.  
\end{definition}

\begin{remark}
  The translation function for terms depends on \emph{all} the atoms
  occurring in the nominal unification problem. We assume that there
  exists a \emph{fixed}, \emph{finite} and \emph{ordered} list of
  \emph{distinct} atoms $\sf\langle a_1,\dots,a_n\rangle$ used in the
  problem. This seems to contradict the assumption of a countably
  \emph{infinite} set of atoms for every sort. However, this does not
  imply a loss of generality as far as every nominal unification
  problem only contains a finite set of atoms, and its solutions can
  be expressed without adding new atoms (this is a consequence of
  Lemma~\ref{lem-prop-pattern}). Notice also that the nominal
  unification algorithm in~\cite{UPG04} generates unifiers that do not
  introduce new atoms, because, in all transformation rules, the set
  of atoms in the right-hand side are a subset of the set of atoms in
  the left-hand side.

  From now on, we will consider this list given and fixed.

  In~\cite{DGM09,DGM10} they solve this problem using a permission set
  for every variable.  They roughly correspond to the set of atoms
  capturable by this variable.  However, in their case, this set is
  infinite and co-infinite.  In our case, we prove that solutions can
  be expressed using the same finite set of atoms occurring in the
  problem, and the set of capturable atoms of a variable is finite and
  co-finite.
\end{remark}

For every function symbol $\sf f$, we will use a constant with the
same name $f$.  Every atom $\sf a$ is translated as a (bound)
variable, with the same name $a$. For every variable (unknown) $\sf
X$, we will use a (free) variable with the same name $X$.  Trivially,
atom abstractions $\sf a.t$ are translated as lambda abstractions
$\lambda a.t$, and applications $\sf f(t_1,\dots,t_n)$ as applications
$f(t_1,\dots,t_n)$. The translation of suspensions $\sf\pi\mdot X$ is
more complicated, as far as it gets rid of atom capture. Recall that
in all cases we use distinct character fonts for symbols of nominal
logic and symbols of the higher-order framework. The translation is
parametric on a freshness environment. Notice that, although we have
removed freshness equations, nominal unifiers are composed by a
freshness environment and a substitution.

\begin{definition} 
\label{def-translation-terms}
Let $\sf\langle a_1,\dots,a_n\rangle$ be a fixed ordered list of atoms.
The translation function from nominal terms with a freshness
environments $\sf\nabla$ into $\lambda$-terms is defined inductively
as follows.  
$$
\begin{array}{l@{\ \ \ }l}
\transn{a} = a\\
\transn{f(t_1,\dots,t_n)} = f(\transn{t_1},\dots,\transn{t_n})\\
\transn{a.t} = \lambda a.\transn{t}\\
\transn{\pi\mdot X} = X(\transn{\pi\mdot b_1},\dots,\transn{\pi\mdot b_m})
&
\mbox{where $ \sf \langle b_1,\dots,b_m\rangle=\langle a\in\langle a_1,\dots,a_n\rangle\ |\ a\fresh X \not\in \nabla\rangle$}
\end{array}
$$ 
where, for any atom $\sf a:\nu$, $a:\trans{\nu}$ is the corresponding
bound variable, for any function symbol $\sf f:\tau$, $f:\trans{\tau}$
is the corresponding constant, and for any variable $\sf X:\tau$, the
list $\sf \langle b_1,\dots,b_m\rangle$ is the sublist\footnote{Notice
  that we say sublist, not subset, to emphasize that the relative
  order between $\sf a$'s is preserved.} of $\sf \langle
a_1,\dots,a_n\rangle$ composed by the atoms satisfying $\sf a\fresh X
\not\in \nabla$, and
$X:\trans{\nu_1}\to\dots\to\trans{\nu_m}\to\trans{\tau}$ is the
corresponding free variable, where $\sf b_j:\nu_j$.\footnote{Notice
  that $\sf b_j$ and $\sf \pi\mdot b_j$ are of the same sort.}
\end{definition}

\begin{lemma}\label{lem-well-typed-translation}
  For every nominal term $\sf t$ of sort $\sf \tau$, and freshness
  environment $\nabla$, $\trans{t}_\nabla$ is a $\lambda$-term with type
  $\trans{\tau}$.
\end{lemma}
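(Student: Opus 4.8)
The plan is to proceed by structural induction on the nominal term $\sf t$, following exactly the four clauses of the term grammar $\sf t ::= f(t_1,\dots,t_n)\mid a\mid a.t\mid\pi\mdot X$ along which the translation in Definition~\ref{def-translation-terms} is defined. In each case I would read off the type of $\transn{t}$ from the typing rules of the simply typed $\lambda$-calculus and check that it coincides with the type $\trans{\sf\tau}$ computed by Definition~\ref{def-translation-types}. Atoms and suspensions are the base cases (leaves of the grammar), while applications and abstractions are the inductive steps.

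For the atom case $\sf t = a$ with $\sf a:\nu$, the translation is the bound variable $a$, which by construction has type $\trans{\sf\nu}=\nu$, so there is nothing to prove. For the application case $\sf f(t_1,\dots,t_n)$, where $\sf f:\tau_1\times\cdots\times\tau_n\to\delta$ and $\sf t_i:\tau_i$, the induction hypothesis gives that each $\transn{t_i}$ has type $\trans{\sf\tau_i}$; since the constant $f$ is assigned type $\trans{\sf\tau_1}\to\cdots\to\trans{\sf\tau_n}\to\delta$, the application $f(\transn{t_1},\dots,\transn{t_n})$ is well typed of type $\delta=\trans{\sf\delta}$. For the abstraction case $\sf a.t'$ of sort $\sf\langle\nu\rangle\tau'$, the induction hypothesis gives $\transn{t'}$ of type $\trans{\sf\tau'}$, and since the bound variable $a$ has type $\nu$, the term $\lambda a.\transn{t'}$ has type $\nu\to\trans{\sf\tau'}$, which is exactly $\trans{\sf\langle\nu\rangle\tau'}$ by the last clause of Definition~\ref{def-translation-types}.

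The only case needing a moment's care is the suspension $\sf t = \pi\mdot X$, where $\sf X$ (and hence the whole suspension) has a base sort $\sf\tau$ (an atom sort $\nu$ or a data sort $\delta$), and the translation is $X(\transn{\pi\mdot b_1},\dots,\transn{\pi\mdot b_m})$ with $\sf\langle b_1,\dots,b_m\rangle$ the sublist of capturable atoms and $\sf b_j:\nu_j$. Here I would invoke the fact that a swapping, and therefore any permutation $\pi$, maps an atom to an atom of the \emph{same} sort, so that each $\sf\pi\mdot b_j$ is again an atom of sort $\sf\nu_j$; its translation is then the bound variable $\pi(b_j)$ of type $\trans{\sf\nu_j}$, by the already-treated atom case. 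Since Definition~\ref{def-translation-terms} assigns to the free variable $X$ the type $\trans{\sf\nu_1}\to\dots\to\trans{\sf\nu_m}\to\trans{\sf\tau}$, applying $X$ to these $m$ arguments yields a well-typed term of type $\trans{\sf\tau}$, as required.

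I do not anticipate any genuine obstacle: the statement is a routine well-typedness check, and the induction closes immediately once the clauses of the two translation functions are matched against the typing rules. The single point worth stating explicitly is the sort-preservation of permutations used in the suspension case, which is precisely the content of the footnote observing that $\sf b_j$ and $\sf\pi\mdot b_j$ share a sort; everything else is a direct unfolding of the definitions.
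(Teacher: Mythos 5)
Your proposal is correct and follows essentially the same route as the paper: a structural induction in which the atom, application and abstraction cases are routine unfoldings of the two translation definitions, and the suspension case is the only one requiring attention, resolved exactly as the paper does by observing that swappings (hence permutations) preserve sorts, so each argument $\transn{\sf\pi\mdot b_j}$ has the type $\trans{\sf\nu_j}$ expected by the free variable $X$.
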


\begin{proof}
  The proof is simple by structural induction on $\sf t$.  The only
  point that needs a more detailed explanation is the case of
  suspensions.  Since $\sf a_i:\nu_i$, $\sf X:\tau$, and
  $X:\trans{\sf\nu_{i_1}}\to\cdots\to
  \trans{\sf\nu_{i_m}}\to\trans{\tau}$, we have $\transn{X} = X
  \left(\transn{a_{i_1}},\dots,\transn{a_{i_m}}\right) :
  \trans{\tau}$.  When $\sf X$ is affected by a swapping $\sf
  (a_{i_j}\,a_{i_k})$ we also have $\transn{(a_{i_j}\,a_{i_k})\mdot X}
  = X \left(\dots, \transn{a_{i_{j-1}}},
    \transn{a_{i_k}},\transn{a_{i_{j+1}}}, \dots,
    \transn{a_{i_{k-1}}}, \transn{a_{i_j}},\transn{a_{i_{k+1}}},
    \dots\right) : \trans{\tau}$ because the suspension is not a valid
  nominal term unless $\sf a_{i_j}$ and $\sf a_{i_k}$ belong to the
  same sort.  The same applies to arbitrary permutations.
\end{proof}

\begin{example}
  Given the nominal term $\sf t=a.b.c.(c\,a)(a\,b)\mdot X$, after
  applying the substitution $\sf \sigma=[X\mapsto f(a, b,c,Y)]$ we get
  $\sf \sigma(t)=a.b.c.f(b,c,a,Y)$.  Let $\sf \langle a,b,c\rangle$ be
  the (ordered) list of atoms of our problem.  The translation of the
  term $\sf t$ w.r.t. $\nabla_1=\emptyset$ results into $\trans{\sf
    t}_{\nabla_1}=\lambda a.\lambda b.\lambda c. X(b, c, a)$ and, the
  translation of the instantiation $\sf\sigma(t)$
  w.r.t. $\sf\nabla_2=\{a\#Y\}$ results into
  $\trans{\sf\sigma(t)}_{\nabla_2}=\lambda a.\lambda b.\lambda c.
  f(b,c,a,Y(c,a))$. There is a $\lambda$-substitution $[X\mapsto
  \lambda a.\lambda b.\lambda c.f(a,b,c,Y(b,c))]$ (described in
  Definition~\ref{def-translation-solutions}) that when applied to
  $\trans{\sf t}_{\nabla_1}$ results into
  $\trans{\sf\sigma(t)}_{\nabla_2}$. Graphically this can be
  represented as the commutation of the following diagram (proved in
  Lemma~\ref{lem-technical2}).

\begin{center}
\begin{diagram}
\sf a.b.c.(c\,a)(a\,b)\mdot X    & 
\rTo^{\sf [X\mapsto f(a,b,c,Y)]}      & 
\sf a.b.c.f(b,c,a,(c\,a)(a\,b)\mdot Y) \\
\dTo_{\trans{\ }_\emptyset} & 
& 
\dTo_{\trans{\ }_{\{a\#Y\}}}     \\
\lambda a.\lambda b.\lambda c. X(b,c,a)  & 
\rTo^{[X\mapsto \lambda a.\lambda b.\lambda c.f(a, b,c,Y(b,c))]}      &
\lambda a.\lambda b.\lambda c.f(b,c,a,Y(c,a))
\end{diagram}
\end{center}
\end{example}

\begin{definition}\label{def-trans-problems}
  Let $\sf\langle a_1,\dots,a_n\rangle$ be an ordered list of
  atoms. The translation function is defined on equational nominal
  problems inductively as follows
$$
\trans{\sf\{t\unifn u\} \cup P} =  \{\lambda a_1.\dots.\lambda a_n.\trans{\sf t}_\emptyset \unif  
                                  \lambda a_1.\dots.\lambda a_n.\trans{\sf u}_\emptyset\} \cup \trans{\sf P}\\
$$
\end{definition}

\begin{lemma}\label{lem-quadratic-pattern}
  Given an equational nominal unification problem $\sf P$, its
  translation $\trans{\sf P}$ is a higher-order pattern unification
  problem.

\noindent
Moreover, the size and the time needed to compute $\trans{\sf P}$ is bounded by the square of the
size of $\sf P$.
\end{lemma}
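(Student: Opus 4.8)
The plan is to prove the two assertions separately: that each equation of $\trans{\sf P}$ lies in the higher-order pattern fragment, and that the output has quadratic size and is computable in quadratic time. For the pattern property I would first establish, by structural induction on the nominal term $\sf t$, the invariant that $\transn{t}$ is in $\eta$-long $\beta$-normal form and that every free-variable occurrence in it has the shape $X(\transn{\pi\mdot b_1},\dots,\transn{\pi\mdot b_m})$ with each argument $\transn{\pi\mdot b_j}$ a single variable. The application, atom and abstraction cases are immediate from Definition~\ref{def-translation-terms}, and the only case requiring care is the suspension $\sf\pi\mdot X$. There the arguments are the $\transn{\pi\mdot b_j}$; since each $\sf b_j$ is an atom and a permutation maps atoms bijectively to atoms, each $\sf\pi\mdot b_j$ is again an atom and translates to the single variable $\pi\mdot b_j$. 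As $\langle b_1,\dots,b_m\rangle$ is a sublist of the distinct atoms $\langle a_1,\dots,a_n\rangle$ and $\pi$ is injective, these variables are pairwise distinct. No $\beta$-redex is created, because nominal variables are base-sorted, so $X$ is applied to base-typed atoms and $X(\dots)$ already has base type; the $\eta$-long normal-form side condition I would discharge by appeal to the typing recorded in Lemma~\ref{lem-well-typed-translation}.

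Next I would lift this invariant from terms to problems. By Definition~\ref{def-trans-problems} each equation of $\trans{\sf P}$ has the form $\lambda a_1.\dots.\lambda a_n.\trans{\sf t}_\emptyset \unif \lambda a_1.\dots.\lambda a_n.\trans{\sf u}_\emptyset$, whose outer prefix binds every atom $a_1,\dots,a_n$. Since every argument $\pi\mdot b_j$ of a free-variable occurrence is one of these atoms, all such arguments are now occurrences of bound variables; combined with the pairwise distinctness from the invariant, each free-variable occurrence is applied to a list of pairwise distinct bound variables, which is exactly the pattern condition. Hence every equation of $\trans{\sf P}$ is between patterns and $\trans{\sf P}$ is a higher-order pattern unification problem.

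For the complexity bound, let $N$ be the size of $\sf P$ and $n$ the number of distinct atoms occurring in it, so that $n \le N$. Away from suspensions the translation of Definition~\ref{def-translation-terms} is structure-preserving and contributes size linear in $N$. Each of the at most $N$ suspensions $\sf\pi\mdot X$ expands into $X$ applied to at most $n$ arguments, contributing $O(n)=O(N)$ symbols, for a total of $O(N^2)$; and the prefix $\lambda a_1.\dots.\lambda a_n$ adds $n$ binders to each of the at most $N$ equations, again $O(N^2)$. Summing gives $|\trans{\sf P}| = O(N^2)$. For the running time the only non-obvious cost is applying permutations to atoms: representing each $\pi$ as an array computed in $O(|\pi|+n)$ time and then applying it to each atom in constant time bounds the total work by summing $|\pi|+n$ over the at most $N$ suspensions, which is at most $N + N\cdot n = O(N^2)$ since the lengths $|\pi|$ sum to at most $N$. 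This yields the claimed quadratic time bound.

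The main obstacle I anticipate is not any single calculation but keeping two distinctions straight in the pattern argument: that the $\pi\mdot b_j$ are pairwise distinct, which is the bijection-on-a-distinct-sublist fact above, and that they are genuinely bound in $\trans{\sf P}$, which is precisely why the prefix $\lambda a_1.\dots.\lambda a_n$ is indispensable and exactly the point that failed in Example~\ref{ex-four} before that prefix was introduced. Everything else is bookkeeping: the absence of $\beta$-redexes and the $\eta$-long form rest on the base-sorted typing of nominal variables, so I would cite Lemma~\ref{lem-well-typed-translation} rather than re-derive typing.
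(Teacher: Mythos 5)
Your proposal is correct and follows essentially the same route as the paper's proof: well-typedness via Lemma~\ref{lem-well-typed-translation}, the pattern condition from the pairwise distinctness of the $\sf b_i$ together with the bijectivity of $\pi$ and the boundness guaranteed by the prefix $\lambda a_1.\dots.\lambda a_n$, and the quadratic bound attributed to the expansion of suspensions. You simply spell out details (the structural induction, the array representation of permutations for the time bound) that the paper leaves implicit.
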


\begin{proof}
  By Lemma~\ref{lem-well-typed-translation}, $\lambda
  a_1.\dots.\lambda a_n.\trans{\sf t}_\emptyset \unif \lambda
  a_1.\dots.\lambda a_n.\trans{\sf u}_\emptyset$ is an equation
  between $\lambda$-terms of the same type.  Now notice that
  $\transn{\pi\mdot X} = X\left(\transn{\pi\mdot
      b_1},\dots,\transn{\pi\mdot b_m}\right)$ translate the variable
  $\sf X$ into an application of the free variable $X$ to a list of
  pairwise distinct bound variables, because the $\sf b_i$ are all
  different, $\sf\pi$ is a permutation, and we ensure that all atoms
  are translated into bound variables by adding $\lambda$-bindings in
  front of both terms.  Therefore, both sides of the equation are
  higher-order patterns.

  Concerning the size of the translation, we obtain a quadratic bound
  due to the translation of these suspensions.
\end{proof}

Finally, we have to translate solutions of nominal unification
problems into $\lambda$-substitutions.

\begin{definition}
\label{def-translation-solutions}
  Let $\sf \langle a_1,\dots,a_n\rangle$ be a fixed ordered list of
  atoms.  Given a nominal substitution $\sigma$, and a freshness
  environment $\nabla$, we define the following translation
  function
$$
\trans{\sigma}_\nabla = \bigcup_{\sf X\in \dom(\sigma)} \Big[X\mapsto\lambda a_1.\cdots\lambda a_n.\transn{\sigma(X)}\Big] 
$$
\end{definition}

The following remark shows why in some places we require that
solutions $\pair{\nabla}{\sigma}$ of a nominal problem $\sf P$ satisfy
$\sf\dom(\sigma)=\Vars(P)$.

\begin{remark}\label{rem-domain}
  Let $\sf\langle a,b\rangle$ be the fixed list of atoms.  

  Consider the nominal unification problem $\sf
  P_1=\{a.X\unifn b.Y\}$, and its translations as a higher-order pattern unification problem
\[
\trans{\sf P_1}=\trans{\sf \{a.X\unifn b.Y\}} = \{\lambda a.\lambda b.\lambda a.X(a,b) \unif  \lambda a.\lambda b.\lambda b.Y(a,b)\}
\]
The $\lambda$-substitution 
\[
\sigma_1=\trans{\sf[X\mapsto (a\,b)\mdot Y]}_{\sf\{a\#Y\}} =
[X\mapsto \lambda a.\lambda b. Y(a)]
\]
does not solve $\trans{\sf P_1}$. Whereas the $\lambda$-substitution
\[
\sigma_2=\trans{\sf[X\mapsto (a\,b)\mdot Y, Y\mapsto Y]}_{\sf\{a\#Y\}} =
[X\mapsto \lambda a.\lambda b. Y(a) ,\ Y\mapsto \lambda a.\lambda b. Y(b) ]
\]
solves $\trans{\sf P_1}$. Notice that in the first case the domain of
the nominal unifier (as defined in Section~\ref{sec-preliminaries}) is
$\sf \{X\}$, whereas in the other case it is $\sf\{X,Y\}=\Vars(P_1)$.

We will see (Theorem~\ref{the-trans-solution}) that, if
$\sf\Vars(P)\subseteq \dom(\sigma)$ and $\pair\nabla\sigma$ solves
$\sf P$, then $\trans{\sigma}_\nabla$ solves $\sf\trans{P}$. With this
example we see that the first condition in the implication is
necessary.

Now, consider the nominal unification problem $\sf
P_2=\{a.b.(a\,b)X\unifn b.b.(a\,b)X\}$, and its translation as
\[
\trans{\sf P_2}=\{\lambda a.\lambda b.\lambda a.\lambda b. X(b,a) = 
\lambda a.\lambda b.\lambda b.\lambda b. X(b,a)\}
\]
In this case, the pattern substitution $\sigma_1$ is a most general
pattern unifier of $\sf\trans{P_2}$, and $\sigma_2$ is a pattern
unifier, but not a most general one.

Therefore, we have to require $\sf\Vars(P)\supseteq \dom(\sigma)$, if
we want to ensure that the translation not only preserves
unifiability, but also most generality.

Notice that w.l.o.g. we can require most general nominal solutions to
satisfy $\sf\Vars(P)= \dom(\sigma)$, because most general solutions do
not instantiate variables not belonging to $\sf\Vars(P)$, and we can
always add pairs $\sf X\mapsto X$ for all variables occurring in $\sf
P$ and not in $\dom(\sigma)$.

Notice also that in $\sigma_2$ there are two free variables with the
same name $Y$, but distinct types. Be aware that in $Y\mapsto \lambda
a.\lambda b. Y(b)$ the \emph{replaced} $Y$ has two arguments, whereas
the \emph{introduced} $Y$ has only one argument (they have distinct
types). In $\lambda$-calculus this is not a problem.  The reason of
this duplicity is that the translation function is parametric on a
freshness environment $\nabla$.  This is relevant in the case of a
nominal variable. For instance, $\trans{\sf Y}_\emptyset=Y(a,b)$ where
we use the replaced $Y$ with two parameters, and $\trans{\sf
  Y}_{\{a\#Y\}}=Y(b)$ where we use the introduced $Y$ with one
parameter.  If we would like to avoid this duplicity we have to forbid
the use of a variable of the problem in the right-hand side of a
nominal solution. Then, in our example $\sf P_1$, the most general
nominal solution could be written as
$\sf\langle\{a\#Y'\},[X\mapsto(a\,b)Y',Y\mapsto Y']\rangle$.
\end{remark}

To prove that the translation of the solution of a problem is a
solution of the translation of the problem, we start by proving the
following two technical lemmas.

\begin{lemma}\label{lem-technical1}
  For any freshness environment $\sf \nabla$, nominal terms $\sf t$,
  $\sf u$, and atom $\sf a$, we have
\begin{enumerate}
\item $\sf \nabla \vdash a\fresh t$ if, and only if,
$a\not\in \FV(\transn{t})$, and
\item $\sf \nabla \vdash t \approx u$ if, and only if,
$\transn{t} =_\alpha \transn{u}$.
\end{enumerate}
\end{lemma}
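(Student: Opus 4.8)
The plan is to prove both parts by induction on the structure of $\sf t$, establishing part~1 first and then invoking it inside part~2. Throughout I rely on the standing assumption (Section~\ref{sec-translation}) that every atom occurring in $\sf t$, $\sf u$ and the atom $\sf a$ itself is drawn from the fixed list $\sf\langle a_1,\dots,a_n\rangle$, and that the swappings of every permutation $\pi$ range over these atoms; consequently $\sf\pi^{-1}\mdot a$ is again a list atom. I also isolate one auxiliary fact, proved by a short structural induction: the translation commutes with the nominal action, $\transn{\pi\mdot t}=\pi\mdot\transn{t}$, where on the right $\pi$ acts on the $\lambda$-term by renaming the corresponding atom-variables. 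For suspensions this holds because $\transn{\pi\mdot X}$ and $\transn{X}$ use the \emph{same} sublist $\sf\langle b_1,\dots,b_m\rangle$ (it depends only on $\nabla$ and $\sf X$), so applying $\pi$ afterwards merely permutes the arguments; for the remaining constructors it is immediate.

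For part~1 I run through the four constructors. The cases $\sf t=a'$, $\sf t=f(t_1,\dots,t_n)$ and $\sf t=a'.t'$ match the rules ($\#$-atom), ($\#$-application) and ($\#$-abst-1/2) against the free-variable sets of $a'$, $f(\transn{t_1},\dots,\transn{t_n})$ and $\lambda a'.\transn{t'}$, using the induction hypothesis and the fact that the binder $\lambda a'$ removes exactly $a'$ from the free variables. The only informative case is the suspension $\sf t=\pi\mdot X$: the term $\transn{\pi\mdot X}=X(\transn{\pi\mdot b_1},\dots,\transn{\pi\mdot b_m})$ has as free variables $X$ together with the atom-variables $\transn{\pi\mdot b_1},\dots,\transn{\pi\mdot b_m}$, so $a\notin\FV(\transn{\pi\mdot X})$ holds exactly when $\sf\pi^{-1}\mdot a$ is none of the $\sf b_i$, i.e.\ exactly when $\sf(\pi^{-1}\mdot a\fresh X)\in\nabla$, which is the premise of ($\#$-susp). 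This is where the standing assumption is used: it guarantees that $\sf\pi^{-1}\mdot a$ is a list atom, so that the two conditions coincide.

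For part~2 I again induct on $\sf t$ and case-analyse $\sf u$. The decisive observation is that the three non-suspension constructors translate to terms with distinguishable heads — a constant $f$, a bound variable $a$, and (for suspensions) a free variable $X$ — so $\transn{t}=_\alpha\transn{u}$ can hold only when $\sf t$ and $\sf u$ share their outer constructor; in every mismatched case neither side of the claimed equivalence holds, and these are precisely the situations with no applicable $\approx$-rule. When both terms are atoms, applications with the same symbol, or suspensions of the same variable, the equivalence reduces through ($\approx$-atom), ($\approx$-application) and ($\approx$-susp) to the induction hypothesis; for suspensions one uses that free atom-variables cannot be $\alpha$-renamed, so $X(\dots)=_\alpha X(\dots)$ forces the argument lists to be literally equal, which matches the premise of ($\approx$-susp) under the standing assumption.

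The main obstacle is the abstraction case with distinct binders, $\sf t=a.t_0$ and $\sf u=a'.u_0$ with $\sf a\neq a'$, which must reproduce rule ($\approx$-abst-2). Here I invoke the standard $\lambda$-calculus characterisation: for $a\neq a'$, $\lambda a.M=_\alpha\lambda a'.N$ if, and only if, $a\notin\FV(N)$ and $M=_\alpha(a\,a')\mdot N$. Applied to $\transn{a.t_0}=\lambda a.\transn{t_0}$ and $\transn{a'.u_0}=\lambda a'.\transn{u_0}$, the side condition $a\notin\FV(\transn{u_0})$ is, by part~1, exactly $\sf\nabla\vdash a\fresh u_0$, while the commutation fact rewrites $(a\,a')\mdot\transn{u_0}$ as $\transn{(a\,a')\mdot u_0}$, so that $\transn{t_0}=_\alpha\transn{(a\,a')\mdot u_0}$ becomes, by the induction hypothesis, $\sf\nabla\vdash t_0\approx(a\,a')\mdot u_0$. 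These are precisely the two premises of ($\approx$-abst-2), and conversely its premises yield the two $\lambda$-calculus conditions, closing the iff. The induction is well founded because both $\sf t_0$ and $\sf(a\,a')\mdot u_0$ are strictly smaller than $\sf a.t_0$ and $\sf a'.u_0$, since swapping preserves size.
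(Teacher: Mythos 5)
Your proof is correct and follows essentially the same route as the paper's: a rule-by-rule structural induction matching each freshness and equivalence rule against the free variables and $\alpha$-classes of the translated terms, with the suspension cases resting on exactly the same chains of equivalences. The only difference is one of detail and bookkeeping: the paper phrases part~2 as induction on the derivation of $\approx$ and comments only on the suspension case, whereas you make explicit the ($\approx$-abst-2) case --- the commutation fact $\transn{\pi\mdot t}=\pi\mdot\transn{t}$, the appeal to part~1 for the side condition $a\notin\FV(\transn{u_0})$, and the size-based well-foundedness needed because $\sf (a\,a')\mdot u_0$ is not a subterm --- all of which the paper leaves implicit.
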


\begin{proof}
  The first statement can be proved by routine induction on $\sf t$
  and its translation. Notice that atoms are translated
  \emph{nominally} into variables and that the binding structure is
  also identically translated, hence, the freshness of an atom $\sf a$
  corresponds to the free occurrence of its variable counterpart $a$.
  We here only comment the case $\sf t=\pi\mdot X$, in this case,
  $\transn{\pi\mdot X}= X\left(\transn{\pi\mdot b_1},\dots,
  \transn{\pi\mdot b_m}\right)$, where $\sf b_i\# X\notin\nabla$, for any
  $\sf i\in\{1..m\}$.  Therefore, we can establish the following
  sequence of equivalences $\sf \nabla\vdash a\#\pi\mdot X$ iff $\sf
  \pi^{-1}\mdot a\# X\in\nabla$ iff $\sf \pi^{-1}\mdot
  a\not\in\{b_1,\dots,b_m\}$ iff $\sf a\not\in\{\pi\mdot
  b_1,\dots,\pi\mdot b_m\}$ iff $a\not\in\FV(X(\transn{\pi\mdot
    b_1},\dots, \transn{\pi\mdot b_m}))$ iff $a\not\in\FV(\trans{\sf
    \pi\mdot X})$.

  The proof of the second statement can be done by induction on the
  equivalence $\sf t\approx u$. We only comment the equivalence
  between suspensions: $\sf \pi\mdot X\approx \pi'\mdot X$. Notice
  that, $\sf \pi\mdot X\approx \pi'\mdot X$ if, and only if, for all
  atoms $\sf a$ such that $\sf \pi\mdot a \neq \pi'\mdot a$, we have
  $\sf a\# X\in \nabla$.  This condition is equivalent to: the bound
  variables $\transn{\pi\mdot a}$ and $\transn{\pi'\mdot a}$ are
  passed as a parameter to $X$ in $\transn{\pi\mdot X}$ and
  $\transn{\pi'\mdot X}$ only when $\sf \pi\mdot a = \pi'\mdot
  a$. Finally, this condition is equivalent to $\transn{\pi\mdot X}
  =\transn{\pi'\mdot X}$.
\end{proof}

The first statement of the previous lemma will not be necessary for
our purposes because we have removed freshness equations.

\begin{lemma}\label{lem-technical2}
  For any freshness environment $\sf \nabla$, nominal substitution
  $\sf \sigma$, and nominal term $\sf t$ satisfying
  $\sf\Vars(t)\subseteq\dom(\sigma)$, we have $\trans{\sigma}_\nabla(\trans{\sf t}_\emptyset) =
  \transn{\sigma(t)}$.
\end{lemma}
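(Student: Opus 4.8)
The plan is to proceed by structural induction on the nominal term $\sf t$, using throughout the hypothesis $\sf\Vars(t)\subseteq\dom(\sigma)$, which guarantees that every suspension occurring in $\sf t$ mentions a variable in the domain of $\sf\sigma$. Before starting the main induction I would isolate a sub-lemma stating that translation commutes with the application of a permutation: for every nominal term $\sf u$ and every permutation $\pi$, renaming the bound variables of $\transn{u}$ according to $a_i\mapsto\pi\mdot a_i$ yields exactly $\transn{\pi\mdot u}$. This sub-lemma is itself proved by a short structural induction on $\sf u$, the only delicate case being a suspension $\sf\rho\mdot Y$, where both sides collapse to $Y$ applied to the translated arguments $\transn{\pi\mdot\rho\mdot c_j}$ once one invokes the composition law $\sf\pi\mdot(\rho\mdot Y)=(\pi\rho)\mdot Y$; here it is essential that the \emph{same} freshness environment $\nabla$ selects the same sublist $\sf\langle c_1,\dots,c_k\rangle$ on both sides, so that the two argument lists genuinely coincide.

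For the main induction the base and congruence cases are routine, with equality understood modulo $\alpha\beta\eta$. If $\sf t=a$ then both sides equal the bound variable $a$, since $\sf\sigma(a)=a$ and a $\lambda$-substitution leaves bound variables untouched. If $\sf t=f(t_1,\dots,t_n)$ the result follows by applying $\trans{\sigma}_\nabla$ componentwise and invoking the induction hypothesis on each $\sf t_i$. If $\sf t=a.t'$ then $\trans{\sf a.t'}_\emptyset=\lambda a.\trans{\sf t'}_\emptyset$, and the key observation is that $\trans{\sigma}_\nabla$ may be pushed under the binder $\lambda a$ without capture: every replacement term $\lambda a_1.\cdots\lambda a_n.\transn{\sigma(X)}$ has no free atom-variable, as all atoms are bound by the leading $\lambda a_1\cdots\lambda a_n$. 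Hence this case closes by the induction hypothesis together with $\sf\sigma(a.t')=a.\sigma(t')$.

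The heart of the proof is the suspension case $\sf t=\pi\mdot X$, and this is where I expect the main obstacle. Since the outer translation uses the empty environment, the selected sublist is the whole list, so $\trans{\sf\pi\mdot X}_\emptyset=X(\pi\mdot a_1,\dots,\pi\mdot a_n)$, where each $\pi\mdot a_i$ denotes the bound-variable translation of the atom $\sf\pi\mdot a_i$. Applying $\trans{\sigma}_\nabla$ replaces the head $X$ by $\lambda a_1.\cdots\lambda a_n.\transn{\sigma(X)}$, and the resulting redex $\beta$-reduces to the term obtained from $\transn{\sigma(X)}$ by the renaming $a_i\mapsto\pi\mdot a_i$; at this point I must use that $\pi$ permutes the fixed atom list, so the arguments are pairwise distinct and the $\beta$-step is a genuine bijective renaming. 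It then remains to match this against the right-hand side $\transn{\sigma(\pi\mdot X)}$. Recalling that nominal substitution applies the suspended permutation immediately, $\sf\sigma(\pi\mdot X)=\pi\mdot\sigma(X)$, the sub-lemma identifies $\transn{\pi\mdot\sigma(X)}$ with precisely the renaming $a_i\mapsto\pi\mdot a_i$ of $\transn{\sigma(X)}$, which is exactly what the $\beta$-reduction produced.

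The point requiring the most care is the bookkeeping of environments: the substitution $\trans{\sigma}_\nabla$ carries the environment $\nabla$, and the same $\nabla$ governs both $\transn{\sigma(X)}$ and $\transn{\pi\mdot\sigma(X)}$, so that the sublists of capturable atoms agree and the permutation sub-lemma applies verbatim. It is this coincidence of environments, rather than a mere agreement up to arguments dropped by freshness, that makes the two $\beta$-normal forms literally equal and thereby closes the suspension case.
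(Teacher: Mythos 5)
Your proposal is correct and follows essentially the same route as the paper: structural induction on $\sf t$, with the suspension case resolved by $\beta$-reducing the redex $(\lambda a_1\dots\lambda a_n.\transn{\sigma(X)})(\transn{\pi\mdot a_1},\dots,\transn{\pi\mdot a_n})$ and identifying the resulting renaming with $\transn{\pi\mdot\sigma(X)}=\transn{\sigma(\pi\mdot X)}$. The only difference is that you isolate as an explicit sub-lemma the step $[a_1\mapsto\transn{\pi\mdot a_1},\dots,a_n\mapsto\transn{\pi\mdot a_n}]\transn{u}=\transn{\pi\mdot u}$, which the paper asserts in one line without proof; that is a reasonable (and slightly more careful) presentation of the same argument.
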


\begin{proof}
  Again this lemma can be proved by structural induction on $\sf
  t$. We only sketch the suspension case.  Let $\sf t=\pi\mdot X$. We
  have the equalities:
$$
\begin{array}{lll}
\trans{\sigma}_\nabla(\trans{\sf \pi\mdot X}_\emptyset)&=&
 [\dots,X\mapsto \lambda a_1\dots\lambda a_n\,.\,\transn{\sigma(X)},\dots]\
( X (\transn{\pi\mdot a_1}, \dots, \transn{\pi\mdot a_n}))\\
&=& (\lambda a_1\dots\lambda a_n.\transn{\sigma(X)})\
\left(\transn{\pi\mdot a_1}, \dots, \transn{\pi\mdot a_n}\right)\\
&=&[a_1\mapsto \transn{\pi\mdot a_1},\dots,a_n\mapsto \transn{\pi\mdot a_n}]\ \left(\transn{\sigma(X)}\right)\\
&=&\transn{\pi\mdot\sigma(X)}\\
&=&\transn{\sigma(\pi\mdot X)}
\end{array}
$$

Notice that in the first equality we use $\sf
X\in\Vars(t)\subseteq\dom(\sigma)$, hence $X\in \dom(\trans{\sigma}_\nabla)$.
\end{proof}

\begin{example}
  Let be $\sf t=f((a\,b)\mdot X,(a\,b)\mdot Y)$, $\sf\nabla=\{b\#Y\}$
  and \hbox{$\sf\sigma=[ X\mapsto b.a,$}\ \hbox{$\sf Y\mapsto Y]$}.
  We will have
\[
\arraycolsep 0mm
\begin{array}{ll}
\trans{\sigma}_\nabla&=
\trans{\sf[X\mapsto b.a,Y\mapsto Y]}_{\sf\{b\#Y\}}
=[X\mapsto \lambda a.\lambda b.\trans{\sf b.a}_{\sf\{b\#Y\}},\ Y\mapsto\lambda a.\lambda b.\trans{\sf Y}_{\sf\{b\#Y\}}]\\
&=[X\mapsto \lambda a.\lambda b.\lambda b.a,\ Y\mapsto\lambda a.\lambda b.Y(a)]\\[2mm]
\trans{\sf t}_{\emptyset}&=
\trans{\sf f\big((a\,b)\mdot X,(a\,b)\mdot Y\big)}_{\emptyset} = f\big(X (b,a),Y(b,a)\big)\\[2mm]
\trans{\sigma(\sf t)}_\nabla&=\trans{\sf[X\mapsto b.a,Y\mapsto Y]\,f\big((a\,b)\mdot X,(a\,b)\mdot Y\big)}_{\sf\{b\#Y\}} = 
\trans{\sf f(a.b,(a\,b)\mdot Y)}_{\sf\{b\#Y\}}\\
&=f(\lambda a.b,Y(b))
\end{array}
\]
Now, we have 
$$
\trans{\sigma}_\nabla \left(\trans{\sf t}_{\emptyset}\right) \begin{array}[t]{l}
= f\big(\underline{(\lambda a.\lambda b.\lambda b.a)(b,a)},(\lambda a.\lambda b.Y(a))(b,a)\big) =
f(\lambda c.b,Y(b)) = f(\lambda a.b,Y(b))\\[1mm]=\trans{\sigma(\sf t)}_\nabla\end{array}
$$

Notice that the substitution resulting form the $\beta$-reduction 
of the underlined redex
needs to avoid a capture of $b$. This is done replacing the bound
variable $b$ by $c$. In the following section we will see that, in
pattern unification, we can do this without using new bound variable
names. In this case, we could have used $a$ instead of $c$.
\end{example}

From these two lemmas we can prove the following results.

\begin{theorem}\label{the-trans-solution}
  For any freshness environment $\sf\nabla$, equational nominal
  unification problem $\sf P$, and nominal substitution $\sf\sigma$
  with $\sf \Vars(P)\subseteq\dom(\sigma)$ , we have that
  $\sf\pair\nabla\sigma$ solves the equational nominal
  unification problem $\sf P$, if, and only if,
  $\trans{\sigma}_\nabla$ solves the pattern unification problem
  $\trans{\sf P}$.
\end{theorem}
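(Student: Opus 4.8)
The plan is to reduce the statement to the two technical lemmas just established, together with two elementary facts about the simply typed $\lambda$-calculus. Since $\trans{\sf P}$ is built equation by equation, and a pair solves a problem precisely when it solves each equation, I would first reduce to a single equality equation $\sf t\unifn u\in P$ (the problem $\sf P$ being equational, there are no freshness equations to treat), proving that $\sf\nabla\vdash\sigma(t)\approx\sigma(u)$ holds if, and only if, $\trans{\sigma}_\nabla$ solves $\lambda a_1.\dots.\lambda a_n.\trans{\sf t}_\emptyset \unif \lambda a_1.\dots.\lambda a_n.\trans{\sf u}_\emptyset$. The full statement for $\sf P$ then follows by a trivial induction on the number of equations, the empty problem being solved by every pair on both sides. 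Throughout I would use that $\sf\Vars(t),\Vars(u)\subseteq\Vars(P)\subseteq\dom(\sigma)$, which is exactly the hypothesis needed to apply Lemma~\ref{lem-technical2} to $\sf t$ and to $\sf u$.

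First I would compute how $\trans{\sigma}_\nabla$ acts on $\lambda a_1.\dots.\lambda a_n.\trans{\sf t}_\emptyset$. The substitution touches only the free variables $X$ with $\sf X\in\dom(\sigma)$, and each replacement term $\lambda a_1.\dots.\lambda a_n.\trans{\sf\sigma(X)}_\nabla$ binds all of $a_1,\dots,a_n$, so it contains no free occurrence of any $a_i$; hence the substitution commutes with the leading block of binders without capture, yielding $\trans{\sigma}_\nabla(\lambda a_1.\dots.\lambda a_n.\trans{\sf t}_\emptyset) = \lambda a_1.\dots.\lambda a_n.\,\trans{\sigma}_\nabla(\trans{\sf t}_\emptyset)$. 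Applying Lemma~\ref{lem-technical2} to the body then gives $\lambda a_1.\dots.\lambda a_n.\trans{\sf\sigma(t)}_\nabla$, and symmetrically for $\sf u$. At this point $\trans{\sigma}_\nabla$ solves the translated equation exactly when $\lambda a_1.\dots.\lambda a_n.\trans{\sf\sigma(t)}_\nabla = \lambda a_1.\dots.\lambda a_n.\trans{\sf\sigma(u)}_\nabla$ holds in the $\lambda$-calculus.

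To finish I would strip the common prefix: both bodies are translations of nominal terms, hence in $\eta$-long $\beta$-normal form, so two such terms under identical leading binders are $\alpha\beta\eta$-equal iff their bodies are $\alpha$-equal, i.e. iff $\trans{\sf\sigma(t)}_\nabla =_\alpha \trans{\sf\sigma(u)}_\nabla$. Lemma~\ref{lem-technical1}(2) then identifies this last condition with $\sf\nabla\vdash\sigma(t)\approx\sigma(u)$, closing the equivalence.

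I expect the one genuinely delicate point to be the commutation of $\trans{\sigma}_\nabla$ with the binder block $\lambda a_1.\dots.\lambda a_n$: I must verify that the atoms bound by the prefix of each replacement term coincide with the atoms bound by the outer block, so that no free atom escapes to be captured and no $\alpha$-renaming is forced. This is exactly where the two design choices pay off --- prefixing both sides of every equation with $\lambda a_1.\dots.\lambda a_n$ in Definition~\ref{def-trans-problems}, and making every replacement term begin with the same binder block in Definition~\ref{def-translation-solutions}. Once this commutation is justified, the remaining steps are routine invocations of the two lemmas and of standard facts about $\beta$-normal forms and binder stripping.
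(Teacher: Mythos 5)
Your proposal is correct and follows essentially the same route as the paper's own proof: reduce to a single equation, apply Lemma~\ref{lem-technical2} to identify $\trans{\sigma}_\nabla(\trans{\sf t}_\emptyset)$ with $\trans{\sf\sigma(t)}_\nabla$, use Lemma~\ref{lem-technical1}(2) to pass between $=_\alpha$ and $\approx$, and observe that the leading binder block $\lambda a_1.\dots.\lambda a_n$ can be added or stripped because $\trans{\sigma}_\nabla$ does not touch the $a_i$ (the paper handles this via Remark~\ref{rem-lambda}, you via the explicit no-capture commutation argument, which is a welcome extra detail). No gaps.
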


\begin{proof}
  By definition of nominal solution, the pair $\sf
  \pair\nabla\sigma$ solves $\sf P$ iff
$$
\sf \nabla \vdash \sigma(t) \approx \sigma(u) \hspace{5mm}\mbox{for all $\sf t\unifn u\in P$}
$$
By Lemma~\ref{lem-technical1} this is equivalent to:
$$
\transn{\sigma(t)} =_\alpha \transn{\sigma(u)} \hspace{5mm}\mbox{for all $\sf t\unifn u\in P$}
$$
and, by Lemma~\ref{lem-technical2} this is equivalent to:
$$
\trans{\sigma}_\nabla(\trans{\sf t}_\emptyset) =
\trans{\sigma}_\nabla(\trans{\sf u}_\emptyset) \hspace{5mm}\mbox{for all $\sf t\unifn u\in P$}
$$
Since the substitution $\trans{\sigma}_\nabla$ does not instantiate the variables $a_1,\dots,a_n$, this is equivalent to (see Remark~\ref{rem-lambda}):
$$
\trans{\sigma}_\nabla\Big(\lambda a_1.\dots.\lambda
a_n. \trans{\sf t}_\emptyset\Big) =
\trans{\sigma}_\nabla \Big(\lambda a_1.\dots.\lambda
a_n.\trans{\sf u}_\emptyset\Big) \hspace{5mm}\mbox{for all $\sf t\unifn u\in P$}
$$
where $\sf \langle a_1,\dots,a_n\rangle$ is the list of atoms occurring in $\sf P$.

Finally, since $\trans{\sf t\unifn u}= \lambda a_1.\dots.\lambda a_n. \trans{\sf t}_\emptyset 
\unif \lambda a_1.\dots.\lambda
a_n. \trans{\sf u}_\emptyset$, this is equivalent to
$\trans{\sigma}_\nabla$ solves
$\trans{\sf P}$.
\end{proof}

The proof of Theorem~\ref{the-trans-solution} also allows us to prove
that $\sf\pair\nabla\sigma$ solves $\sf t\unifn u$, if, and
only if, $\trans{\sigma}_\nabla$
solves $\trans{\sf t}_\emptyset\unif\trans{\sf
  u}_\emptyset$. Therefore, it seems unnecessary to add the
$\lambda$-bindings $\lambda a_1.\cdots.\lambda a_n$ in front of both
sides of the higher-order equations, as was suggested in
Example~\ref{ex-four}. The following remark illustrates what would
happen if we had defined translation of equations in this way.

\begin{remark}\label{rem-lambda}
  Assume that we had defined $\trans{\sf t\unifn u}=\trans{\sf
    t}_\emptyset \unif \trans{\sf u}_\emptyset$, instead of the
  definition we have for $\trans{\sf t\unifn u}$ with the external
  lambda's.

  The translation of the unsolvable nominal equation $\sf a\unifn b$
  would result into $a\unif b$ which is solvable by $[a\mapsto b]$
  (notice that, in this case, atoms are translated into free
  variables). The example does not contradict
  Theorem~\ref{the-trans-solution} because the substitution $[a\mapsto
  b]$ is not the translation of any nominal substitution, i.e.  there
  does not exists a freshness environment $\nabla$ and a nominal
  substitution $\sigma$ such that $\trans{\sigma}_\nabla =[a\mapsto b]$.  If we introduce
  the external $\lambda$-bindings we get the unsolvable higher-order
  unification problem $\lambda a.\lambda b.a\unif \lambda a.\lambda
  b.b$.

  On the other hand, the translation of the solvable nominal equation
  of Example~\ref{ex-four} would be
$$
\trans{\sf P}=\trans{\sf \{a.b.f(b,X_6)\unifn
  a.a.f(a,X_7)\}}= \{\lambda a.\lambda b.f(b,X_6(a,b))\unif \lambda a.\lambda a.f(a,X_7(a,b))\}
$$
that is not a higher-order pattern unification problem (notice that
Lemma~\ref{lem-quadratic-pattern} does not hold if we do not introduce
the external $\lambda$-bindings).

The translation of its nominal most general solution is
$$
\trans{\sigma}_\nabla = \trans{\sf[X_6
\mapsto (b\ a)\mdot X_7]}_{\sf\{b \fresh X_7\}}=[X_6 \mapsto \lambda a.\lambda
b.X_7(b), X_7 \mapsto \lambda a.\lambda b.X_7(a)] 
$$ 
In this case, $\trans{\sigma}_\nabla$ is a higher-order
unifier of $\trans{\sf P}$, as Theorem~\ref{the-trans-solution}
predicts. However, it is not a most general unifier, and we are
interested in translating most general solutions into most general
solutions.
\end{remark}

\begin{theorem}\label{the-one-direction}
  If the equational nominal unification problem $\sf P$ is solvable,
  then the higher-order pattern unification problem $\trans{\sf P}$ is
  also solvable.
\end{theorem}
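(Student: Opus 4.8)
The plan is to obtain a pattern unifier of $\trans{\sf P}$ as the translation of a nominal unifier of $\sf P$, invoking the biconditional already established in Theorem~\ref{the-trans-solution}. First I would unfold the hypothesis: since $\sf P$ is solvable, there is a solution $\sf\pair\nabla\sigma$, that is, a freshness environment $\sf\nabla$ and a nominal substitution $\sf\sigma$ with $\sf\nabla\vdash\sigma(t)\approx\sigma(u)$ for every equation $\sf t\unifn u\in P$ (there are no freshness equations, as $\sf P$ is equational).

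The only gap between this and a direct application of Theorem~\ref{the-trans-solution} is its domain hypothesis $\sf\Vars(P)\subseteq\dom(\sigma)$, which an arbitrary solution need not satisfy. The plan here is to normalize $\sf\sigma$ into a solution of the required shape, exactly as suggested in Remark~\ref{rem-domain}: set $\sf\sigma' = \sigma\cup[X\mapsto X\mid X\in\Vars(P)\setminus\dom(\sigma)]$. Adjoining pairs $\sf X\mapsto X$ does not alter the effect of the substitution on any term, so $\sf\nabla\vdash\sigma'(t)\approx\sigma'(u)$ continues to hold for every equation; moreover no new freshness constraint is introduced, so $\sf\nabla$ is left unchanged and $\sf\pair\nabla{\sigma'}$ is again a solution of $\sf P$, now satisfying $\sf\Vars(P)\subseteq\dom(\sigma')$.

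With $\sf\pair\nabla{\sigma'}$ in hand I would simply appeal to the forward direction of Theorem~\ref{the-trans-solution}, which yields that $\trans{\sigma'}_\nabla$ solves the pattern unification problem $\trans{\sf P}$; hence $\trans{\sf P}$ is solvable, as required. I do not expect any genuine obstacle: the entire content has been front-loaded into Theorem~\ref{the-trans-solution} and Lemmas~\ref{lem-technical1}--\ref{lem-technical2}, and the single point needing care is the domain bookkeeping, which is precisely what the identity-extension in Remark~\ref{rem-domain} resolves. Note that only this one implication is claimed; the converse can fail, since a pattern unifier of $\trans{\sf P}$ may instantiate the free variables playing the role of atoms and thus admit no nominal counterpart, as discussed around Remark~\ref{rem-lambda}.
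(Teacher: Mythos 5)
Your proof is correct and follows the same route as the paper, which simply states that the theorem is a direct consequence of Theorem~\ref{the-trans-solution}. The only difference is that you explicitly handle the domain condition $\sf\Vars(P)\subseteq\dom(\sigma)$ by padding the substitution with identity pairs as in Remark~\ref{rem-domain} --- a detail the paper leaves implicit but which is indeed needed to apply the theorem.
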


\begin{proof}
  The theorem is a direct consequence of
  Theorem~\ref{the-trans-solution}.
\end{proof}

The opposite implication of Theorem~\ref{the-one-direction} can not be
directly proved from Theorem~\ref{the-trans-solution}, because
$\trans{\sf P}$ should have solutions that are not of the form
$\trans{\sigma}_\nabla$, for any solution $\pair\nabla\sigma$ of $\sf
P$.

\section{Some Properties of Pattern Unification}
\label{sec-properties}

In this section we prove some fundamental properties of Higher-Order
Pattern Unification. In particular, we prove that we can express most
general unifiers of pattern unification problems only using
bound-variable names and types already used in the problem.  This
property is used in next sections in the translation of pattern
unifiers into nominal unifiers.

In the following example we note that in the solution of pattern
unification problems it is important to \emph{save names} of bound
variables. In the following we will distinguish between variables and
variable names.  For instance $\lambda x.\lambda x.x$ has three
occurrences of variables, two distinct variables, with one unique
variable name. Notice that $\alpha$-conversion preserves the number of
variables, but may change the number of names.

\begin{example}
  Consider the nominal problem $\sf a.X \unifn a.f(b.Y)$.  Its
  translation is $\lambda a.\lambda b.\lambda a.X(a,b)\unif \lambda
  a.\lambda b.\lambda a. f(\lambda b.Y(a,b))$.  An $\alpha$-conversion
  results in $\lambda a.\lambda b.\lambda c.X(c,b)\unif \lambda
  a.\lambda b.\lambda c. f(\lambda d.Y(c,d))$ and it shows that the
  parameters of $X$ and $Y$ are in fact different.  A most general
  solution is $[X \mapsto \lambda c.\lambda b. f(\lambda
  d.Y(c,d))]$. Since $\sf Y$ is translated as $Y(a,b)$, we would have
  to translate back $Y(c,d)$ as $\sf (a\,c)(d\,b)\mdot Y$. And, since
  substitutions like $\sf [X\mapsto t]$ are translated as $[X\mapsto
  \lambda a.\lambda b.  \transn{t}]$, we would have to translate back
  $[X\mapsto \lambda c.\lambda b.  \transn{t}]$ as $[\sf X\mapsto
  (a\,c)\mdot t]$.  Therefore, our pattern unifier had to be
  translated back as $\sf [X\mapsto (a\, c)\mdot f(d.(a\,c)(d\,
  b)\mdot Y)]$.  However, the list of atoms is fixed as the list of
  atoms occurring in the problem, hence, we know how to translate $\sf
  a$ and $\sf b$ as $a$ and $b$ and vice versa, but we do not know how
  to translate back $c$ and $d$. Here it is done introducing new
  atoms. However, the use of an infinite list of atom names would
  imply that the list of arguments of a variable (the list of
  capturable atoms) would be infinite.
\end{example}

If we look at Nipkow's transformation rules described in
Subsection~\ref{sec-pattern}, it seems that no new bound-variable
names are introduced. However, this is not true. There are three
places where their introduction is hidden. In the following we
illustrate these cases.
\begin{enumerate}
\item It is assumed that equations have the same most external
  $\lambda$-bindings, i.e. that they are of the form
  $\lambda\vec{x}.s\unif\lambda\vec{x}.t$. If this is not the case, we
  have to $\alpha$-convert one of the sides. However, this is not
  always possible without introducing new bound-variable names.  For
  instance, if we have the equation $\lambda x.\lambda y.\lambda
  y.X(x,y) \unif \lambda y.\lambda y.\lambda x.Y(x,y)$, after
  $\alpha$-converting the two most external $\lambda$-binder, we get
  $\lambda x.\lambda y.\lambda y.X(x,y) \unif \lambda x.\lambda
  y.\lambda x.Y(x,y)$, that needs a new bound-variable name to obtain
  the same $\lambda$-binders in both sides, by means of
  $\alpha$-conversion.  Using a \emph{new name} $z$ we would get
  $\lambda x.\lambda y.\lambda z.X(x,z) \unif \lambda x.\lambda
  y.\lambda z.Y(z,y)$.

\item In the flex-rigid rule the terms $u_i$ may not be of first-order
  type. In this case, we need to $\eta$-expand some subterms.  For
  instance, the rule transforms $\lambda x.X(x) \unif \lambda
  x.f(\lambda x.g(x))$ into the equation $\lambda x.X_1(x) \unif
  \lambda x.\lambda x.g(x)$ and the substitution $\big[X\mapsto \lambda
  x.f(X_1(x))\big]$. The left-hand side of the equation needs to be
  $\eta$-expanded, and we can not use the name $x$. Using a \emph{new
    name} $z$, and $\alpha$-converting we would get $\lambda x.\lambda
  z.X_1(x,z) \unif \lambda x.\lambda z.g(z)$.

\item When we compute a substitution for a variable, it must be
  applied to all the occurrences of the variable, and this may
  involve a $\beta$-reduction. Some $\beta$-reductions need to
  introduce new names to avoid variable-captures. For instance, if we
  have the equations $\big\{\lambda x.\lambda y. X(x,y) \unif \lambda
  x.\lambda y.f(\lambda x.Y(x,y)),\ \lambda x.\lambda y.Z(x,y) \unif
  \lambda x.\lambda y.X(y,x)\big\}$, after solving the first one we get
  $\big[X\mapsto \lambda x.\lambda y.f(\lambda x.Y(x,y))\big]$ that must be
  substituted in the second equation. We get, $\lambda x.\lambda
  y.Z(x,y) \unif \lambda x.\lambda y.\big(\lambda x.\lambda y.f(\lambda
  x.Y(x,y))\big)(y,x)$. The $\beta$-reduction using the standard
  substitution algorithm introduces a \emph{new name} $z$ to avoid the
  capture of the variable $x$, giving $\lambda x.\lambda y.Z(x,y)
  \unif \lambda x.\lambda y.f(\lambda z.Y(z,x))$
\end{enumerate}

In the following we show how we can overcome these problems. One of
the ideas is using a kind of swapping for $\lambda$-calculus, instead of the
usual substitution, like it is done in nominal terms.

\begin{definition}
Given two variables $x,y$, and a $\lambda$-term $t$, we
define the swapping of $x$ and $y$ in $t$, noted by $(x\,y)\mdot t$
inductively as follows 
$$
\begin{array}{l}
(x\,y)\mdot x = y\\
(x\,y)\mdot y = x\\
(x\,y)\mdot z = z \hspace{10mm}\mbox{if $z\neq x,y$}\\
(x\,y)\mdot c = c\\
(x\,y)\mdot \big(\lambda z.t\big) = \lambda \big((x\,y)\mdot z\big).\big((x\,y)\mdot t\big)\\
(x\,y)\mdot \big(a(t_1,\dots,t_n)\big) = \big((x\,y)\mdot a\big)\big((x\,y)\mdot t_1,\dots, (x\,y)\mdot t_n\big)
\end{array}
$$
where $c$ is a constant and $a$ is a constant or a variable.
\end{definition}

Notice that this swapping is distinct from the swapping on nominal
terms. In particular $(a\,b)X = X$, and we do not keep suspensions.
In some cases its application results into an $\alpha$-equivalent
term, but in general the result is a different term.

\begin{remark}
  In $\lambda$-calculus, following the Barendregt variable convention,
  operations are defined on classes of $\alpha$-equivalent terms,
  rather than on particular terms.  This, for instance, allows us to
  freely $\alpha$-convert terms in substitutions in order to avoid
  variable capture. Therefore, (although it is often omitted) we have
  to prove that the operation is independent of the representative of
  the class that we take.  The previous swapping operation is defined
  for particular terms. However, the following lemma ensures that it
  can be extended to $\alpha$-equivalent classes of terms.  Barendregt
  variable convention suggests to use distinct variable names for
  distinct variables. Here, since we try to avoid the introduction of
  new variable names, we do not use the convention, and work with
  particular terms.
\end{remark}

\begin{lemma}\label{lem-swap}
  For any term $t$ and variables $x$ and $y$, we have 
  $$
  (x\, y)\mdot t=_\alpha [x\mapsto y,y\mapsto x]t
  $$
  where $[x\mapsto y,y\mapsto x]$ changes $x$ by $y$ and $y$ by $x$ in
  $t$, simultaneously.

  In particular, if $x,y\not\in\FV(t)$, then $(x\, y)\mdot t=_\alpha t$.
\end{lemma}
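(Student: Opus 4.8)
The plan is to prove the equation by structural induction on the particular term $t$, reading the claim up to $\alpha$-equivalence, and to isolate the single genuinely delicate case. Write $\sigma=[x\mapsto y,y\mapsto x]$ for the simultaneous capture-avoiding substitution. The base cases are immediate: $(x\,y)\mdot x=y=\sigma x$, symmetrically for $y$, and a variable $z\neq x,y$ or a constant $c$ is fixed by both operations. For an application $a(t_1,\dots,t_n)$ the substitution $\sigma$ acts componentwise (there is no binder to cross, and the head $a$, being a variable or constant, is only renamed to another head, so no $\beta$-redex is created); the head satisfies $\sigma a=(x\,y)\mdot a$ by the base cases and the arguments match by the induction hypothesis, so the two applications are $\alpha$-equal.

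The crux is the abstraction $t=\lambda z.t'$, where I would split on $z$. If $z\neq x,y$ then $(x\,y)\mdot(\lambda z.t')=\lambda z.\big((x\,y)\mdot t'\big)$, and since $z$ differs from both names moved by $\sigma$ and from their images, $\sigma$ also commutes with the binder without renaming, so $\sigma(\lambda z.t')=\lambda z.(\sigma t')$; the induction hypothesis closes this subcase. The interesting subcase is $z=x$ (the case $z=y$ is symmetric): swapping renames the binder, $(x\,y)\mdot(\lambda x.t')=\lambda y.\big((x\,y)\mdot t'\big)$, which by the induction hypothesis is $\alpha$-equal to $\lambda y.(\sigma t')$; whereas $\sigma$ must rename $\lambda x$ to a fresh $w\notin\{x,y\}\cup\FV(t')$ before pushing inside, since the clause $y\mapsto x$ would otherwise capture free occurrences of $y$ sitting under $\lambda x$.

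The heart of the argument is thus to verify $\lambda y.(\sigma t')=_\alpha\sigma(\lambda x.t')$, and I would do this by exhibiting a common $\lambda w.\big([x\mapsto w,y\mapsto x]t'\big)$ for a single fresh $w$. On the right, $\alpha$-renaming the bound $x$ to $w$ and then pushing $\sigma$ inside turns the composite action on the free variables of $t'$ into $x\mapsto w$, $y\mapsto x$, and the identity elsewhere. On the left, $\alpha$-renaming the bound $y$ to $w$ is legitimate because $w\notin\FV(\sigma t')$, and it rewrites the free $y$'s of $\sigma t'$ --- which are exactly the images of the original free $x$'s --- to $w$, producing the same renaming $x\mapsto w$, $y\mapsto x$. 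Hence both sides reduce to $\lambda w.\big([x\mapsto w,y\mapsto x]t'\big)$, and the subcase is closed.

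The displayed ``in particular'' clause is then an immediate corollary: if $x,y\notin\FV(t)$ then $\sigma$ leaves the free variables of $t$ untouched, so $\sigma t=_\alpha t$, and the main identity gives $(x\,y)\mdot t=_\alpha t$. I expect the only real obstacle to be the bound-variable subcase above; its subtlety is precisely that a transposition is a \emph{bijection} on names, so the apparent capture created by renaming the binder is benign and cancels, up to $\alpha$, once a single common fresh name is introduced. All the accompanying freshness bookkeeping is finite and routine.
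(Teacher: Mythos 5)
Your proof is correct and takes essentially the same route as the paper: structural induction on $t$, with the only delicate case being the abstraction whose binder is one of the swapped names, resolved by introducing a single fresh name and checking that both sides reduce to the common representative $\lambda w.\big([x\mapsto w,y\mapsto x]t'\big)$ (the paper reaches the same term via a chain of sequential substitutions $\lambda z.[y\mapsto x][x\mapsto z]t$). The remaining cases are routine in both versions.
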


\begin{proof}
  By structural induction on $t$. For one of the cases of
  $\lambda$-abstraction, for instance, we have 
  $$
  \begin{array}{lll}
    (x\,y)\lambda x.t 
    & = \lambda y.(x\,y)t & \mbox{By induct. hypothesis}\\
    & = \lambda y.[x\mapsto y,y\mapsto x] t & \mbox{Let be $z\not\in\FV(t)\cup\{x,y\}$} \\
    & = \lambda y.[z\mapsto y][y\mapsto x][x\mapsto z]t & \mbox{Since $y\not\in\FV([y\mapsto x][x\mapsto z]t)$}\\
    & =_\alpha \lambda z.[y\mapsto x][x\mapsto z]t & \mbox{Since $z\neq x,y$}\\
    & = [y\mapsto x] \lambda z.[x\mapsto z]t & \mbox{Since $z\not\in\FV(t)$}\\
    & =_\alpha  [y\mapsto x]\lambda x.t & \mbox{Since $x\not\in\FV(\lambda x.t)$} \\
    & = [x\mapsto y,y\mapsto x]\lambda x.t
  \end{array}
  $$
\end{proof}

\begin{lemma}\label{lem-beta}
  If $\vec{y}$ is a list of pairwise distinct variable
  names\footnote{Notice that we do not require $\vec{x}$ to be
    pairwise distinct. If they are also pairwise distinct, then
    $\Pi_n(\vec{x},\vec{y})=(x_n\,y_n)\dots (x_1\,y_1)$.}  ,
  $|\vec{y}|=|\vec{x}|=n$ and $\{\vec{y}\}\cap \FV(\lambda \vec{x}.t)
  = \emptyset$, then
$$
(\lambda \vec{x}.t)(\vec{y}) =
\Pi_n(\vec{x},\vec{y}) \mdot t
$$ 
where $\Pi_n(\vec{x},\vec{y})$ is a permutation on the names
$\vec{x},\vec{y}$ defined inductively as 
$$
\begin{array}{l}
\Pi_1(\langle
x\rangle,\langle y\rangle) = (x\,y)\\
\Pi_n\big(\langle
x_1,\dots,x_n\rangle,\langle y_1,\dots,y_n\rangle\big) =
\Pi_{n-1}\big(\big\langle(x_1\,y_1)\mdot x_2,\dots,(x_1\,y_1)\mdot
x_n\big\rangle ,\big\langle y_2,\dots,y_n\big\rangle\big)\mdot
(x_1\,y_1)
\end{array}
$$
\end{lemma}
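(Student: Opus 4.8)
The plan is to proceed by induction on $n = |\vec{x}| = |\vec{y}|$, peeling off the outermost binder together with the first argument at each step and reducing to the inductive hypothesis for $n-1$. Throughout I rely on the paper's convention that an equality between $\lambda$-terms means equality modulo $\alpha\beta\eta$, so that $\beta$-reduction and Lemma~\ref{lem-swap} can be combined freely.

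For the base case $n = 1$ I would establish $(\lambda x.t)(y) = (x\,y)\mdot t$ directly. Here $\beta$-reduction gives $(\lambda x.t)(y) = [x\mapsto y]t$ (capture-avoiding substitution), while Lemma~\ref{lem-swap} gives $(x\,y)\mdot t =_\alpha [x\mapsto y,\, y\mapsto x]\,t$. The hypothesis $y \notin \FV(\lambda x.t)$ means either $y = x$, in which case both sides equal $t$ since $(x\,x)$ is the identity swapping, or $y \notin \FV(t)$, in which case the component $y\mapsto x$ acts trivially and $[x\mapsto y,\, y\mapsto x]\,t = [x\mapsto y]\,t$. Either way the two sides agree, and this is exactly $\Pi_1(\langle x\rangle,\langle y\rangle)\mdot t$.

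For the inductive step write $s = \lambda x_2.\cdots.\lambda x_n.t$, so the left-hand side is $\big((\lambda x_1.s)(y_1)\big)(y_2,\dots,y_n)$. Since $y_1 \notin \FV(\lambda x_1.s)$, the base-case equality (applied with $s$ in place of $t$) gives $(\lambda x_1.s)(y_1) = (x_1\,y_1)\mdot s$. Using the abstraction clause of the swapping definition to push $(x_1\,y_1)$ through the binders, this term is $\lambda x_2'.\cdots.\lambda x_n'.t'$ with $x_i' = (x_1\,y_1)\mdot x_i$ and $t' = (x_1\,y_1)\mdot t$. Applying the inductive hypothesis to $\lambda x_2'.\cdots.\lambda x_n'.t'$ with arguments $\langle y_2,\dots,y_n\rangle$ yields $\Pi_{n-1}\big(\langle x_2',\dots,x_n'\rangle,\langle y_2,\dots,y_n\rangle\big)\mdot t'$; and by the recursive definition of $\Pi_n$, together with the convention that the trailing swapping of a permutation is applied first, this equals $\Pi_{n-1}(\dots)\mdot\big((x_1\,y_1)\mdot t\big) = \Pi_n(\vec{x},\vec{y})\mdot t$, as required.

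The main obstacle is discharging the freshness precondition needed to invoke the inductive hypothesis, namely that $\{y_2,\dots,y_n\}$ is disjoint from $\FV(\lambda x_2'.\cdots.\lambda x_n'.t')$. Since $\lambda x_2'.\cdots.\lambda x_n'.t' = (x_1\,y_1)\mdot s$, Lemma~\ref{lem-swap} gives $\FV(\lambda x_2'.\cdots.\lambda x_n'.t') = (x_1\,y_1)\big(\FV(s)\big)$, so I must verify $(x_1\,y_1)(y_i) \notin \FV(s)$ for each $i \geq 2$. The hypothesis $\{\vec{y}\}\cap\FV(\lambda x_1.s)=\emptyset$ states that $y_j \notin \FV(s)$ whenever $y_j \neq x_1$. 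If $y_i \notin \{x_1,y_1\}$ then $(x_1\,y_1)(y_i) = y_i \notin \FV(s)$; if $y_i = x_1$ then $(x_1\,y_1)(y_i) = y_1$, and the pairwise distinctness of $\vec{y}$ forces $y_1 \neq x_1$ (otherwise $y_1 = x_1 = y_i$ with $i \geq 2$), whence $y_1 \notin \FV(s)$; the subcase $y_i = y_1$ cannot occur for $i \geq 2$. This case analysis, resting on the pairwise distinctness of $\vec{y}$, is the only genuinely delicate point; pushing the swapping through the binders and unwinding the recursion in $\Pi_n$ are routine once the definitions are applied.
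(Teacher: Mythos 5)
Your proof is correct and takes essentially the same route as the paper's: induction on $n$, using Lemma~\ref{lem-swap} to turn the outermost $\beta$-step into the swapping $(x_1\,y_1)$, pushing that swapping through the remaining binders, and verifying the freshness side condition before invoking the induction hypothesis. The differences are only cosmetic --- you $\beta$-reduce first and then compare with the swap via Lemma~\ref{lem-swap}, where the paper $\alpha$-renames the outer binder to $y_1$ before a trivial $\beta$-step, and your freshness check is a more explicit case analysis than the paper's set-inclusion argument.
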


\begin{proof}
  By induction on the length $n$ of both vectors.  Obviously, the
  variable $x_1$ is not free in $\lambda x_1.\lambda x_2,\dots,
  x_n.t$.  By assumption, the variable $y_1$ is neither free in this
  term. 

  From $\FV(\lambda x_2,\dots,x_n.t) \subseteq \FV(\lambda
  \vec{x}.t)\cup\{x_1\}$, and $x_1,y_1\not\in\FV(\lambda \vec{x}.t)$,
  we have $\FV((x_1\,y_1)\mdot(\lambda x_2,\dots,x_n.t)) \subseteq
  \FV(\lambda \vec{x}.t)\cup\{y_1\}$. Since
  $y_1\not\in\{y_2,\dots,y_n\}$ and $\{\vec{y}\}\cap \FV(\lambda
  \vec{x}.t) = \emptyset$, we have $\{y_2,\dots,y_n\}\cap
  \FV((x_1\,y_1)\mdot(\lambda x_2,\dots,x_n.t)) = \emptyset$.
  Therefore, we can apply the induction hypothesis to the term
  $(x_1\,y_1)\mdot(\lambda x_2,\dots,x_n.t)$ and the vector
  $(y_2,\dots,y_n)$, obtaining
$$
\begin{array}{lll}
(\lambda \vec{x}.t)(\vec{y}) 
  & =_\alpha (\lambda y_1.(x_1\,y_1)\mdot(\lambda x_2,\dots,x_n.t)) (y_1,y_2,\dots,y_n)
  &\mbox{By  Lemma~\ref{lem-swap}}\\[1mm]
  & =_\beta ((x_1\,y_1)\mdot(\lambda x_2,\dots,x_n.t))(y_2,\dots,y_n)
  &\mbox{By $\beta$-reduction}\\[1mm]
  & = (\lambda (x_1\,y_1)\mdot x_2,\dots,(x_1\,y_1)\mdot x_n.(x_1\,y_1)\mdot t)(y_2,\dots,y_n)
  &\mbox{By def. of swapping}\\[1mm]
  & = \Pi_{n-1}\big(\langle(x_1\,y_1) \mdot x_2,\dots,(x_1\,y_1) \mdot
  x_n\rangle,\langle y_2,\dots,y_n\rangle\big) \mdot (x_1\,y_1) \mdot t
  & \mbox{By induct. hypothesis}\\[1mm]
  & = \Pi_n(\vec{x},\vec{y}) \mdot t
\end{array}
$$
\end{proof}

Now we will describe a variant of the higher-order pattern unification
algorithm of Section~\ref{sec-pattern}. In this variant, external
$\lambda$-binders are $\alpha$-converted explicitly and the flex-rigid
rule has been replaced by a new rule where $\eta$-expansion is made
explicit, i.e.  the terms $u_i$ are base-typed, thus the right-hand
side does not need to be $\eta$-expanded, like in the original
rule. Moreover, $\beta$-redexes are removed using swappings, according
to Lemma~\ref{lem-beta}, since we are dealing with patterns.

\begin{definition}
\label{def-new-rule}
We assume unoriented equations and define the following set of transformation rules
over higher-order pattern equations:
$$ 
\arraycolsep 2pt
\begin{array}{rcl}
\multicolumn{3}{l}{\mbox{\bf $\alpha$-transformation:}}\\
\lambda \vec{w}.\lambda x.t \unif \lambda \vec{w}.\lambda y.u &\to&
\big\langle\lambda \vec{w}.\lambda x.t \unif \lambda \vec{w}.(x\ y)\mdot(\lambda y.u), [\ ]\big\rangle\\[1mm]
&&\mbox{\rm if $x\not\in\FV(u)$}\\[2mm]
\lambda \vec{w}.\lambda x.t \unif \lambda \vec{w}.\lambda x.u &\to&
\big\langle\lambda \vec{w}.t \unif \lambda \vec{w}.u, [\ ]\big\rangle\\[1mm]
&&\mbox{\rm if $x\not\in\FV(t)$ and $x\not\in\FV(u)$}\\[2mm]
\lambda \vec{w}.\lambda x.t \unif \lambda \vec{w}.\lambda x.u &\to&
\big\langle\lambda \vec{w}.\lambda x.t \unif \lambda \vec{w}.\lambda x.u,[X\mapsto\lambda \vec{y}.Z(\vec{z})]\big\rangle\\[1mm]
&&\mbox{\rm if $x\not\in\FV(t)$, $X(\vec{y})$ is a subterm of $u$,}\\
&&\mbox{ $x\in\{\vec{y}\}$ and $\{\vec{z}\}=\{\vec{y}\}\setminus\{x\}$}\\[2mm]
\multicolumn{3}{l}{\mbox{\bf Rigid-rigid:}}\\
\lambda\vec{w}.a(t_1,\dots, t_n)\unif \lambda\vec{w}.a(u_1,\dots, u_n) &\to& \big\langle \{\lambda\vec{w}.t_1\unif \lambda\vec{w}.u_1, \dots, \lambda\vec{w}.t_n\unif \lambda\vec{w}.u_n\}, [\ ]\big\rangle\\[2mm]
\multicolumn{3}{l}{\mbox{\bf Flex-rigid:}}\\
\lambda\vec{w}.X(\vec{x})\unif
\lambda\vec{w}.a(\lambda\vec{y_1}.u_1,\dots,\lambda\vec{y_m}.u_m) &\to& 
\begin{array}[t]{l}\Big\langle
\big\{\begin{array}[t]{c}
\lambda\vec{w}.\lambda\vec{y_1}.X_1(\vec{z_1})\unif \lambda\vec{w}.\lambda\vec{y_1}.u_1\ ,\\
\dots\\ 
\lambda\vec{w}.\lambda\vec{y_m}.X_m(\vec{z_m})\unif \lambda\vec{w}.\lambda\vec{y_m}.u_m
\big\} ,\end{array}\\
\mbox{}[X\mapsto \lambda \vec{x}.a(\lambda\vec{y_1}.X_1(\vec{z_1}),\dots,\lambda\vec{y_m}.X_m(\vec{z_m})) ]\Big\rangle
\end{array}\\[1mm]
&&\begin{array}[t]{l}
\mbox{\rm if $X\not\in \FV(u_i)$, $a$ is a constant or $a\in\{\vec{x}\}$,} \\
\mbox{and $\{\vec{z_i}\}=\{\vec{x}\}\cup\{\vec{y_i}\}$, for $i=1,\dots,m$.}
\end{array}\\[2mm]
\multicolumn{3}{l}{\mbox{\bf Flex-flex:}}\\
\lambda\vec{w}.X(\vec{x})\unif \lambda\vec{w}.X(\vec{y}) & \to&  
\big\langle \emptyset, [X\mapsto \lambda \vec{x}.Z(\vec{z}) ] \big\rangle\\[1mm]
&& \mbox{\rm where $\{\vec{z}\} = \{x_i\, |\, x_i=y_i\}$}\\[2mm]
\lambda\vec{w}.X(\vec{x})\unif \lambda\vec{w}.Y(\vec{y}) & \to&  
\big\langle \emptyset, [X\mapsto \lambda \vec{x}.Z(\vec{z}), Y\mapsto \lambda \vec{y}\,.\,Z(\vec{z}) ] \big\rangle\\[1mm]
&& \mbox{\rm where $X\neq Y$ and $\{\vec{z}\} = \{\vec{x}\}\cap \{\vec{y}\}   $}\\
\end{array}
$$

These transformations are applied as follows. The equation on the
left-hand side is replaced by the equations in the first component of
the right-hand side, and then the substitution in the second component
of the right-hand side is applied to all the equations. If this
substitution introduces $\beta$-redexes, they are removed using
swappings, according to Lemma~\ref{lem-beta}. Moreover, all the
substitutions are composed to compute the resulting unifier.  In other
words, the transformation is applied as follows $\langle \{e\}\cup
E,\sigma \rangle \to \langle \sigma'(E'\cup E)\downarrow_\beta,
\sigma'\circ\sigma \rangle$, if we have a transformation $e \to \langle
E',\sigma' \rangle$.
\end{definition}

With the following examples, we illustrate how these rules solve
the problems concerning the introduction of new bound variable names
described previously, at the beginning of this section.

\begin{example}
Given the equation $\lambda x.\lambda y.\lambda y.X(x,y) \unif \lambda y.\lambda y.\lambda x.Y(x,y)$
the application of the first $\alpha$-transformation rule gives us
$\lambda x.\lambda y.\lambda y.X(x,y) \unif \lambda x.\lambda x.\lambda y.Y(y,x)$.
A second application of this $\alpha$-transformation gives us
$\lambda x.\lambda y.\lambda y.X(x,y) \unif \lambda x.\lambda y.\lambda x.Y(x,y)$.
Now, the first $\alpha$-transformation rule is no longer applicable.
However, we can apply the third $\alpha$-transformation rule, that
instantiates $[X\mapsto \lambda x.\lambda y.X'(y)]$, and gives the
equation $\lambda x.\lambda y.\lambda y.X'(y) \unif \lambda x.\lambda
y.\lambda x.Y(x,y)$.  Now, applying the second $\alpha$-transformation
rule, we obtain $\lambda y.\lambda y.X'(y) \unif \lambda y.\lambda
x.Y(x,y)$.
Again, we can apply the
third $\alpha$-transformation rule, that instantiates $[Y\mapsto \lambda x.\lambda y.Y'(x)]$, and gives
$\lambda y.\lambda y.X'(y) \unif \lambda y.\lambda x.Y'(x)$.
The first  $\alpha$-transformation rule gives
$\lambda y.\lambda y.X'(y) \unif \lambda y.\lambda y.Y'(y)$.
Finally, the second  $\alpha$-transformation rule gives
$\lambda y.X'(y) \unif \lambda y.Y'(y)$.

This last equation can be solved applying the second flex-flex rule.
The resulting unifier is 
$$
\begin{array}{l}
\big[X'\mapsto\lambda y.Z(y),\ Y'\mapsto\lambda y.Z(y)\big]\circ
\big[Y\mapsto \lambda x.\lambda y.Y'(x)\big]\circ
\big[X\mapsto \lambda x.\lambda y.X'(y)\big]\Big|_{\{X,Y\}}\\[1mm]=
\big[X\mapsto \lambda x.\lambda y.Z(y),\, Y\mapsto \lambda x.\lambda y.Z(x)\big]
\end{array}
$$
\end{example}

\begin{example}
  The new flex-rigid rule transforms $\lambda x.X(x) \unif \lambda
  x.f(\lambda y.a)$ into the equation $\lambda x.\lambda y.X_1(x,y)
  \unif \lambda x.\lambda y.a$ and the substitution $[X\mapsto \lambda
  x.f(\lambda y.X_1(x,y))]$.  The original flex-rigid rule would give
  us $\lambda x.X_1(x) \unif \lambda x.\lambda y.a$, that conveniently
  $\eta$-expanded using the same variable name $y$, results into the
  same equation.  A further application of the flex-rigid rule solves
  the equation by $[X_1\mapsto \lambda x.\lambda y.a]$.

  In other cases, the resulting equation may be different.  The new
  rule transforms $\lambda x.X(x) \unif \lambda x.f(\lambda x.g(x))$
  into the equation $\lambda x.\lambda x.X_1(x) \unif \lambda
  x.\lambda x.g(x)$ and the substitution $[X\mapsto \lambda
  x.f(\lambda x.X_1(x))]$.  However, the original flex-rigid rule
  would give us $\lambda x.X_1(x) \unif \lambda x.\lambda x.g(x)$ and
  the substitution $[X\mapsto \lambda x.f(X_1(x))]$. In the subsequent
  $\eta$-expansion we can not use the name $x$, and we need a
  \emph{new name} $z$, and $\alpha$-conversion of the right-hand side
  getting $\lambda x.\lambda z.X_1(x,z) \unif \lambda x.\lambda
  z.g(z)$. Both equations are obviously distinct. However, to solve
  this second equation, $X_1$ can not use the first argument, because
  it is not used in the right-hand side.  Therefore, we can
  instantiate $X_1 \mapsto\lambda x.\lambda y.X_1'(y)$, and
  $\alpha$-convert the new variable name $z$, getting the same
  equation as with the new flex-rigid rule.
\end{example}

\begin{example}
  Given the equations $\big\{\lambda x.\lambda y. X(x,y) \unif
  \lambda x.\lambda y.f(\lambda x.Y(x,y)),$ $\lambda x.\lambda y.Z(x,y)
  \unif \lambda x.\lambda y.X(y,x)\big\}$, after solving the first
  equation and replacing $\big[X\mapsto \lambda x.\lambda y.f(\lambda
  x.Y(x,y))\big]$ into the second one, we get $\lambda x.\lambda y.Z(x,y)
  \unif \lambda x.\lambda y.\big((\lambda x.\lambda y.f(\lambda
  x.Y(x,y)))(y,x)\big)$.
  By Lemma~\ref{lem-beta}, we can $\beta$-reduce using swappings,
  instead of the usual standard substitution. The permutation will be
  $\Pi_2(\langle x,y\rangle,\langle
  y,x\rangle)=\Pi_1\big(\langle(x\,y)\mdot y\rangle,\langle
  x\rangle\big)\mdot(x\,y)=(x\,x)\mdot(x\,y) = (x\,y)$, and the result
  of the $\beta$-reduction will be
$$
\big(\lambda x.\lambda y.f\big(\lambda x.Y(x,y)\big)\big)(y,x) =_\beta
(x\,y)\mdot f\big(\lambda x.Y(x,y)\big) = f\big(\lambda y.Y(y,x)\big)
$$
\end{example}

\begin{lemma}
\label{lem-completeness-new-rule}
The algorithm described in Definition~\ref{def-new-rule} is sound and
complete and computes a most-general higher-order pattern unifier
whenever it exists, when names of free and bound variables are
disjoint.
\end{lemma}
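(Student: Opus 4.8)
The plan is to prove the lemma by reducing it to Nipkow's algorithm of Section~\ref{sec-pattern}, whose soundness, completeness and most-generality are already available (Nipkow's Theorem 3.1). The point is that the three rule groups of Definition~\ref{def-new-rule} differ from Nipkow's only in \emph{how} external binders are matched, how $\eta$-expansion is triggered, and how the post-substitution $\beta$-redexes are normalised; in each case the discrepancy is exactly an $\alpha$-conversion, and Lemma~\ref{lem-swap} and Lemma~\ref{lem-beta} show that this $\alpha$-conversion is precisely what the swapping operations realise. So the whole argument comes down to checking that each transformation preserves the set of unifiers up to $\alpha$-equivalence, and then transporting Nipkow's result along this correspondence.

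First I would establish soundness rule by rule: for every transformation $e\to\langle E',\sigma'\rangle$ the configurations $\langle\{e\}\cup E,\sigma\rangle$ and $\langle\sigma'(E'\cup E)\downarrow_\beta,\sigma'\circ\sigma\rangle$ determine the same unifiers (after composing with the recorded substitution). The rigid-rigid rule and the two flex-flex rules are literally Nipkow's rules carried under a common external prefix $\lambda\vec{w}$, so nothing new is needed. For the first $\alpha$-transformation I would invoke Lemma~\ref{lem-swap}: since $x\notin\FV(u)$, we get $(x\,y)\mdot(\lambda y.u)=_\alpha\lambda y.u$, so renaming the binder leaves the equation unchanged modulo $\alpha$; the binder-removal (second) rule then matches Nipkow's $\lambda x.s\unif\lambda x.t\to\{s\unif t\}$. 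For the flex-rigid rule I would observe that it is Nipkow's flex-rigid step with the $\eta$-expansions performed eagerly, the base-typedness of the $u_i$ being guaranteed by the $\lambda\vec{y_i}$ prefixes that the rule introduces. Finally, the replacement of capture-avoiding $\beta$-normalisation by swapping-based $\beta$-reduction is justified directly by Lemma~\ref{lem-beta}, which yields the same term up to $\alpha$.

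Next I would treat completeness, most-generality and termination together. From any run of the new algorithm I would read off a matching run of Nipkow's algorithm whose configurations and accumulated substitutions are $\alpha$-equivalent to the ones produced here, the $\alpha$-transformation rules corresponding to the $\alpha$-conversions Nipkow performs silently and the explicit $\eta$-expansion and swapping steps corresponding to Nipkow's lazy $\eta$-expansion and standard substitution. Nipkow's Theorem 3.1 then gives that the new algorithm succeeds exactly when a unifier exists and returns a substitution that, restricted to the free variables of the input, is a most general pattern unifier. Termination of the non-$\alpha$ rules is inherited from Nipkow's termination through the simulation; it then suffices to bound the $\alpha$-transformation rules, which may correspond to zero Nipkow steps, by noting that the first one strictly decreases the number of mismatched outermost binders while the other two strictly decrease term size, so no infinite run can consist eventually of $\alpha$-transformations alone.

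The hard part will be the third $\alpha$-transformation rule, the pruning step $[X\mapsto\lambda\vec{y}.Z(\vec{z})]$ with $\{\vec{z}\}=\{\vec{y}\}\setminus\{x\}$, which fires when $x\notin\FV(t)$ on the opposite side. Unlike the others it changes the problem by \emph{instantiating} a variable rather than merely $\alpha$-converting, so I must verify both that it is solution-preserving (any unifier of the original equation must already factor through this instantiation, since $x$ cannot occur in the image of $X$) and that it loses no generality. In Nipkow's presentation this pruning is folded silently into the flex-flex and flex-rigid rules; making it an explicit rule that cooperates with the swapping-based binder manipulation, while keeping every term concrete rather than working in $\alpha$-equivalence classes, is exactly the bookkeeping this lemma is designed to support, and is where the proof demands the most care.
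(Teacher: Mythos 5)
Your proposal follows essentially the same route as the paper's own proof: both argue by simulation of Nipkow's algorithm, justify the swapping-based $\beta$-reduction via Lemma~\ref{lem-beta} (checking that patterns and the free/bound name disjointness guarantee its hypotheses), note that rigid-rigid and flex-flex are unchanged, and handle the third $\alpha$-transformation rule by observing that solvability forces $x$ to occur only as an argument of a free variable whose instantiation cannot use that argument, so the pruning substitution loses no generality. Your additional remarks on termination and on the eager $\eta$-expansion in flex-rigid are consistent with, and slightly more explicit than, the paper's sketch.
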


\begin{proof}
  The algorithm computes basically the same most general unifiers than
  the Nipkow's algorithm.

  The fact that we use swapping instead of substitution to remove
  $\beta$-redexes is not a problem according to
  Lemma~\ref{lem-beta}. We will obtain a term that is
  $\alpha$-equivalent to the one that we would obtain with the
  traditional capture-avoiding substitution. Notice that in the lemma
  we require arguments of free variables (the sequence $\vec{y}$) to
  be a list of distinct bound variables. This is ensured in the case
  of higher-order pattern unification, but it is not true in the
  general $\lambda$-calculus. The algorithm preserves the disjointness
  of bound and free variable names. Therefore, the other condition of
  the lemma $\{\vec{y}\}\cap\FV(\lambda\vec{x}.t)$ is also satisfied.

  In the third $\alpha$-transformation rule, if $x\not\in\FV(t)$ and
  $x\in\FV(u)$ and the equation is solvable, then $x$ must occur in
  $u$ just below a free variable, as one of its arguments, and this
  free variable must be instantiated by a term that does not use this
  argument. Notice also that the three $\alpha$-transformation rules,
  when the equation is solvable, succeed in making the lists of most
  external $\lambda$-bindings equal in both sides of the equation.

  In the case of the flex-rigid rule, we may obtain an equation
  $\lambda \vec{x}.X_i(x_1,\dots,x_n) \unif \lambda \vec{x}.\lambda
  \vec{y}.u_i'$ that needs to be $\eta$-expanded, and where
  $\{x_1,\dots,x_n\}\cap\{\vec{y}\}\neq\emptyset$.  Let be
  $\{x_1',\dots,x_{n'}'\} = \{x_1,\dots,x_n\}\setminus\vec{y}$,
  i.e. the sequence of variables $x_i$'s not in $\vec{y}$.  In any
  solution of this equation $X_i$ can not use the variables of the
  intersection of $\{x_1,\dots,x_n\}\cap\{\vec{y}\}$. Therefore, we
  can extend the solution with $X_i\mapsto \lambda x_1,\dots
  x_n.\lambda\vec{y}.X_i'(x'_1,\dots,x'_{n},\vec{y})$, and get the
  equation $\lambda \vec{x}.\lambda
  \vec{y}.X_i'(x'_1,\dots,x'_{n'},\vec{y}) \unif \lambda
  \vec{x}.\lambda \vec{y}.u_i'$.

The flex-flex and rigid-rigid rules are the same as in Nipkow's algorithm.
\end{proof}

\begin{lemma}
\label{lem-prop-pattern} 
Let $P$ be a solvable pattern unification problem, where the set of
free and bound variable names are disjoint, and let $\langle
a_1,\dots,a_n\rangle$ be a list of the names of bound variables of the
problem. Then, there exists a most general unifier $\sigma$ such that
\begin{enumerate}
\item  $\sigma$ does not
  use other bound-variable names than the ones already used in the
  problem, i.e than $\{a_1,\dots,a_n\}$.
\end{enumerate}
If in the original problem all bound variables with the same name have
the same type, i.e. we have a type $\tau_i$ for every bound variable name
$a_i$, then
\begin{enumerate}
\setcounter{enumi}{1}
\item the same applies to $\sigma$, i.e. any bound variable of
  $\sigma$ with name $a_i$ has type $\tau_i$, and
\item any free variable $X$ occurring in $\sigma$ has type
  $\nu_1\to\cdots\to\nu_m\to\nu$, where $\langle
  \nu_1,\dots,\nu_m\rangle$ is a sublist of
  $\langle\tau_1,\dots,\tau_n\rangle$.
\end{enumerate}
\end{lemma}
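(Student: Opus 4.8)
The plan is to run the modified unification algorithm of Definition~\ref{def-new-rule}, which by Lemma~\ref{lem-completeness-new-rule} is sound, complete, and produces a most general pattern unifier whenever one exists, and then to argue that \emph{this particular} run never introduces bound-variable names outside $\{a_1,\dots,a_n\}$ and, under the type-consistency hypothesis, respects the type discipline of points~2 and~3. The whole argument is an induction on the length of the derivation, maintaining the invariant that the current set of equations together with the substitution accumulated so far (i) uses only bound-variable names from $\{a_1,\dots,a_n\}$, (ii) assigns to every bound variable named $a_i$ the type $\tau_i$, and (iii) gives every free variable a type $\nu_1\to\cdots\to\nu_m\to\nu$ whose argument types $\langle\nu_1,\dots,\nu_m\rangle$ form a sublist of $\langle\tau_1,\dots,\tau_n\rangle$. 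The base case is immediate: in a pattern each free variable is applied to a list of distinct bound variables, so its argument types are among the $\tau_i$, and the hypothesis of points~2--3 supplies the rest.

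The core of the proof is to verify that each transformation rule preserves this invariant, which is exactly where the three sources of fresh names identified earlier in the section are neutralised. For the $\alpha$-transformation rules the renaming is carried out by the swapping $(x\,y)\mdot(\cdot)$ with $x,y$ already occurring as binders, so by Lemma~\ref{lem-swap} it yields an $\alpha$-equivalent term without inventing names; the third $\alpha$-rule and the flex-flex rules only instantiate a variable by $\lambda\vec{y}.Z(\vec{z})$ with $\vec{y}$ existing binders and $\{\vec{z}\}\subseteq\{\vec{y}\}$, and they introduce a fresh \emph{free} variable $Z$ whose argument types are a sublist of the types of $\vec{y}$, hence a sublist of $\langle\tau_1,\dots,\tau_n\rangle$. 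The rigid-rigid rule merely decomposes. In the flex-rigid rule the $\eta$-expansion is made explicit, so the abstractions $\lambda\vec{y_i}$ are copied verbatim from the rigid side (names already present) and the new free variables $X_i$ are applied to $\vec{z_i}$ with $\{\vec{z_i}\}=\{\vec{x}\}\cup\{\vec{y_i}\}$, again existing binders, which fixes their argument types as sublists of $\langle\tau_1,\dots,\tau_n\rangle$; since the arguments $\vec{x}$ of $X$ inherit their types from the invariant, the binder $\lambda\vec{x}$ introduced in the substitution is type-consistent, giving point~2.

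The step I expect to be the main obstacle is the application of a newly computed substitution to the remaining equations, since the naive capture-avoiding substitution is precisely what forces fresh names in the third situation discussed earlier. Here the decisive tool is Lemma~\ref{lem-beta}: because we are unifying patterns, every $\beta$-redex created has the shape $(\lambda\vec{x}.t)(\vec{y})$ with $\vec{y}$ a list of distinct bound variables, and the lemma contracts it to $\Pi_n(\vec{x},\vec{y})\mdot t$, a composition of swappings over the names $\vec{x},\vec{y}$ that already belong to $\{a_1,\dots,a_n\}$. Thus normalisation stays within the existing name set, and since a permutation of names is injective it keeps the arguments of every free variable pairwise distinct, so the pattern property---and therefore the hypotheses of Lemma~\ref{lem-beta} needed at the next step---are preserved; the disjointness of free and bound names, maintained by the algorithm, guarantees that no free variable is captured during these contractions. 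Collecting these facts, the invariant holds at termination, and the computed most general unifier $\sigma$, restricted to $\FV(P)$, satisfies properties~1--3.
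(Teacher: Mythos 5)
Your strategy for parts 1 and 2 is essentially the paper's own: run the modified algorithm of Definition~\ref{def-new-rule}, observe by inspection that every bound-variable name (and its type) appearing on the right of a rule already appears on the left, and invoke Lemma~\ref{lem-beta} together with the disjointness of free and bound names to show that the $\beta$-reductions triggered by substitution application never force a fresh name. Phrasing this as an invariant over the derivation is a fine packaging of the same argument, and your identification of Lemma~\ref{lem-beta} as the decisive tool for the third source of fresh names is exactly right.

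There is, however, a gap in your treatment of part 3. A \emph{sublist} of $\langle\tau_1,\dots,\tau_n\rangle$ is order-sensitive (the paper is explicit about this: sublist, not subset), but the rules of Definition~\ref{def-new-rule} only constrain the \emph{set} of arguments of each freshly introduced variable --- e.g.\ $\{\vec{z}\}=\{\vec{x}\}\cap\{\vec{y}\}$ in the flex-flex rule and $\{\vec{z_i}\}=\{\vec{x}\}\cup\{\vec{y_i}\}$ in the flex-rigid rule --- leaving the enumeration order unspecified. So what you actually obtain is that the argument types of $Z$ (resp.\ $X_i$) are types of pairwise distinct names among $a_1,\dots,a_n$, i.e.\ a permutation of a sublist, not necessarily a sublist; your claim that the rules ``fix their argument types as sublists'' is asserted but not justified. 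The paper closes this gap with an explicit final step: it takes the name-respecting most general unifier and composes it with substitutions of the form $[X\mapsto \lambda b_1.\cdots.\lambda b_m.X'(b_{\pi(1)},\dots,b_{\pi(m)})]$ that reorder the arguments of every free variable occurring in the unifier into the canonical order of $\langle a_1,\dots,a_n\rangle$, yielding another (equivalent) most general unifier satisfying part 3. Your proof could alternatively be repaired by stipulating that every rule enumerates its argument set in the order induced by $\langle a_1,\dots,a_n\rangle$, but as written the order issue is not addressed.
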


\begin{proof}
  By Lemma~\ref{lem-completeness-new-rule} with the new transformation
  rules we obtain most general unifiers for solvable pattern
  unification problems. Then, by simple inspection of the new
  transformation rules, where all bound variable names in the
  right-hand sides of the rules are already present in the left-hand
  sides, we have that new equations and substitutions do not introduce
  new names.  In addition, since names of free and bound variables are
  distinct, $\beta$-reductions due to substitution applications
  satisfy conditions of Lemma~\ref{lem-beta}, therefore we can
  conclude that we do not need new bound variable names due to
  $\beta$-reductions either.

  Notice also that in these rules, when we introduce a new bound
  variable in the right-hand side, with a name already used in the
  left-hand side, both variables have the same type. When, we swap two
  variable names in an $\alpha$-conversion or in a $\beta$-reduction,
  they have also the same type.

  Finally, let $\sigma'$ be any most general unifier not using other bound
  variable names than the ones used in $P$, i.e. $a_1,\dots,a_n$. For
  every variable $X$ occurring free in $\sigma$, chose one of their
  occurrences. This will be of the form $X(b_1,\dots,b_m)$, where
  $\{b_1,\dots,b_m\}\subseteq \{a_1,\dots,a_n\}$ and the $b_i$'s are
  pairwise distinct. Let $\langle b_{\pi(1)},\dots,b_{\pi(m)}\rangle$
  be a sublist of $\langle a_1,\dots,a_n\rangle$. Then composing
  $\sigma'$ with $[X'\mapsto \lambda b_1.\cdots.\lambda
  b_m.X(b_{\pi(1)},\dots,b_{\pi(m)})]$, for every variable $X$, we get
  another most general unifier fulfilling the requirements of the
  third statement of the lemma. Notice that, although not all
  occurrences of $X$ have the same parameters, it does not matter
  which one we chose because all them have the same type.
\end{proof}

\section{The Reverse Translation}
\label{sec-reverse}

As we have shown, Theorem~\ref{the-trans-solution} is not enough to
prove that, if $\trans{\sf P}$ is solvable, then $\sf P$ is
solvable. We still have to prove that if $\trans{\sf P}$ is solvable,
then for some solution $\sigma$ of $\trans{\sf P}$ we can build a
nominal solution $\langle \nabla,\sigma'\rangle$ of $\sf P$. This is
the main objective of this section.  Taking into account that
$\trans{\sf P}$ is a higher-order pattern unification problem, and
that these problems are unitary, we will prove something stronger: if
$\trans{\sf P}$ is solvable, then $\trans{\sf \sigma}^{-1}$ is defined
for the most general unifier $\sigma$ of $\trans{\sf P}$. Moreover, in
the next section we will prove that $\trans{\sf \sigma}^{-1}$ is also
a most general nominal unifier.

\begin{definition}
Let $\sf\langle a_1,\dots,a_n\rangle$ be a fixed ordered list of atoms,
and let $\nabla$ be a freshness environment.  The back-translation
function is defined on $\lambda$-terms in $\eta$-long $\beta$-normal
form as follows: 
$$
\begin{array}{l}
\transb{a}_\nabla = \sf a\\[1mm]
\transb{f(t_1,\dots,t_n)}_\nabla = {\sf f}(\transb{t_1}_\nabla,\dots,\transb{t_n}_\nabla)\\[1mm]
\transb{\lambda a.t}_\nabla = {\sf a}\,.\,\transb{t}_\nabla \\[1mm]
\transb{X(c_1,\dots, c_m)}_\nabla = \sf 
\pi^{-1}
\mdot X\hspace{4mm} \begin{array}[t]{l}
          \mbox{where $\pi$ is a permutation on $\sf\langle
	   a_1,\dots,a_n\rangle$ satisfying}\\
	   \mbox{$\sf \langle \pi\mdot
	    c_1,\dots,\pi\mdot c_m\rangle$ is the sublist of $\sf\langle
	    a_1,\dots,a_n\rangle$ such that }\\
	  \mbox{$\sf\pi\mdot c_i\#
	    X\not\in\nabla$ and $\sf c_i$ and $\sf\pi\mdot c_i$ have the same sort}
          \end{array}
\end{array}
$$ 
where $a$ is a bound variable with name $\sf a$, $f$ is the constant
associated to the function symbol $\sf f$, either $X$ is the free
variable associated to $\sf X$, or if $X$ is a fresh variable then
$\sf X$ is a fresh nominal variable, and the permutation $\pi^{-1}$ is
supposed to be decomposed in terms of transpositions (swappings).
\end{definition}

Notice that the back-translation function is not defined for all
$\lambda$-terms, even for all higher-order patterns. In particular,
$\transb{\lambda x.t}$ is not defined when $x$ is not base typed, or
$\transb{x(t_1,\dots,t_n)}$ is not defined when $x$ is a bound
variable.

Notice also that the permutation $\pi$ is not completely determined by
the side condition of the forth equation. For instance, given $\langle
a_1,a_2,a_3\rangle$ as the list of atoms, all them of the same sort,
to define $\transb{X(a_1)}_{\{a_1\#X,a_2\#X\}} = \sf\pi^{-1}\mdot X$
the condition requires $\pi\mdot a_1 = a_3$, but then, we can choose
$\pi\mdot a_2 = a_1$ and $\pi\mdot a_3 = a_2$, or vice versa $\pi\mdot
a_2 = a_2$ and $\pi\mdot a_3 = a_1$. Therefore, $\transb{t}_\nabla$ is
nondeterministically defined.

For $\lambda$-substitutions the back-translation is defined as follows.

\begin{definition}
\label{def-back-substitution}
Let $\sf\langle a_1,\dots,a_n\rangle$ be a fixed ordered list of
atoms, and let $\nabla$ be a freshness environment. The
back-translation function is defined on $\lambda$-substitutions as
follows.
\[
\transb{\sigma}_\nabla =\bigcup_{X\in\dom(\sigma)} \Big[{\sf X}\mapsto \transb{\sigma(X)(a_1,\dots, a_n)}_\nabla\Big]
\]
\end{definition}

Notice that if $\sigma(X)(a_1,\dots, a_n)$ is not a well-typed
$\lambda$-term, or $\trans{\sigma(X)(a_1,\dots, a_n)}^{-1}_\nabla$ is not
defined for some $X\in\dom(\sigma)$, then $\trans{\sigma}^{-1}_\nabla$ is not
defined.

We introduce the following notion to describe which $\lambda$-terms
and substitutions have reverse translation w.r.t. a freshness
environment.

\begin{definition}\label{def-compatibility}
  Given a $\lambda$-term $t$ (resp.~$\lambda$-substitution $\sigma$),
  and a freshness environment $\nabla$, we say that $t$
  (resp.~$\sigma$) is $\nabla$-compatible if $\transb{t}_\nabla$
  (resp.~$\transb{\sigma}_\nabla$) is defined.
\end{definition}

\begin{lemma}\label{lem-inverse}
  For any $\lambda$-term $t$, and freshness environment $\nabla$, if
  $t$ is $\nabla$-compatible, then $\trans{\transb{t}_\nabla}_\nabla =
  t$.

\noindent
  For every $\lambda$-substitution $\sigma$, and freshness environment
  $\nabla$, if $\sigma$ is $\nabla$-compatible, then
  $\trans{\transb{\sigma}_\nabla}_\nabla=\sigma$.
\end{lemma}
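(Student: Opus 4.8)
The plan is to prove both statements by structural induction, exactly mirroring the inductive structure of the two translation functions $\transb{\cdot}_\nabla$ and $\transn{\cdot}$. The claim $\trans{\transb{t}_\nabla}_\nabla = t$ says the forward translation is a left-inverse of the back-translation on $\nabla$-compatible terms, so the natural strategy is: assume $t$ is $\nabla$-compatible (so $\transb{t}_\nabla$ is defined), compute $\transb{t}_\nabla$ one constructor at a time, and then apply $\transn{\cdot}$ to the result, checking we recover $t$. The induction hypothesis will be that the identity holds for all proper subterms that are themselves $\nabla$-compatible; I would first note that $\nabla$-compatibility is inherited by subterms, since the back-translation is defined compositionally and fails only when some subterm fails.

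The routine cases are the three structural constructors. For a bound variable $a$, we have $\transb{a}_\nabla = \sf a$ and $\transn{a} = a$, closing immediately. For an application, $\transb{f(t_1,\dots,t_n)}_\nabla = {\sf f}(\transb{t_1}_\nabla,\dots,\transb{t_n}_\nabla)$, and applying $\transn{\cdot}$ gives $f(\transn{\transb{t_1}_\nabla},\dots)$, which equals $f(t_1,\dots,t_n)$ by the induction hypothesis on each argument. The abstraction case $\transb{\lambda a.t}_\nabla = {\sf a}.\transb{t}_\nabla$ followed by $\transn{{\sf a}.\transb{t}_\nabla} = \lambda a.\transn{\transb{t}_\nabla} = \lambda a.t$ is equally direct. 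Here the side-conditions noted after the definition (that $a$ is base-typed and that the head of an application is a constant, not a bound variable) are exactly what $\nabla$-compatibility guarantees, so these cases never stall.

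\emph{The main obstacle} is the suspension/flex case $\transb{X(c_1,\dots,c_m)}_\nabla = \sf\pi^{-1}\mdot X$, because the permutation $\pi$ is only nondeterministically determined by its side condition. I would unfold $\transn{\sf\pi^{-1}\mdot X}$ using Definition~\ref{def-translation-terms}: this yields $X(\transn{\sf\pi^{-1}\mdot b_1},\dots)$ where $\sf\langle b_1,\dots,b_k\rangle$ is the sublist of the fixed atom list with $\sf b_j\# X\notin\nabla$, and each $\transn{\sf\pi^{-1}\mdot b_j}$ is the bound variable $\pi^{-1}\mdot b_j$. The goal is then to show this recovers precisely the original argument list $\langle c_1,\dots,c_m\rangle$. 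The defining condition on $\pi$ says that $\sf\langle \pi\mdot c_1,\dots,\pi\mdot c_m\rangle$ is exactly the sublist of atoms $\sf a$ with $\sf a\# X\notin\nabla$ (so $\{\pi\mdot c_1,\dots,\pi\mdot c_m\} = \{b_1,\dots,b_k\}$ as sets, and they agree as sublists), which means applying $\pi^{-1}$ to each $b_j$ returns the $c_i$'s in their original order. I expect the care needed here is purely bookkeeping: verifying that the order-preserving choice of $\pi$ makes the sublist $\langle\pi^{-1}\mdot b_1,\dots,\pi^{-1}\mdot b_k\rangle$ coincide with $\langle c_1,\dots,c_m\rangle$, independently of which admissible $\pi$ was chosen, so the identity holds for \emph{any} valid back-translation.

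For the substitution statement I would simply reduce to the term statement. By Definition~\ref{def-back-substitution}, $\transb{\sigma}_\nabla$ maps $\sf X$ to $\transb{\sigma(X)(a_1,\dots,a_n)}_\nabla$, and applying $\trans{\cdot}_\nabla$ (via Definition~\ref{def-translation-solutions}) rebuilds $X\mapsto\lambda a_1.\cdots.\lambda a_n.\transn{\transb{\sigma(X)(a_1,\dots,a_n)}_\nabla}$. The $\nabla$-compatibility of $\sigma$ means $\sigma(X)(a_1,\dots,a_n)$ is $\nabla$-compatible for each $X\in\dom(\sigma)$, so the term-level result gives $\transn{\transb{\sigma(X)(a_1,\dots,a_n)}_\nabla} = \sigma(X)(a_1,\dots,a_n)$. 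It then remains to observe that $\lambda a_1.\cdots.\lambda a_n.\big(\sigma(X)(a_1,\dots,a_n)\big)$ $\eta$-reduces to $\sigma(X)$, which holds because $\sigma(X)$, being in the image of $\trans{\cdot}_\nabla$, already has $n$ outermost binders over $a_1,\dots,a_n$; this closes the proof.
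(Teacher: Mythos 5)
Your proof follows the same route as the paper's: structural induction where the only nontrivial case is $X(c_1,\dots,c_m)$, in which the side condition on $\pi$ forces $\sf\langle \pi\mdot c_1,\dots,\pi\mdot c_m\rangle$ to be exactly the sublist used by $\transn{\cdot}$ so that $\pi^{-1}$ undoes it, and the substitution statement is reduced to the term statement plus the $\eta$-equality $\lambda a_1.\cdots.\lambda a_n.\sigma(X)(a_1,\dots,a_n)=\sigma(X)$. The one quibble is your justification of that final $\eta$-step: $\sigma(X)$ need not lie in the image of $\trans{\cdot}_\nabla$ (assuming so would be circular); the equality holds simply because $\lambda$-terms are identified modulo $\eta$ and the bound-variable names $a_i$ do not occur free in $\sigma(X)$.
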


\begin{proof}
  Let $\sf\langle a_1,\dots,a_n\rangle$ be a fixed ordered list of
  atoms.  The existence of $\transb{t}_\nabla$ restricts the form of
  $t$ to five cases. For the first four, the proof is trivial. In the
  case $t=X(c_1,\cdots,c_m)$, we have
$$
\begin{array}{rcl}
\trans{\transb{X(c_1,\cdots,c_m)}_\nabla}_\nabla &= &
\trans{\sf \pi^{-1}\mdot X}_\nabla\\
&=&
X\left(\trans{\sf \pi^{-1}\mdot\pi\mdot c_1}_\nabla,\cdots,\trans{\sf \pi^{-1}\mdot\pi\mdot c_m}_\nabla\right)\\
&=& X(c_1,\cdots,c_m)
\end{array}
$$
where $\pi$ is a permutation on $\sf\langle a_1,\dots,a_n\rangle$
satisfying $\sf \langle \pi\mdot c_1,\dots,\pi\mdot c_m\rangle$ is the
sublist of $\sf\langle a_1,\dots,a_n\rangle$ such that $\sf\pi\mdot
c_i\# X\not\in\nabla$ and $\sf c_i$ and $\sf\pi\mdot c_i$ have the same
sort.

For the second statement, by Definitions~\ref{def-back-substitution}
and~\ref{def-translation-solutions} we have
$$
\begin{array}{rcl}
\trans{\transb{\sigma}_\nabla}_\nabla &=&\displaystyle
\trans{\bigcup_{X\in\dom(\sigma)}[{\sf X}\mapsto \transb{\sigma(X)(a_1,\cdots, a_n)}_\nabla]}_\nabla\\[2mm]
&=&\displaystyle
\bigcup_{X\in\dom(\sigma)} \left[X\mapsto \lambda a_1\cdots a_n.\trans{\transb{\sigma(X)(a_1,\cdots, a_n)}_\nabla}_\nabla\right]\\
&=&\displaystyle \bigcup_{X\in\dom(\sigma)} [X\mapsto \lambda a_1\cdots a_n.\sigma(X)(a_1,\cdots, a_n)]\\
&=&\displaystyle \bigcup_{X\in\dom(\sigma)} [X\mapsto \sigma(X)] =\sigma
\end{array}
$$

Where we make use of the first statement to prove
$\trans{\transb{\sigma(X)(a_1,\cdots, a_n)}_\nabla}_\nabla =
\sigma(X)(a_1,\cdots, a_n)$. Notice that, if $\sigma$ is $\nabla$-compatible,
then $\sigma(X)(a_1,\cdots, a_n)$ is also  $\nabla$-compatible.
\end{proof}

Given a pattern unifier, in order to reconstruct the corresponding
nominal unifier, we have several degrees of freedom.  We start with
higher-order pattern unifier $\sigma$ with a restricted use of names
of bound variables. Then, we will construct a freshness environment
$\nabla$ such that $\sigma$ is $\nabla$-compatible.  This construction
is described in the proof of Lemma~\ref{lem-existence}, and it is
nondeterministic. The corresponding nominal solution is then
$\pair{\nabla}{\transb{\sigma}_\nabla}$. Moreover,
$\transb{\sigma}_\nabla$ is not uniquely defined.  The following
example illustrates these degrees of freedom in this back-translation.

\begin{example}
The nominal unification problem
$$
\sf P = \{a.a.X\unifn c.a.X\ ,\ a.b.X \unifn b.a.(a\,b)\mdot X\}
$$
where all atoms and variables are of the same sort, is translated as 
$$
\begin{array}{ll}
\trans{\sf P} = \{&\lambda a.\lambda b.\lambda c.\lambda a.\lambda a. X(a,b,c)
\unif
\lambda a.\lambda b.\lambda c.\lambda c.\lambda a. X(a,b,c)
\ ,\\
&\lambda a.\lambda b.\lambda c.\lambda a.\lambda b. X(a,b,c)
\unif
\lambda a.\lambda b.\lambda c.\lambda b.\lambda a. X(b,a,c)
\ \}
\end{array}
$$

Most general higher-order pattern unifiers are
$$
\sigma_1 = [X\mapsto \lambda a.\lambda b.\lambda c.  Z(a,b)]
$$
and
$$
\sigma_2 = [X\mapsto \lambda a.\lambda b.\lambda c.  Z(b,a)]
$$
which are equivalent.

Let $\sf\langle a,b,c\rangle$ be the fixed list of atoms.  Following
the construction described in the forthcoming proof of
Lemma~\ref{lem-existence}, for every variable $Z$ occurring in
$\sigma$, we construct a sublist of atoms $L_Z=\sf\langle
b_1,\dots,b_m\rangle$ satisfying $b_j:\transb{\tau_j}$, for every
$j=1,\dots,m$.  In our case, we can choose among three possibilities
$L_Z^1 = \sf\langle a,b\rangle$, $L_Z^2 = \sf\langle a,c\rangle$ or
$L_Z^3 = \sf\langle b,c\rangle$.  We construct $
\nabla=\bigcup_{\begin{array}{c}\mbox{\scriptsize $Z$ occurs in
      $\sigma$} \\[-2mm] \mbox{\scriptsize $ {\sf a\in\langle
        a_1,\dots,a_n}\rangle\setminus L_Z$}\end{array}} \{{\sf
  a\#Z}\} $.

From the two pattern unifiers $\sigma_i$'s, and the three lists $L_Z^j$'s
we can construct six possible nominal unifiers:

\[
\begin{array}{c|cc}
 & \sigma_1 & \sigma_2\\
\hline\\[-3mm]
L_Z^1 & 
\sf\langle \{c\# Z\}\ ,\ [X\mapsto\left(\begin{array}{ccc}\sf a&\sf
					b&\sf c\\\sf a&\sf b&\sf
					c\end{array}\right)^{-1}\mdot
Z]\rangle & 
\sf\langle \{c\# Z\}\ ,\ [X\mapsto\left(\begin{array}{ccc}\sf a&\sf
					b&\sf c\\\sf b&\sf a&\sf
					c\end{array}\right)^{-1}\mdot
Z]\rangle\\
L_Z^2 & 
\sf\langle \{b\# Z\}\ ,\ [X\mapsto\left(\begin{array}{ccc}\sf a&\sf
					b&\sf c\\\sf a&\sf c&\sf
					b\end{array}\right)^{-1}\mdot
Z]\rangle & 
\sf\langle \{b\# Z\}\ ,\ [X\mapsto\left(\begin{array}{ccc}\sf a&\sf
					b&\sf c\\\sf c&\sf a&\sf
					b\end{array}\right)^{-1}\mdot
Z]\rangle\\
L_Z^3 & 
\sf\langle \{a\# Z\}\ ,\ [X\mapsto\left(\begin{array}{ccc}\sf a&\sf
					b&\sf c\\\sf b&\sf c&\sf
					a\end{array}\right)^{-1}\mdot
Z]\rangle & 
\sf\langle \{a\# Z\}\ ,\ [X\mapsto\left(\begin{array}{ccc}\sf a&\sf
					b&\sf c\\\sf c&\sf b&\sf
					a\end{array}\right)^{-1}\mdot
Z]\rangle\\
\end{array}
\]

The permutations can be written as swappings obtaining:
\[\renewcommand{\arraystretch}{1.3}
\begin{array}{c|ll}
 & \multicolumn{1}{c}{\sigma_1} &  \multicolumn{1}{c}{\sigma_2}\\
\hline
L_Z^1 & 
\sf\langle \{c\# Z\}\ ,\ [X\mapsto Z]\rangle & 
\sf\langle \{c\# Z\}\ ,\ [X\mapsto (a\,b)\mdot Z]\rangle\\
L_Z^2 & 
\sf\langle \{b\# Z\}\ ,\ [X\mapsto (b\,c)\mdot Z]\rangle & 
\sf\langle \{b\# Z\}\ ,\ [X\mapsto (a\,b)(b\,c)\mdot Z]\rangle\\
L_Z^3 & 
\sf\langle \{a\# Z\}\ ,\ [X\mapsto (a\,c)(b\,c)\mdot Z]\rangle & 
\sf\langle \{a\# Z\}\ ,\ [X\mapsto (a\,c)\mdot Z]\rangle\\
\end{array}
\]

All these nominal unifiers are most general and equivalent. Notice that
these are \emph{all} the most general nominal unifiers.
\end{example}

\begin{lemma}\label{lem-existence}
  For every equational nominal unification problem $\sf P$, if the
  pattern unification problem $\trans{\sf P}$ is solvable, then there
  exists a freshness environment $\nabla$, and a most general pattern unifier
  $\sigma$, such that $\sigma$ is $\nabla$-compatible.
\end{lemma}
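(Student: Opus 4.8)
The plan is to feed $\trans{\sf P}$ into Lemma~\ref{lem-prop-pattern}, read off a well-behaved most general unifier $\sigma$, and then manufacture a freshness environment $\nabla$ by recording, for each auxiliary variable of $\sigma$, exactly which atoms it must \emph{not} be allowed to capture. First I would check that $\trans{\sf P}$ meets the hypotheses of Lemma~\ref{lem-prop-pattern}: it is solvable by assumption, its bound-variable names are precisely the atom names $\langle a_1,\dots,a_n\rangle$ of $\sf P$ (the external bindings of Definition~\ref{def-trans-problems} making all of them occur), its free-variable names are the nominal variables of $\sf P$ and are disjoint from the atom names, and every bound variable named $a_i$ has the base type $\trans{\nu_i}$ determined by the sort $\nu_i$ of the atom $\sf a_i$. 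Lemma~\ref{lem-prop-pattern} then yields a most general unifier $\sigma$ that (1) uses only bound-variable names from $\{a_1,\dots,a_n\}$, (2) types every such variable by its base type $\trans{\nu_i}$, and (3) gives every free variable $Z$ occurring in $\sigma$ a first-order type $\nu_1\to\cdots\to\nu_m\to\nu$ whose argument sorts form a sublist of the atom sorts. This $\sigma$ is the unifier claimed by the statement.

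Next I would build $\nabla$. For each free variable $Z$ occurring in $\sigma$, property~(3) lets me pick a sublist $L_Z=\langle b_1,\dots,b_m\rangle$ of $\langle a_1,\dots,a_n\rangle$ such that $\sf b_j$ has the sort $\nu_j$ of the $j$-th argument position of $Z$; such a sublist exists precisely because those argument sorts form a sublist of $\langle\nu_1,\dots,\nu_n\rangle$. I then set
$$\nabla=\{\,{\sf a}\fresh Z\mid Z\mbox{ occurs in }\sigma\mbox{ and }{\sf a}\in\langle a_1,\dots,a_n\rangle\setminus L_Z\,\},$$
so that for each $Z$ the atoms \emph{not} declared fresh are exactly those of $L_Z$.

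It then remains to verify that $\sigma$ is $\nabla$-compatible, i.e.\ that $\transb{\sigma(X)(a_1,\dots,a_n)}_\nabla$ is defined for every $X\in\dom(\sigma)$. By properties~(1) and~(2), $\sigma(X)$ abstracts only base-typed variables named in $\{a_1,\dots,a_n\}$; applying it to $a_1,\dots,a_n$ and normalising through swappings (Lemma~\ref{lem-beta}, whose hypotheses hold because the arguments are distinct bound variables and free/bound names are disjoint) introduces no new name and yields a $\beta$-normal form that again abstracts only base-typed variables, never applies a bound variable (these being base-typed), and applies every free variable to a list of pairwise distinct bound variables. These are exactly the shapes for which the back-translation is defined, the only non-trivial case being a flexible subterm $Z(c_1,\dots,c_m)$. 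For it I would produce the permutation required by the definition: well-typedness forces $\sf c_i$ to have sort $\nu_i$, the same sort as the $i$-th element $\sf b_i$ of $L_Z$, so the sort-preserving injection $\sf c_i\mapsto b_i$ extends to a sort-preserving permutation $\pi$ of $\langle a_1,\dots,a_n\rangle$ (for each sort the numbers of $\sf c_i$ and of $\sf b_i$ coincide, hence so do the numbers of unused atoms), giving $\langle\pi\mdot c_1,\dots,\pi\mdot c_m\rangle=L_Z$. Since by construction $L_Z$ is exactly the sublist of atoms $\sf a$ with $\sf a\fresh Z\notin\nabla$, the side condition is met and $\transb{Z(c_1,\dots,c_m)}_\nabla=\sf\pi^{-1}\mdot Z$ is defined.

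The main obstacle is this last verification, and within it the requirement that a \emph{single} freshness environment serve \emph{all} occurrences of each variable $Z$ simultaneously, even though distinct occurrences may carry differently ordered argument lists. What makes it go through is that well-typedness pins down the sequence of argument sorts of every occurrence of $Z$ to the one recorded in $L_Z$, so the (occurrence-dependent) permutation $\pi$ always exists while the freshness constraints $\{\,{\sf a}\fresh Z\mid {\sf a}\notin L_Z\,\}$ stay fixed. I would also be careful that the swapping-based reduction of Lemma~\ref{lem-beta} really applies at each substitution step, which is guaranteed by the pattern shape of $\sigma$ and the disjointness of free and bound variable names.
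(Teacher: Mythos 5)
Your proposal is correct and follows essentially the same route as the paper: invoke Lemma~\ref{lem-prop-pattern} to obtain a most general unifier using only the problem's bound-variable names and types, define $\nabla$ from the complements of the sublists $L_Z$, and check $\nabla$-compatibility by structural induction with the flexible subterms $Z(c_1,\dots,c_m)$ as the only delicate case. You even make explicit two points the paper's proof leaves implicit --- that the partial sort-preserving map $c_i\mapsto b_i$ extends to a sort-preserving permutation of all the atoms, and that the $\beta$-normalisation of $\sigma(X)(a_1,\dots,a_n)$ via Lemma~\ref{lem-beta} introduces no new names --- so nothing further is needed.
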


\begin{proof}
  The most general unifier $\sigma$ is chosen, accordingly to
  Lemma~\ref{lem-prop-pattern}, as a unifier not using other bound
  variable names than the ones used in $\trans{\sf P}$. Moreover,
  since all bound variables of $\trans{\sf P}$ with the same name
  $a_i$ have the same type $\tau_i$, the same happens in $\sigma$, and
  all free variables $Z$ occurring in $\sigma$ have a type of the form
  $Z:\tau_{i_1}\to\dots\to\tau_{i_m}\to\delta$, for some indexes
  satisfying $1\leq i_1 <\cdots <i_m\leq n$. Notice that there could
  be more than one set of indexes satisfying this condition.

  The freshness environment $\nabla$ is constructed as follows.  For
  any variable $Z:\tau_{i_1}\to \dots \tau_{i_m} \to \delta$
  occurring\footnote{We say that $X$ occurs in $\sigma$, if $X$ occurs
    free in $\sigma(Y)$, for some $Y\in\dom(\sigma)$.} in $\sigma$ ,
  let $L_Z=\sf\langle a_{i_1},\dots,a_{i_m}\rangle$ be a sublist of the atoms
  $\sf\langle a_1,\dots,a_n\rangle$. Then,
\[
\nabla=\hspace{-6mm}\bigcup_{\begin{array}{c}\mbox{\scriptsize $Z$ occurs in
      $\sigma$} \\[-2mm] \mbox{\scriptsize $ {\sf a\in\langle
        a_1,\dots,a_n}\rangle\setminus L_Z$}\end{array}}\hspace{-6mm} \{{\sf a\#Z}\}
\]

We prove that $\sigma(X)(a_1,\dots, a_n)$ is $\nabla$-compatible, for any
$X\in\dom(\sigma)$.

Since $\sigma$ is most general $\dom(\sigma)$ only contains variables
$X$ of $\trans{\sf P}$. All these variables have type
$\trans{\tau_1}\to\cdots\to\trans{\tau_n}\to\trans{\tau_0}$, where
$\langle\tau_1,\dots,\tau_n\rangle$ is the list of sorts of $\sf
\langle a_1,\dots,a_n\rangle$, and $\tau_0$ is the sort of $\sf X$.
Therefore, $\sigma(X)(a_1,\dots, a_n)$ is a well-typed $\lambda$-term.
Now we prove that this term is \emph{back-translatable} by structural
induction.

By Lemma~\ref{lem-prop-pattern}, $\sigma(X)$ does not use bound
variables with other names and types than the ones already used in the
original problem. This ensures that we can always translate back bound
variables $a$ as the atom with the same name $\sf a$. Terms formed by
a constant or free variable are particular cases of applications with
$m=0$, studied bellow.

All $\lambda$-abstractions will be of the form  $\lambda a_i.t$,
where $a_i=\trans{\sf a_i}$. This ensure that its translation back is
possible, if the body of the $\lambda$-abstractions is
back-translatable.

All applications are of the form $f(t_1,\dots,t_m)$ where $f$ is a
constant of the original nominal problem (since $\sigma$ is most
general), or of the form $X(a_{i_1},\dots,a_{i_m})$ where $X$ is a
free variable and $a_{i_1},\dots,a_{i_m}$ are distinct bound
variables. Notice that we can no have terms of the form
$a_i(t_1,\dots,t_n)$ where $a_i$ is a bound variable, because all
these bound variables have basic types. In the first case, the
application is back-translatable if arguments are. In the second case,
let $X:\tau_{j_1}\to\dots\to\tau_{j_m}\to\delta$, for some indexes
satisfying $1\leq j_1 <\cdots <j_m\leq n$.  using the $\nabla$
constructed before, we can translate back $X(a_{i_1},\dots,a_{i_m})$
as $\sf\pi^{-1}\mdot X$, for some $\pi$ satisfying $\sf\pi(a_{i_k}) =
a_{j_k}$, for $j=1,\dots,m$. Notice that $\sf a_{j_k}$
and $\sf a_{i_k}$ have the same sort $\tau_{j_k}$. Hence, this second
kind of applications is also back-translatable.
\end{proof}

\begin{theorem}\label{the-other-direction}
For every equational nominal unification problem $\sf P$, if the pattern
unification problem $\trans{\sf P}$ is solvable, then $\sf P$ is also
solvable.
\end{theorem}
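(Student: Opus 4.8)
The plan is to read Theorem~\ref{the-trans-solution} from right to left, feeding it the back-translation of a most general pattern unifier of $\trans{\sf P}$ supplied by the previous lemmas, thereby exhibiting an explicit nominal solution of $\sf P$.

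Concretely, since $\trans{\sf P}$ is solvable, I would first invoke Lemma~\ref{lem-existence} to obtain a most general pattern unifier $\sigma$ of $\trans{\sf P}$ together with a freshness environment $\nabla$ such that $\sigma$ is $\nabla$-compatible. Compatibility guarantees that the nominal substitution $\sigma'=\transb{\sigma}_\nabla$ is well defined, and Lemma~\ref{lem-inverse} then gives $\trans{\sigma'}_\nabla=\sigma$. Since $\sigma$ was chosen to be a unifier of $\trans{\sf P}$, the pattern substitution $\trans{\sigma'}_\nabla$ solves $\trans{\sf P}$. At this point the backward direction of Theorem~\ref{the-trans-solution} would immediately yield that $\pair{\nabla}{\sigma'}$ solves $\sf P$, and hence that $\sf P$ is solvable.

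The one hypothesis of Theorem~\ref{the-trans-solution} that is not automatic, and which I expect to be the main obstacle, is the domain side condition $\Vars({\sf P})\subseteq\dom(\sigma')$. A most general pattern unifier need not touch every free variable of $\trans{\sf P}$ (for instance a trivial equation $X\unif X$ contributes nothing to $\dom(\sigma)$), so in general $\dom(\sigma')=\dom(\sigma)$ is only a subset of $\Vars({\sf P})$; this is exactly the phenomenon flagged in Remark~\ref{rem-domain}. To discharge it I would, following that remark, enlarge $\sigma'$ to a nominal substitution $\hat\sigma$ by adjoining the bindings $\sf X\mapsto X$ for every variable $\sf X\in\Vars(P)$ not already in $\dom(\sigma')$, so that $\dom(\hat\sigma)=\Vars({\sf P})$.

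It then remains to check that this enlargement does not spoil the argument, i.e. that $\trans{\hat\sigma}_\nabla$ still solves $\trans{\sf P}$. The key observation is that every variable of $\sf P$ translates to a free variable of full arity $n$, since under $\trans{\cdot}_\emptyset$ it is applied to the whole atom list $\sf\langle a_1,\dots,a_n\rangle$, whereas the freshness constraints produced by the construction of Lemma~\ref{lem-existence} only come from atoms lying outside the chosen sublist $L_Z$; for a variable of full arity $n$ this sublist is the entire atom list, so no arity-$n$ variable, and in particular no variable of $\sf P$, is constrained by $\nabla$. Consequently each adjoined binding translates back to the identity pattern binding $[X\mapsto\lambda a_1.\dots.\lambda a_n.X(a_1,\dots,a_n)]$, so $\trans{\hat\sigma}_\nabla$ is merely $\sigma$ extended by identities and has the same effect as $\sigma$ on $\trans{\sf P}$. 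Therefore $\trans{\hat\sigma}_\nabla$ solves $\trans{\sf P}$, the domain condition now holds for $\hat\sigma$, and Theorem~\ref{the-trans-solution} gives that $\pair{\nabla}{\hat\sigma}$ solves $\sf P$, completing the proof.
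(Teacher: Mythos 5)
Your proposal is correct and follows essentially the same route as the paper: invoke Lemma~\ref{lem-existence} to get a $\nabla$-compatible most general pattern unifier $\sigma$, use Lemma~\ref{lem-inverse} to see that $\trans{\transb{\sigma}_\nabla}_\nabla=\sigma$ solves $\trans{\sf P}$, and then apply Theorem~\ref{the-trans-solution} backwards. The only difference is that the paper dispatches the domain side condition with a one-line ``w.l.o.g.\ assume $\dom(\sigma)=\Vars(\trans{\sf P})$'', whereas you spell out why padding with identity bindings is harmless; that elaboration is a legitimate filling-in of the same step, not a different argument.
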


\begin{proof}
  By Lemma~\ref{lem-existence}, if $\trans{\sf P}$ is solvable then
  there exist a most general unifier $\sigma$ of $\trans{\sf P}$, and
  a freshness environment $\nabla$ such that
  $\langle\nabla,\transb{\sigma}_\nabla\rangle$ is defined.
  W.l.o.g. assume that $\dom(\sigma)=\Vars(\trans{\sf P})$ and hence,
  according to Definition~\ref{def-back-substitution},
  $\dom(\transb{\sigma}_\nabla)=\Vars(\sf P)$. By
  Lemma~\ref{lem-inverse}, we have $\trans{\transb{\sigma}_\nabla}_\nabla=\sigma$, which solves
  $\trans{\sf P}$. Hence, by Theorem~\ref{the-trans-solution},
  $\langle\nabla,\transb{\sigma}_\nabla\rangle$ solves $\sf P$.
\end{proof}

From Theorems~\ref{the-one-direction} and~\ref{the-other-direction},
and linear-time decidability for Higher-Order Patterns
Unification~\cite{Qia96}, we conclude the following results.

\begin{corollary}
Nominal Unification is quadratic reducible to Higher-Order Pattern Unification.

\noindent
Nominal Unification can be decided in quadratic deterministic time.
\end{corollary}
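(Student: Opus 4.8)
The plan is to assemble this corollary mechanically from the solvability-preservation theorems of the previous sections together with the size-and-time bounds on the two translation steps, and then to bolt on Qian's linear-time pattern-unifiability algorithm. No new mathematical argument is needed; the work is to chain existing results and verify the bookkeeping of the complexity bounds.

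For the reducibility claim I would first record the two-stage reduction already built in the paper. A general nominal unification problem ${\sf P}$ is turned into an equational nominal problem ${\sf \etrans{P}}$ by the corollary at the end of Section~\ref{sec-removing}, which is solvability-preserving and computable in linear time (the first lemma of that section); then ${\sf \etrans{P}}$ is turned into the pattern problem $\trans{\sf \etrans{P}}$ by the translation of Section~\ref{sec-translation}. The two directions of solvability preservation for this second step are exactly Theorems~\ref{the-one-direction} and~\ref{the-other-direction}, so $\trans{\sf \etrans{P}}$ is solvable if, and only if, ${\sf \etrans{P}}$ is. Chaining, ${\sf P}$ is solvable if, and only if, the higher-order pattern unification problem $\trans{\sf \etrans{P}}$ is solvable; hence the composite map ${\sf P}\mapsto\trans{\sf \etrans{P}}$ is a genuine reduction. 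To see it is quadratic I would bound each stage: the freshness-removal stage is linear, and the translation stage is quadratic in both time and output size by Lemma~\ref{lem-quadratic-pattern}. Since the first stage enlarges the problem only linearly and precedes the translation, feeding its output into the quadratic second stage still costs $O(|{\sf P}|^2)$, and the resulting pattern problem has size $O(|{\sf P}|^2)$. This establishes the first part of the statement.

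For the decidability-in-quadratic-time claim I would simply run Qian's algorithm~\cite{Qia96}, which decides pattern unifiability in linear time, on the translated problem $\trans{\sf \etrans{P}}$. Because that problem has size $O(|{\sf P}|^2)$, the linear-time decision procedure runs in time $O(|{\sf P}|^2)$, and adding the $O(|{\sf P}|^2)$ cost of computing the translation keeps the total quadratic. Correctness of deciding ${\sf P}$ by deciding $\trans{\sf \etrans{P}}$ is precisely the solvability equivalence assembled in the previous paragraph.

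There is essentially no hard step remaining: all the genuine content lives in the solvability-preservation theorems — whose difficult direction, Theorem~\ref{the-other-direction}, rests on the back-translation together with Lemmas~\ref{lem-existence} and~\ref{lem-inverse} — and in the size bound of Lemma~\ref{lem-quadratic-pattern}. The only facts to check here are that the composition of a linear reduction with a quadratic reduction is again quadratic, and that a linear-time algorithm run on a quadratic-size input takes quadratic time. The one point worth stating explicitly is that the quadratic blow-up arises solely at the suspension-translation step of $\trans{\cdot}$; one must confirm that the linear freshness-removal stage does not compound this into a worse bound, which it cannot, since it only enlarges the input linearly before the translation is applied.
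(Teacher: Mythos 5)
Your proposal is correct and follows exactly the paper's intended argument: the paper derives this corollary directly from Theorems~\ref{the-one-direction} and~\ref{the-other-direction}, the linear freshness-removal reduction of Section~\ref{sec-removing}, the quadratic bound of Lemma~\ref{lem-quadratic-pattern}, and Qian's linear-time algorithm, just as you chain them. Your explicit bookkeeping of how the linear first stage composes with the quadratic second stage is a correct spelling-out of what the paper leaves implicit.
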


\section{Correspondence Between Unifiers}
\label{sec-correspondence}

In this section we establish a correspondence between the solutions of
a nominal unification problem and their translations. We prove that
the translation function is monotone, in the sense that it translates
more general nominal solutions into more general pattern
solutions. The reverse translation also satisfies this
property. Therefore, both translate most general solutions into most
general solutions.  We start by generalizing the translation of a
nominal substitution w.r.t. a freshness environment, to respect the
translation of a nominal substitution w.r.t. \emph{two} freshness
environments, and similarly for the reverse translation.

\begin{definition}\label{def-generalized-inverse}
  Let $\sf\langle a_1,\dots,a_n\rangle$ be a fixed list of atoms.

  Given a nominal substitution $\sigma$, and two freshness
  environments $\nabla$ and $\nabla'$, satisfying $\sf\nabla\vdash
  \sigma(\nabla')$, we define
$$
\trans{\sigma}_{\nabla}^{\nabla'}= \bigcup_{\sf X\in\dom(\sigma)}[X\mapsto \lambda b_1,\dots, b_m.\trans{\sf\sigma(X)}_\nabla]
$$
where $\sf\langle b_1,\dots,b_m\rangle = \langle a\in\langle
a_1,\dots,a_n\rangle\ |\ a\#X\not\in\nabla'\rangle$.

 Given a pattern substitution $\sigma$, and two freshness
  environments $\nabla$ and $\nabla'$, we define
$$
{\transb{\sigma}}_{\nabla}^{\nabla'}= \bigcup_{X\in\dom(\sigma)}
\Big[{\sf X}\mapsto \transb{\sigma(X)(b_1,\dots, b_m)}_\nabla\Big]
$$
where $\sf\langle b_1,\dots,b_m\rangle = \langle a\in\langle
a_1,\dots,a_n\rangle\ |\ a\#X\not\in\nabla'\rangle$.

We say that $\sigma$ is $\nabla'\to\nabla$-compatible if
${\transb{\sigma}}_{\nabla}^{\nabla'}$ exists.
\end{definition}

Notice that this definition generalizes
Definition~\ref{def-translation-solutions} because
$\trans{\sigma}_\nabla=\trans{\sigma}_\nabla^\emptyset$, and
Definition~\ref{def-back-substitution} because,
$\transb{\sigma}_\nabla={\transb{\sigma}}_\nabla^\emptyset$.

The following lemmas are generalizations of
Lemmas~\ref{lem-technical2} and~\ref{lem-inverse},
respectively. Their proofs are also straightforward generalizations.

\begin{lemma}\label{lem-technical3}
  For any nominal substitution $\sf \sigma$, freshness environments
  $\nabla_1$ and $\nabla_2$, and nominal term $\sf t$, satisfying
  $\sf\nabla_2\vdash \sigma(\nabla_1)$ and
  $\sf\Vars(t)\subseteq\dom(\sigma)$, we have $$
\trans{\sf \sigma}_{\nabla_2}^{\nabla_1}(\trans{\sf t}_{\nabla_1}) = \trans{\sf\sigma(t)}_{\nabla_2}
$$
\end{lemma}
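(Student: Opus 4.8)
The plan is to prove the identity by structural induction on $\sf t$, exactly mirroring the proof of Lemma~\ref{lem-technical2}, of which this statement is the two-environment refinement. The three non-suspension cases are routine and essentially unchanged. For $\sf t=a$ both sides reduce to the bound variable $a$, since $\trans{\sf\sigma}_{\nabla_2}^{\nabla_1}$ acts only on the free variables in $\dom(\sigma)$ and never on atoms. For $\sf t=f(t_1,\dots,t_k)$ and $\sf t=a.t'$ the substitution commutes with the constant application and with the $\lambda a$-binder respectively, so one pushes it inward and invokes the induction hypothesis on the immediate subterms, using $\sf\Vars(t_i)\subseteq\dom(\sigma)$ and that $\sf\nabla_2\vdash\sigma(\nabla_1)$ is inherited unchanged. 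Hence the only case that needs work is the suspension $\sf t=\pi\mdot X$.

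For $\sf t=\pi\mdot X$ I would first note that the argument list attached to $X$ by $\trans{\sf\pi\mdot X}_{\nabla_1}$ (Definition~\ref{def-translation-terms}) and the binder list prefixed to $\sf\sigma(X)$ by $\trans{\sf\sigma}_{\nabla_2}^{\nabla_1}$ (Definition~\ref{def-generalized-inverse}) are the \emph{same} sublist $\sf\langle b_1,\dots,b_m\rangle=\langle a\mid a\fresh X\not\in\nabla_1\rangle$, because both are governed by $\nabla_1$. Therefore applying the substitution yields
\[
\trans{\sf\sigma}_{\nabla_2}^{\nabla_1}\!\left(X(\trans{\sf\pi\mdot b_1}_{\nabla_1},\dots,\trans{\sf\pi\mdot b_m}_{\nabla_1})\right)
=\left(\lambda b_1\dots b_m.\trans{\sf\sigma(X)}_{\nabla_2}\right)(\trans{\sf\pi\mdot b_1}_{\nabla_1},\dots,\trans{\sf\pi\mdot b_m}_{\nabla_1}).
\]
Since each $\trans{\sf\pi\mdot b_i}_{\nabla_1}$ is just the bound variable named $\pi\mdot b_i$, this redex $\beta$-reduces to the capture-avoiding renaming $[b_1\mapsto\pi\mdot b_1,\dots,b_m\mapsto\pi\mdot b_m]\,\trans{\sf\sigma(X)}_{\nabla_2}$, and it remains to identify this with $\trans{\sf\pi\mdot\sigma(X)}_{\nabla_2}=\trans{\sf\sigma(\pi\mdot X)}_{\nabla_2}$, the last equality holding because a substitution applied to a suspension applies the permutation only after instantiation.

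The crux --- and the place where the hypothesis $\sf\nabla_2\vdash\sigma(\nabla_1)$ is indispensable --- is to show that renaming merely the atoms of the \emph{partial} list $\sf\langle b_1,\dots,b_m\rangle$ reproduces the effect of the full permutation $\pi$. For every atom $\sf a$ absent from the list, i.e. with $\sf a\fresh X\in\nabla_1$, the compatibility hypothesis gives $\sf\nabla_2\vdash a\fresh\sigma(X)$, whence by the first statement of Lemma~\ref{lem-technical1} we get $a\not\in\FV(\trans{\sf\sigma(X)}_{\nabla_2})$. Thus every atom occurring free in $\trans{\sf\sigma(X)}_{\nabla_2}$ already lies in $\sf\{b_1,\dots,b_m\}$, so the partial renaming and the permutation $\pi$ agree on all free occurrences. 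Working modulo $\alpha$-equivalence (the meaning of $=$ between $\lambda$-terms here), the capture-avoiding $\lambda$-substitution $[b_i\mapsto\pi\mdot b_i]$ then coincides with translating $\sf\pi\mdot\sigma(X)$, because the translation preserves names and commutes with atom-permutation up to the $\alpha$-renaming of the bound atoms that $\pi$ also moves, mirroring the renaming-equals-swapping identity of Lemma~\ref{lem-swap}. Reconciling this capture-avoiding substitution with the inherently capturing nominal permutation is the main obstacle; once the freshness bookkeeping guarantees that no required renaming is silently dropped, the induction closes.
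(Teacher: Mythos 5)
Your proof is correct and follows essentially the same route as the paper: the paper dismisses this lemma as a ``straightforward generalization'' of Lemma~\ref{lem-technical2}, whose proof is exactly the structural induction with the suspension case resolved by $\beta$-reducing $\bigl(\lambda b_1\dots b_m.\trans{\sf\sigma(X)}_{\nabla_2}\bigr)(\trans{\sf\pi\mdot b_1},\dots,\trans{\sf\pi\mdot b_m})$ to a renaming and identifying it with $\trans{\sf\pi\mdot\sigma(X)}_{\nabla_2}$. You moreover supply the one genuinely new ingredient that the paper leaves implicit, namely that $\sf\nabla_2\vdash\sigma(\nabla_1)$ together with Lemma~\ref{lem-technical1}(1) guarantees the atoms omitted from the partial binder list do not occur free in $\trans{\sf\sigma(X)}_{\nabla_2}$, so the partial renaming agrees with the full permutation $\pi$.
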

\begin{lemma}\label{lem-technical4}
For any $\lambda$-substitution $\sigma$ and freshness environment $\nabla_1$ and
$\nabla_2$, if $\sigma$ is $\nabla_1\to\nabla_2$-compatible, then
$$
\trans{{\transb{\sigma}}_{\nabla_2}^{\nabla_1}}_{\nabla_2}^{\nabla_1}=\sigma
$$
\end{lemma}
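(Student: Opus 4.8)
The plan is to treat this as the two-environment generalization of the second statement of Lemma~\ref{lem-inverse}, reducing it, pointwise on $\dom(\sigma)$, to the \emph{first} statement of that lemma. First I would unfold both generalized translation functions from Definition~\ref{def-generalized-inverse}. For each $X\in\dom(\sigma)$ the back-translation yields a nominal substitution sending $\sf X$ to $\transb{\sigma(X)(b_1,\dots,b_m)}_{\nabla_2}$, where $\sf\langle b_1,\dots,b_m\rangle=\langle a\in\langle a_1,\dots,a_n\rangle\ |\ a\fresh X\not\in\nabla_1\rangle$. The key observation is that this sublist is selected from $\nabla_1$ and $X$ \emph{alone}, so when I then apply the forward translation $\trans{\ }_{\nabla_2}^{\nabla_1}$ the very same sublist $\sf\langle b_1,\dots,b_m\rangle$ reappears, giving
$$
\trans{{\transb{\sigma}}_{\nabla_2}^{\nabla_1}}_{\nabla_2}^{\nabla_1}=\bigcup_{X\in\dom(\sigma)}\Big[X\mapsto\lambda b_1,\dots,b_m.\trans{\transb{\sigma(X)(b_1,\dots,b_m)}_{\nabla_2}}_{\nabla_2}\Big].
$$

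Next I would discharge the inner composition pointwise. The hypothesis that $\sigma$ is $\nabla_1\to\nabla_2$-compatible says precisely that every $\transb{\sigma(X)(b_1,\dots,b_m)}_{\nabla_2}$ is defined, i.e. that each $\lambda$-term $\sigma(X)(b_1,\dots,b_m)$ is $\nabla_2$-compatible. The first statement of Lemma~\ref{lem-inverse} then gives $\trans{\transb{\sigma(X)(b_1,\dots,b_m)}_{\nabla_2}}_{\nabla_2}=\sigma(X)(b_1,\dots,b_m)$, so the whole expression collapses to $\bigcup_{X\in\dom(\sigma)}[X\mapsto\lambda b_1,\dots,b_m.\sigma(X)(b_1,\dots,b_m)]$, and the domains match because names are preserved in both directions.

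The one genuinely new point, and the step I expect to require the most care, is the concluding $\eta$-contraction: in Lemma~\ref{lem-inverse} one re-abstracts over \emph{all} $n$ atoms, whereas here one re-abstracts only over the $\nabla_1$-selected sublist $\sf\langle b_1,\dots,b_m\rangle$, so I must justify $\lambda b_1,\dots,b_m.\sigma(X)(b_1,\dots,b_m)=\sigma(X)$. This is exactly an arity/type bookkeeping argument: for $\sigma(X)(b_1,\dots,b_m)$ to be a well-typed $\lambda$-term (a prerequisite for its back-translation to exist) the free variable $X$ must have type $\nu_1\to\cdots\to\nu_m\to\delta$ with $b_j:\nu_j$, so $\sigma(X)$ accepts exactly these $m$ arguments and the identity holds by $\eta$-equivalence. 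With each summand thus reduced, the union is $\bigcup_{X\in\dom(\sigma)}[X\mapsto\sigma(X)]=\sigma$, which finishes the proof.
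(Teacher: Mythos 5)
Your proof is correct and follows essentially the same route as the paper, which states only that Lemma~\ref{lem-technical4} is a ``straightforward generalization'' of the second part of Lemma~\ref{lem-inverse}: unfold both generalized translations over $\dom(\sigma)$, observe that the sublist $\sf\langle b_1,\dots,b_m\rangle$ determined by $\nabla_1$ and $X$ is the same in both directions, collapse the inner composition via the first statement of Lemma~\ref{lem-inverse}, and finish by $\eta$-contraction. Your explicit justification of the final $\eta$ step via the typing of $\sigma(X)$ is exactly the bookkeeping the paper leaves implicit.
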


If a $\lambda$-substitution $\sigma_1$ is more general than another
$\sigma_2$, then there exists a substitution $\sigma_3$ that satisfies
$\sigma_2 = \sigma_3\circ\sigma_1$.  The following lemma states that
this substitution can be used to construct a nominal substitution
$\transb{\sigma_3}$ that we will use, in
Lemma~\ref{lem-monotonic-translation}, to prove that
$\transb{\sigma_1}$ is more general than $\transb{\sigma_2}$.

\begin{lemma}
\label{lem-sigma-compatibility}
For any pair of $\lambda$-substitutions $\sigma_1$ and $\sigma_2$ and freshness environments
$\nabla_1$ and $\nabla_2$, if
$\sigma_1$ is $\nabla_1$-compatible,
$\sigma_2$ is $\nabla_2$-compatible, and
$\sigma_1$ is more general than  $\sigma_2$, then there exists
a $\lambda$-substitution $\sigma_3$ such that
\begin{enumerate}
\item $\sigma_2 = \sigma_3\circ\sigma_1|_{\dom(\sigma_1)\cup\dom(\sigma_2)}$
\item $\sigma_3$ is $\nabla_1\to\nabla_2$-compatible, and
\item $\nabla_2\vdash \transb{\sigma_3}_{\nabla_2}^{\nabla_1}(\nabla_1)$.
\end{enumerate}
\end{lemma}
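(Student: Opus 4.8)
The plan is to extract a witness from the more-general hypothesis and reshape it into a clean, $\nabla_1\to\nabla_2$-compatible substitution. Since $\sigma_1$ is more general than $\sigma_2$, there is a $\lambda$-substitution $\sigma'$ with $\sigma'\circ\sigma_1(X)=\sigma_2(X)$ for all $X\in\dom(\sigma_1)\cup\dom(\sigma_2)$. I would take $\sigma_3$ to be $\sigma'$ restricted to the variables occurring free in the range of $\sigma_1$, which are the only variables on which $\sigma'$ acts when transforming $\sigma_1$ into $\sigma_2$; this already gives item~(1). A first observation is that, since $\sigma_1$ is $\nabla_1$-compatible and $\sigma_2$ is $\nabla_2$-compatible, both are pattern substitutions whose images use no bound-variable name outside $\{a_1,\dots,a_n\}$, for otherwise their back-translations would be undefined. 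Moreover, for each free variable $Z$ of $\sigma_1$, the canonical list $\langle b_1,\dots,b_m\rangle$ of atoms with $b_j\fresh Z\notin\nabla_1$ coincides, in length and sorts, with the list of distinct bound variables to which $Z$ is applied at its occurrences in $\sigma_1$.

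The crucial step is to choose $\sigma_3$ in the clean form produced by the pattern-unification algorithm of Definition~\ref{def-new-rule} together with Lemma~\ref{lem-prop-pattern}: $\sigma_3$ introduces no new bound-variable names, and every value $\sigma_3(Z)=\lambda\vec{x}.t$ has a body $t$ whose heads are only the parameters $\vec{x}$, constants, and fresh free variables; in particular $t$ has no free occurrence of any atom other than its parameters. This single property settles both remaining items. For item~(3), given $a\fresh Z\in\nabla_1$ the atom $a$ is frozen, hence $a\notin\{b_1,\dots,b_m\}$; since the only atoms free in $\sigma_3(Z)(b_1,\dots,b_m)$ are among the substituted $b_j$, we get $a\notin\FV(\sigma_3(Z)(b_1,\dots,b_m))$. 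Writing $u=\sigma_3(Z)(b_1,\dots,b_m)$ and using Lemma~\ref{lem-inverse} to identify $\trans{\transb{u}_{\nabla_2}}_{\nabla_2}$ with $u$, the first statement of Lemma~\ref{lem-technical1} yields $\nabla_2\vdash {\sf a}\fresh\transb{u}_{\nabla_2}$; as $\transb{u}_{\nabla_2}={\transb{\sigma_3}}_{\nabla_2}^{\nabla_1}(Z)$ by Definition~\ref{def-generalized-inverse}, ranging over all $a\fresh Z\in\nabla_1$ this is exactly $\nabla_2\vdash{\transb{\sigma_3}}_{\nabla_2}^{\nabla_1}(\nabla_1)$.

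For item~(2) I must check that each $u=\sigma_3(Z)(b_1,\dots,b_m)$ is $\nabla_2$-back-translatable. Because $\sigma_3(Z)=\lambda\vec{x}.t$ has no free atom beyond $\vec{x}$ and the $b_j$ are pairwise distinct, the hypothesis $\{\vec b\}\cap\FV(\lambda\vec x.t)=\emptyset$ of Lemma~\ref{lem-beta} holds, so I can evaluate $u$ by the swapping $\Pi_m(\vec x,\vec b)\mdot t$ rather than by capture-avoiding substitution. As this permutation only swaps names already in $\{a_1,\dots,a_n\}$, the term $u$ again uses no bound-variable name outside $\{a_1,\dots,a_n\}$; and since a swapping is a bijective renaming, it sends patterns to patterns, so the arguments of each free-variable occurrence in $u$ remain pairwise distinct bound variables. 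Hence $u$ is a back-translatable pattern. I expect the real difficulty to lie precisely in justifying the clean form of $\sigma_3$: namely that the more-general witness can be taken without new bound-variable names and, above all, without a free occurrence of any frozen atom in $\sigma_3(Z)$. The latter is the heart of the matter, and it is forced by the interface between $\sigma_1$ and $\sigma_2$: in $\sigma_1$ the variable $Z$ receives only the non-frozen atoms as arguments, so any instantiation of $Z$ that turns $\sigma_1$ into the atom-closed value $\sigma_2$ cannot genuinely depend on a frozen atom, exactly as the flex-rigid rule instantiates a flex variable only in terms of its own arguments and constants. Making this rigorous, while staying inside the name discipline of Section~\ref{sec-properties}, is where the bulk of the work goes.
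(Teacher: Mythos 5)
There is a genuine gap, and it sits exactly where you say ``the bulk of the work goes'': you never actually establish the $\nabla_2$-compatibility of $\sigma_3$ (item~2), and the device you propose for it does not apply. You want to take $\sigma_3$ ``in the clean form produced by the pattern-unification algorithm of Definition~\ref{def-new-rule} together with Lemma~\ref{lem-prop-pattern}'', but $\sigma_3$ is not the most general unifier of any pattern unification problem --- it is an arbitrary witness of the relation ``$\sigma_1$ is more general than $\sigma_2$'' --- so neither result gives you the asserted normal form. More importantly, even granting that $\sigma_3$ uses no bound-variable names outside $\{a_1,\dots,a_n\}$, that is not what $\nabla_2$-compatibility requires: by the definition of $\transb{\cdot}_{\nabla_2}$, every free variable $W$ occurring in $\sigma_3(Z)(b_1,\dots,b_m)$ must be applied to a list of bound variables whose length and sorts match, up to a permutation, the sublist of $\langle a_1,\dots,a_n\rangle$ of atoms not declared fresh for $W$ in $\nabla_2$. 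Your argument never mentions $\nabla_2$ at this point, and in particular never uses the hypothesis that $\sigma_2$ is $\nabla_2$-compatible --- which is the only thing tying the free variables introduced by $\sigma_3$ to $\nabla_2$. A pattern over the right names can still fail to be $\nabla_2$-back-translatable.

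The paper closes this gap differently: since $\sigma_1$ is $\nabla_1$-compatible, each $X\in\dom(\sigma_3)$ occurs in $\sigma_1$ in a subterm $X(b_1',\dots,b_m')$ with the $b_i'$ distinct and of the right types, so $\sigma_3(X)(b_1',\dots,b_m')$, suitably $\beta$-reduced, appears as a subterm of some $\sigma_2(Y)$; back-translatability w.r.t.\ $\nabla_2$ is inherited by subterms, so $\sigma_3(X)(b_1,\dots,b_m)$ is $\nabla_2$-compatible. That is the missing idea. Your argument for item~(3) is essentially recoverable (it needs only the disjointness of free-variable names and atom names, not the full ``clean form''), and your reduction of $\beta$-reduction to swappings via Lemma~\ref{lem-beta} is fine. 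Two smaller points: restricting $\sigma_3$ to the variables occurring free in the range of $\sigma_1$ is too narrow --- you must also keep the variables of $\dom(\sigma_2)\setminus\dom(\sigma_1)$ that do not occur in $\sigma_1$, or item~(1) fails for them; and the claim that the argument list of $Z$ in $\sigma_1$ ``coincides, in length and sorts'' with the $\nabla_1$-sublist is exactly the $\nabla_1$-compatibility hypothesis, so it is fine, but it is the analogous statement for $\nabla_2$ that you needed and did not prove.
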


\begin{proof}
  The first conclusion is a consequence of $\sigma_1$ is more general
  than $\sigma_2$. However, w.l.o.g. we take a $\sigma_3$ that only
  instantiates variables occurring in $\sigma_1$ or belonging
  to $\dom(\sigma_2)$.

  For all $X\in\dom(\sigma_3)$, let $\sf\langle b_1,\dots,b_m\rangle =
  \langle a_i\ |\ a_i\#X\not\in\nabla_1\rangle$, where $\sf\langle
  a_1,\dots,a_n\rangle$ is the fixed list of atom names. Now, $X$
  occurs in $\sigma_1$ or $X\in\dom(\sigma_2)$.  In the first case,
  since $\sigma_1$ is $\nabla_1$-compatible and we are dealing with
  higher-order pattern substitutions, $X$ occurs in $\sigma_1$ in (at
  least one) subterm of the form $X(b'_1,\dots,b'_m)$, where $b_i'$
  are distinct bound variables with names in $\sf\langle
  a_1,\dots,a_n\rangle$, and $b_i$ and $b_i'$ have the same
  type. Moreover, $\sigma_3(X)(b_1',\dots,b_m')$, conveniently
  $\beta$-reduced, is a subterm of some $\sigma_2(Y)$, for some
  $Y\in\dom(\sigma_2)$. In the second case, if $X\in\dom(\sigma_2)$,
  we also have this property. Therefore, since $\sigma_2$ is
  $\nabla_2$-compatible, we have that
  $\sigma_3(X)(b_1',\dots,b_m')$, and hence
  $\sigma_3(X)(b_1,\dots,b_m)$ is $\nabla_2$-compatible.  Therefore,
  $\transb{\sigma_3}_{\nabla_2}^{\nabla_1}=\bigcup_{X\in\dom(\sigma_3)}[{\sf
    X}\mapsto\transb{\sigma_3(X)(b_1,\dots,b_m)}]$ exists, and
  $\sigma_3$ is $\nabla_1\to\nabla_2$-compatible.

  Let be $\sf b\#X\in\nabla_1$.  The free variable names of
  $\sigma_3(X)$ and $\sf\langle a_1,\dots,a_n\rangle$ are
  disjoint. Therefore, $b\not\in\FV(\sigma_3(X)(b_1,\dots,b_m))$,
  where $\sf\langle b_1,\dots,b_m\rangle = \langle a_i\ |\
  a_i\#X\not\in\nabla_1\rangle$. By Lemma~\ref{lem-inverse}, since
  $\sigma_3(X)(b_1,\dots,b_m)$ is $\nabla_2$-compatible, we have
  $b\not\in\FV\left(\trans{\transb{\sigma_3(X)(b_1,\dots,b_m)}_{\nabla_2}}_{\nabla_2}\right)$. By
  Lemma~\ref{lem-technical1}, ${\sf \nabla_2\vdash
    b}\#\transb{\sigma_3(X)(b_1,\dots,b_m)}_{\nabla_2}$.  By
  Definition~\ref{def-generalized-inverse}, ${\sf \nabla_2\vdash
    b}\#\transb{\sigma_3}_{\nabla_2}^{\nabla_1}({\sf X})$.  Therefore,
  we have $\sf
  \nabla_2\vdash\transb{\sigma_3}_{\nabla_2}^{\nabla_1}(\nabla_1)$.
\end{proof}

The following lemma ensures that the translation and reverse
translation of substitutions is monotone w.r.t. the more generality
relation.

\begin{lemma}\label{lem-monotonic-translation}
  For every nominal unification problem $\sf P$ and pair of unifiers
  $\langle\nabla_1,\sigma_1\rangle$ and
  $\langle\nabla_2,\sigma_2\rangle$, satisfying
  $\Vars({\sf P})\subseteq\dom(\sigma_1)\subseteq\dom(\sigma_2)$, we have
  $\pair{\nabla_1}{\sigma_1}$ is more general than
  $\pair{\nabla_2}{\sigma_2}$, if, and only if,
  $\trans{\sigma_1}_{\nabla_1}$ is more general than
  $\trans{\sigma_2}_{\nabla_2}$.
\end{lemma}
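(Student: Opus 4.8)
The plan is to prove the two implications of the biconditional separately, in each direction obtaining the required connecting substitution as a suitable generalized (back-)translation (Definition~\ref{def-generalized-inverse}) of the witness supplied by the hypothesis, and then pushing everything through the correspondence Lemmas~\ref{lem-technical1}, \ref{lem-technical3} and~\ref{lem-technical4}.

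For the left-to-right implication, suppose $\pair{\nabla_1}{\sigma_1}$ is more general than $\pair{\nabla_2}{\sigma_2}$, witnessed by a nominal substitution $\sigma'$ with $\nabla_2\vdash\sigma'(\nabla_1)$ and $\nabla_2\vdash\sigma'\circ\sigma_1|_{\dom(\sigma_1)\cup\dom(\sigma_2)}\approx\sigma_2$; extending $\sigma'$ by identity pairs as in Remark~\ref{rem-domain}, I may assume $\Vars(\sigma_1(X))\subseteq\dom(\sigma')$ for every relevant $X$. I would then propose $\trans{\sigma'}_{\nabla_2}^{\nabla_1}$ as the witness that $\trans{\sigma_1}_{\nabla_1}$ is more general than $\trans{\sigma_2}_{\nabla_2}$. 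For $X\in\dom(\sigma_1)$, since $\trans{\sigma'}_{\nabla_2}^{\nabla_1}$ fixes the atoms $a_1,\dots,a_n$ it may be pushed under the external binders, and Lemma~\ref{lem-technical3} (using $\nabla_2\vdash\sigma'(\nabla_1)$) gives $\trans{\sigma'}_{\nabla_2}^{\nabla_1}\bigl(\trans{\sigma_1}_{\nabla_1}(X)\bigr)=\lambda a_1.\cdots.\lambda a_n.\trans{\sigma'(\sigma_1(X))}_{\nabla_2}$. From $\nabla_2\vdash(\sigma'\circ\sigma_1)(X)\approx\sigma_2(X)$, Lemma~\ref{lem-technical1} yields $\trans{\sigma'(\sigma_1(X))}_{\nabla_2}=_\alpha\trans{\sigma_2(X)}_{\nabla_2}$, so the composite agrees with $\trans{\sigma_2}_{\nabla_2}(X)$ up to $\alpha$, as required.

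For the converse, assume $\trans{\sigma_1}_{\nabla_1}$ is more general than $\trans{\sigma_2}_{\nabla_2}$. Being translations of nominal substitutions, these two $\lambda$-substitutions are $\nabla_1$- and $\nabla_2$-compatible respectively, so I may invoke Lemma~\ref{lem-sigma-compatibility} to extract a $\lambda$-substitution $\sigma_3$ that is $\nabla_1\to\nabla_2$-compatible and satisfies both $\trans{\sigma_2}_{\nabla_2}=\sigma_3\circ\trans{\sigma_1}_{\nabla_1}|_{\dom(\trans{\sigma_1}_{\nabla_1})\cup\dom(\trans{\sigma_2}_{\nabla_2})}$ and $\nabla_2\vdash{\transb{\sigma_3}}_{\nabla_2}^{\nabla_1}(\nabla_1)$. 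I would set $\sigma'={\transb{\sigma_3}}_{\nabla_2}^{\nabla_1}$, which is defined by compatibility; its third property is precisely the freshness clause $\nabla_2\vdash\sigma'(\nabla_1)$ of nominal more-generality. For the equivalence clause I translate forward again: Lemma~\ref{lem-technical3} gives $\trans{\sigma'(\sigma_1(X))}_{\nabla_2}=\trans{\sigma'}_{\nabla_2}^{\nabla_1}\bigl(\trans{\sigma_1(X)}_{\nabla_1}\bigr)$, and Lemma~\ref{lem-technical4} identifies $\trans{\sigma'}_{\nabla_2}^{\nabla_1}=\trans{{\transb{\sigma_3}}_{\nabla_2}^{\nabla_1}}_{\nabla_2}^{\nabla_1}=\sigma_3$, so that instantiating $\sigma_3$ at $\trans{\sigma_1(X)}_{\nabla_1}$ and reading off the first property of $\sigma_3$ at the atoms $a_1,\dots,a_n$ yields $\trans{\sigma'(\sigma_1(X))}_{\nabla_2}=_\alpha\trans{\sigma_2(X)}_{\nabla_2}$. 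Lemma~\ref{lem-technical1} then delivers $\nabla_2\vdash(\sigma'\circ\sigma_1)(X)\approx\sigma_2(X)$, completing this direction.

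I expect the converse to be the main obstacle, and within it the crux is that an arbitrary $\lambda$-substitution witnessing higher-order more-generality need not be back-translatable nor respect $\nabla_1$; it is exactly Lemma~\ref{lem-sigma-compatibility} (which exploits that, in a pattern, every free variable of $\trans{\sigma_1}_{\nabla_1}$ occurs applied to distinct bound variables) that furnishes a well-chosen $\sigma_3$ repairing both defects at once. A secondary point requiring care is the domain bookkeeping when $\dom(\sigma_1)\subsetneq\dom(\sigma_2)$: the variables of $\dom(\sigma_2)\setminus\dom(\sigma_1)$, which lie outside $\Vars({\sf P})$ and are untouched by $\sigma_1$, must be incorporated into the witness by letting it agree with $\trans{\sigma_2}_{\nabla_2}$ (resp.~$\sigma_2$) on them, and the restricted domains in the two notions of more-generality must be matched; the hypothesis $\Vars({\sf P})\subseteq\dom(\sigma_1)\subseteq\dom(\sigma_2)$ together with identity-extension keeps this routine.
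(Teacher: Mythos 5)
Your proposal is correct and follows essentially the same route as the paper: the forward direction pushes the nominal witness through the generalized translation $\trans{\sigma'}_{\nabla_2}^{\nabla_1}$ via Lemmas~\ref{lem-technical1} and~\ref{lem-technical3}, and the converse uses Lemma~\ref{lem-sigma-compatibility} to obtain a $\nabla_1\to\nabla_2$-compatible witness, back-translates it, and closes the loop with Lemmas~\ref{lem-technical3}, \ref{lem-technical4} and~\ref{lem-technical1}. If anything, you are slightly more careful than the paper in the converse, since you explicitly extract the compatible witness from Lemma~\ref{lem-sigma-compatibility} rather than asserting compatibility of an arbitrary one.
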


\begin{proof}\parindent 0mm
  $\Rightarrow$) By Theorem~\ref{the-trans-solution}, both
  $\trans{\sigma_1}_{\nabla_1}$ and
  $\trans{\sigma_2}_{\nabla_2}$ are solutions of
  $\sf\trans{P}$.
  If $\sf\langle\nabla_1,\sigma_1\rangle$ is more general than
  $\sf\langle\nabla_2,\sigma_2\rangle$, then there exists a nominal
  substitution $\sigma'$ such that $\sf\nabla_2\vdash \sigma'(\nabla_1)$
  and $\sf\nabla_2\vdash \sigma'\circ\sigma_1|_{\dom(\sigma_1)\cup\dom(\sigma_2)}\approx
  \sigma_2$. 
  For all $\sf X\in\dom(\sigma_2)$, we have $\nabla_2\vdash
  \sigma'(\sigma_1(X))\approx \sigma_2(X)$.
  By Lemma~\ref{lem-technical1},
  $\sf\trans{\sigma'(\sigma_1(X))}_{\nabla_2} =_\alpha
  \trans{\sigma_2(X)}_{\nabla_2}$.
  By Lemma~\ref{lem-technical3}, 
  $\sf\trans{\sigma'}_{\nabla_2}^{\nabla_1}(\trans{\sigma_1(X)}_{\nabla_1})
  =_\alpha \trans{\sigma_2(X)}_{\nabla_2}$.
  By Lemma~\ref{lem-technical2}, 
  $\sf\trans{\sigma'}_{\nabla_2}^{\nabla_1}(\trans{\sigma_1}_{\nabla_1}(\trans{X}_\emptyset))
  =_\alpha \trans{\sigma_2}_{\nabla_2}(\trans{X}_\emptyset)$.
  Since $\trans{\sf X}_\emptyset=X(a_1,\dots,a_n)$ and $a_i$ will be
  distinct free variables, we have
  $$
  \trans{\sigma_2}_{\nabla_2} (X) =
  \trans{\sigma'}_{\nabla_2}^{\nabla_1}\circ\trans{\sigma_1}_{\nabla_1}
  (X), \mbox{\hspace{10mm}for all
    $X\in\dom(\trans{\sigma_2}_{\nabla_2})$}
  $$
  Therefore, $\trans{\sigma_1}_{\nabla_1}$ is more general than
  $\trans{\sigma_2}_{\nabla_2}$.

  $\Leftarrow$) There exists a $\lambda$-substitution $\sigma'$ such
  that $\trans{\sigma_2}_{\nabla_2} =
  \sigma'\circ\trans{\sigma_1}_{\nabla_1}|_{\dom(\sigma_1)\cup\dom(\sigma_2)}$.
  By Lemma~\ref{lem-sigma-compatibility}, $\sigma'$ is
  $\nabla_1\to\nabla_2$-compatible.  Hence, it exists the nominal
  substitution $\sigma''=\transb{\sigma'}_{\nabla_2}^{\nabla_1}$.  For
  any $\sf X\in\dom(\sigma_2)$, by Lemmas~\ref{lem-technical3}
  and~\ref{lem-technical4}, we have
  $\trans{\sigma''\big(\sigma_1(X)\big)}_{\nabla_2} =
  \trans{\sigma''}_{\nabla_2}^{\nabla_1}\big(\trans{\sigma_1}_{\nabla_1}^{\emptyset}(\trans{\sf
    X}_\emptyset)\big) =
  \sigma'\big(\trans{\sigma_1}_{\nabla_1}^{\emptyset}(\trans{\sf
    X}_\emptyset)\big) =
  \trans{\sigma_2}_{\nabla_2}^{\emptyset}(\trans{\sf X}_\emptyset)=
  \trans{\sigma_2(\sf X)}_{\nabla_2}$. By Lemma~\ref{lem-technical1},
  we have $\sf \nabla_2\vdash \sigma''\big(\sigma_1(X)\big)
  \approx\sigma_2(X)$.  Therefore, $\sf \nabla_2\vdash
  \sigma''\circ\sigma_1|_{\dom(\sigma_1)\cup\dom(\sigma_2)}\approx
  \sigma_2$.  By Lemma~\ref{lem-sigma-compatibility}, we also have
  $\nabla_2\vdash \sigma''(\nabla_1)$.  From both facts, we conclude
  that $\sigma_1$ is more general than $\sigma_2$.
\end{proof}

\begin{corollary}
Most general nominal unifiers are unique.
\end{corollary}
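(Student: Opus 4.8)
The plan is to transfer the uniqueness of most general higher-order pattern unifiers, which is available from Subsection~\ref{sec-pattern}, to the nominal side through the monotone correspondence of Lemma~\ref{lem-monotonic-translation}. Throughout I would restrict attention to solutions whose domain is exactly $\sf\Vars(P)$. By Remark~\ref{rem-domain} this is harmless for most general unifiers, and it ensures that whenever I compare two such solutions the domain-nesting hypothesis $\sf\Vars(P)\subseteq\dom(\sigma_1)\subseteq\dom(\sigma_2)$ of Lemma~\ref{lem-monotonic-translation} is satisfied (with equalities).

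The first step is to produce a canonical nominal unifier that is more general than every other one. Since $\sf P$ is solvable, $\trans{\sf P}$ is solvable by Theorem~\ref{the-one-direction}, hence it has a most general pattern unifier $\mu$, unique up to equivalence. By Lemma~\ref{lem-existence} there is a freshness environment $\nabla$ making $\mu$ $\nabla$-compatible, and by Theorem~\ref{the-other-direction} the pair $M=\pair{\nabla}{\transb{\mu}_\nabla}$ is a nominal solution of $\sf P$; moreover Lemma~\ref{lem-inverse} gives $\trans{\transb{\mu}_\nabla}_\nabla=\mu$. I would then check that $M$ is more general than an arbitrary nominal unifier $\pair{\nabla'}{\sigma'}$ of $\sf P$ (normalized so that $\sf\dom(\sigma')=\Vars(P)$): by Theorem~\ref{the-trans-solution} the translation $\trans{\sigma'}_{\nabla'}$ is a pattern unifier of $\trans{\sf P}$, so the most general pattern unifier $\mu=\trans{\transb{\mu}_\nabla}_\nabla$ is more general than it, and applying the $(\Leftarrow)$ direction of Lemma~\ref{lem-monotonic-translation} yields that $M$ is more general than $\pair{\nabla'}{\sigma'}$.

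To conclude, let $\pair{\nabla_1}{\sigma_1}$ and $\pair{\nabla_2}{\sigma_2}$ be any two most general nominal unifiers. By the previous step $M$ is more general than each of them. Since each $\pair{\nabla_i}{\sigma_i}$ is most general, the very definition of most general (taking $M$ as the ``other solution more general than $\pair{\nabla_i}{\sigma_i}$'') forces $\pair{\nabla_i}{\sigma_i}$ to be more general than $M$ as well, so both are equivalent to $M$. Transitivity of the more general relation, witnessed by composing the two connecting substitutions, then gives that $\pair{\nabla_1}{\sigma_1}$ and $\pair{\nabla_2}{\sigma_2}$ are each more general than the other, which is exactly the asserted uniqueness.

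The main obstacle I expect is the bookkeeping around the hypotheses of Lemma~\ref{lem-monotonic-translation}, which is stated only for nested domains. Every candidate solution must first be normalized to domain $\sf\Vars(P)$ via Remark~\ref{rem-domain} before any comparison, and I must keep in mind that the most general pattern unifier $\mu$, the freshness environment $\nabla$ produced by Lemma~\ref{lem-existence}, and hence the back-translation $M$, are only determined up to equivalence and choice. None of these choices affects the argument, since all most general pattern unifiers are mutually more general, but this independence has to be noted explicitly rather than glossed over.
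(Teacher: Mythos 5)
Your proposal is correct and takes essentially the same route as the paper, whose entire proof is to invoke the uniqueness of most general higher-order pattern unifiers together with Lemma~\ref{lem-monotonic-translation}. Your write-up simply makes explicit the details the paper leaves implicit: normalizing domains to $\sf\Vars(P)$, constructing the canonical back-translated unifier from the most general pattern unifier via Lemmas~\ref{lem-existence} and~\ref{lem-inverse}, and applying the $(\Leftarrow)$ direction of Lemma~\ref{lem-monotonic-translation} before closing with the definition of ``most general''.
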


\begin{proof}
It is a direct consequence of uniqueness of most general higher-order
pattern unifiers and Lemma~\ref{lem-monotonic-translation}.
\end{proof}

Finally we can conclude that the translations preserve most generality.

\begin{theorem}
  For any nominal problem $\sf P$ and nominal solution
  $\pair\nabla\sigma$, satisfying $\sf\Vars(P)\subseteq\dom(\sigma)$,
  $\pair\nabla\sigma$ is a most general unifier if, and only if,
  $\trans{\sigma}_\nabla$ is a most general unifier of
  $\sf\trans{P}$.
\end{theorem}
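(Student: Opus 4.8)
The plan is to reduce the statement to the monotonicity of the translation proved in Lemma~\ref{lem-monotonic-translation}, together with the fact that both nominal and higher-order pattern unification are unitary. For a unitary theory a solution is most general precisely when it is more general than every solution, and any two most general solutions are equivalent (mutually more general); I use this characterization freely. By Theorem~\ref{the-trans-solution}, the hypothesis $\sf\Vars(P)\subseteq\dom(\sigma)$ already ensures that $\trans{\sigma}_\nabla$ is a unifier of $\trans{\sf P}$ whenever $\pair{\nabla}{\sigma}$ solves $\sf P$, so in both directions only most generality is at issue. First I would normalize, using Remark~\ref{rem-domain}, so that $\sf\dom(\sigma)=\Vars(P)$, and likewise take every comparison solution to have domain $\sf\Vars(P)$; this is harmless for most generality and makes the domain-nesting hypothesis of Lemma~\ref{lem-monotonic-translation} hold trivially in every application. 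The remaining ingredient is that the unique (up to equivalence) most general pattern unifier $\tau$ of $\trans{\sf P}$ can be written as $\tau=\trans{\sigma'}_{\nabla'}$ for some nominal solution $\pair{\nabla'}{\sigma'}$ of $\sf P$ with $\sf\dom(\sigma')=\Vars(P)$: choose $\nabla'$ so that $\tau$ is $\nabla'$-compatible (Lemma~\ref{lem-existence}), put $\sigma'=\transb{\tau}_{\nabla'}$, so that $\trans{\sigma'}_{\nabla'}=\tau$ by Lemma~\ref{lem-inverse}, and note that $\pair{\nabla'}{\sigma'}$ solves $\sf P$ by Theorem~\ref{the-trans-solution} (this is exactly the construction used in Theorem~\ref{the-other-direction}).

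\textbf{($\Rightarrow$)} Suppose $\pair{\nabla}{\sigma}$ is most general, and let $\tau=\trans{\sigma'}_{\nabla'}$ be the most general pattern unifier as above. Since $\tau$ is most general it is more general than $\trans{\sigma}_\nabla$; by the $\Leftarrow$ part of Lemma~\ref{lem-monotonic-translation} this gives that $\pair{\nabla'}{\sigma'}$ is more general than $\pair{\nabla}{\sigma}$. Most generality of $\pair{\nabla}{\sigma}$ then forces the converse, $\pair{\nabla}{\sigma}$ more general than $\pair{\nabla'}{\sigma'}$, and the $\Rightarrow$ part of Lemma~\ref{lem-monotonic-translation} turns this into $\trans{\sigma}_\nabla$ being more general than $\tau$. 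Hence $\trans{\sigma}_\nabla$ is equivalent to the most general pattern unifier $\tau$, so it is itself most general.

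\textbf{($\Leftarrow$)} Suppose $\trans{\sigma}_\nabla$ is most general. Let $\pair{\nabla'}{\sigma'}$ be any nominal solution of $\sf P$ more general than $\pair{\nabla}{\sigma}$, normalized so that $\sf\dom(\sigma')=\Vars(P)$. By the $\Rightarrow$ part of Lemma~\ref{lem-monotonic-translation}, $\trans{\sigma'}_{\nabla'}$ is more general than $\trans{\sigma}_\nabla$; since the latter is most general, it is in turn more general than $\trans{\sigma'}_{\nabla'}$, and the $\Leftarrow$ part of Lemma~\ref{lem-monotonic-translation} returns $\pair{\nabla}{\sigma}$ more general than $\pair{\nabla'}{\sigma'}$. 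As $\pair{\nabla'}{\sigma'}$ was arbitrary, $\pair{\nabla}{\sigma}$ is most general.

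The step I expect to be most delicate is the domain bookkeeping: Lemma~\ref{lem-monotonic-translation} requires the nesting $\sf\Vars(P)\subseteq\dom(\sigma_1)\subseteq\dom(\sigma_2)$, and it must be applied in \emph{both} orientations within each direction, with the more general unifier always playing the role of the smaller domain $\sigma_1$. This is precisely why the initial normalization to $\sf\dom(\sigma)=\Vars(P)$ (and likewise for all comparison solutions), justified by Remark~\ref{rem-domain}, is essential rather than cosmetic: without it the two orientations cannot both be invoked, since a strictly larger $\sf\dom(\sigma)$ would break one of the two inclusions.
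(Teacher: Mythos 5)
Your proposal is correct and follows essentially the same route as the paper's proof: both directions rest on Lemma~\ref{lem-existence} together with Lemma~\ref{lem-inverse} to realize the most general pattern unifier of $\trans{\sf P}$ as a translation $\trans{\sigma'}_{\nabla'}$, and then apply Lemma~\ref{lem-monotonic-translation} in both orientations (the paper phrases each direction as a proof by contradiction, yours is direct, but the content is the same). Your explicit normalization of domains to $\sf\Vars(P)$ is a reasonable way to make the nesting hypothesis of Lemma~\ref{lem-monotonic-translation} usable in both orientations, a point the paper leaves implicit.
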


\begin{proof}
  $\Rightarrow$) Suppose that $\pair\nabla\sigma$ is a most general
  nominal unifier of $\sf P$, but $\trans{\sigma}_\nabla$ is not a
  most general pattern unifier of $\sf\trans{P}$.  By
  Theorem~\ref{the-trans-solution}, $\trans{\sigma}_\nabla$ is a
  solution of $\sf\trans{P}$. Since most general higher-order pattern
  unifiers are unique, and by Lemma~\ref{lem-existence}, there exists
  a most general pattern unifier $\sigma'$ of $\sf\trans{P}$
  \emph{strictly} more general than $\trans{\sigma}_\nabla$ and
  such that $\trans{\sigma'}^{-1}$ exists.  By
  Lemma~\ref{lem-inverse},
  $\trans{\trans{\sigma'}^{-1}}=\sigma'$. Since we assume that
  $\pair\nabla\sigma$ is most general and nominal most general
  unifiers are also unique, we have that $\pair\nabla\sigma$ is more
  general than $\trans{\sigma'}^{-1}$. Hence, by
  Lemma~\ref{lem-monotonic-translation}, $\trans{\sigma}_\nabla$
  is more general than $\trans{\trans{\sigma'}^{-1}}=\sigma'$, which
  contradicts that $\sigma'$ is strictly more general than
  $\trans{\sigma}_\nabla$.

  $\Leftarrow$) Suppose that $\trans{\sigma}_\nabla$ is most general,
  and $\pair\nabla\sigma$ is not.  Then, there exists a most general
  unifier $\pair{\nabla'}{\sigma'}$ such that $\pair{\nabla}{\sigma}$
  is not more general than $\pair{\nabla'}{\sigma'}$. On the other
  hand, since $\trans{\sigma}_\nabla$ is most general, it is more
  general than $\trans{\sigma'}_{\nabla'}$. Hence, by
  Lemma~\ref{lem-monotonic-translation}, $\pair\nabla\sigma$ is more
  general than $\pair{\nabla'}{\sigma'}$. This contradicts the initial
  assumption. Therefore, if $\trans{\sigma}_\nabla$ is most general,
  then $\pair\nabla\sigma$ must be most general.
\end{proof}

\section{Conclusions}
\label{sec-conclusions}

The paper describes a precise quadratic reduction from Nominal
Unification to Higher-Order Pattern Unification. This helps to better
understand the semantics of the nominal binding and permutations in
comparison with $\lambda$-binding and $\alpha$-conversion.  Moreover,
using the result of linear time decidability for Higher-Order Patterns
Unification~\cite{Qia96}, we prove that Nominal Unification can be decided in
quadratic time.

\bibliographystyle{acmtrans}

\begin{thebibliography}{}

\bibitem[\protect\citeauthoryear{Calv{\`e}s}{Calv{\`e}s}{2010}]{Cal10}
{\sc Calv{\`e}s, C.} 2010.
\newblock Complexity and implementation of nominal algorithms.
\newblock Ph.D. thesis, King's College London.

\bibitem[\protect\citeauthoryear{Calv{\`e}s and Fern{\'a}ndez}{Calv{\`e}s and
  Fern{\'a}ndez}{2007}]{CF07}
{\sc Calv{\`e}s, C.} {\sc and} {\sc Fern{\'a}ndez, M.} 2007.
\newblock Implementing nominal unification.
\newblock {\em ENTCS\/}~{\em 176,\/}~1, 25--37.

\bibitem[\protect\citeauthoryear{Calv{\`e}s and Fern{\'a}ndez}{Calv{\`e}s and
  Fern{\'a}ndez}{2008}]{CF08}
{\sc Calv{\`e}s, C.} {\sc and} {\sc Fern{\'a}ndez, M.} 2008.
\newblock A polynomial nominal unification algorithm.
\newblock {\em Theoretical Computer Science\/}~{\em 403,\/}~2-3, 285--306.

\bibitem[\protect\citeauthoryear{Cheney}{Cheney}{2005a}]{CheRTA05}
{\sc Cheney, J.} 2005a.
\newblock Equivariant unification.
\newblock In {\em Proc. of the 16th Int. Conf. on Rewritting Techniques and
  Applications, RTA'05}. Lecture Notes in Computer Science, vol. 3467.
  Springer, 74--89.

\bibitem[\protect\citeauthoryear{Cheney}{Cheney}{2005b}]{CheUNIF05}
{\sc Cheney, J.} 2005b.
\newblock Relating higher-order pattern unification and nominal unification.
\newblock In {\em Proc. of the 19th Int.~Work. on Unification, UNIF'05}.
  104--119.

\bibitem[\protect\citeauthoryear{Cheney and Urban}{Cheney and
  Urban}{2004}]{CU04}
{\sc Cheney, J.} {\sc and} {\sc Urban, C.} 2004.
\newblock $\alpha$-prolog: A logic programming language with names, binding and
  $\alpha$-equivalence.
\newblock In {\em Proc. of the 20th Int.~Conf.~on Logic Programming,ICLP'04}.
  LNCS, vol. 3132. 269--283.

\bibitem[\protect\citeauthoryear{Clouston and Pitts}{Clouston and
  Pitts}{2007}]{CP07}
{\sc Clouston, R.~A.} {\sc and} {\sc Pitts, A.~M.} 2007.
\newblock Nominal equational logic.
\newblock {\em ENTCS\/}~{\em 1496}, 223--257.

\bibitem[\protect\citeauthoryear{Dowek}{Dowek}{2001}]{dowek}
{\sc Dowek, G.} 2001.
\newblock Higher-order unification and matching.
\newblock In {\em Handbook of automated reasoning}. 1009--1062.

\bibitem[\protect\citeauthoryear{Dowek, Gabbay, and Mulligan}{Dowek
  et~al\mbox{.}}{2010}]{DGM10}
{\sc Dowek, G.}, {\sc Gabbay, M.}, {\sc and} {\sc Mulligan, D.} 2010.
\newblock Permissive nominal terms and their unification.
\newblock {\em Logic Journal of the IGPL\/}.

\bibitem[\protect\citeauthoryear{Dowek, Gabbay, and Mulligan}{Dowek
  et~al\mbox{.}}{2009}]{DGM09}
{\sc Dowek, G.}, {\sc Gabbay, M.~J.}, {\sc and} {\sc Mulligan, D.} 2009.
\newblock Permissive nominal terms and their unification.
\newblock In {\em Proc. of the 24th Italian Conf. on Computational Logic,
  CILC'09}.

\bibitem[\protect\citeauthoryear{Fern{\'a}ndez and Gabbay}{Fern{\'a}ndez and
  Gabbay}{2005}]{FG05}
{\sc Fern{\'a}ndez, M.} {\sc and} {\sc Gabbay, M.} 2005.
\newblock Nominal rewriting with name generation: abstraction vs. locality.
\newblock In {\em Proc.~of the 7th Int.~Conf.~on Principles and Practice of
  Declarative Programming, PPDP'05}. 47--58.

\bibitem[\protect\citeauthoryear{Fern{\'a}ndez and Gabbay}{Fern{\'a}ndez and
  Gabbay}{2007}]{FG07}
{\sc Fern{\'a}ndez, M.} {\sc and} {\sc Gabbay, M.} 2007.
\newblock Nominal rewriting.
\newblock {\em Information and Computation\/}~{\em 205,\/}~6, 917--965.

\bibitem[\protect\citeauthoryear{Gabbay and Mathijssen}{Gabbay and
  Mathijssen}{2006}]{GM06}
{\sc Gabbay, M.} {\sc and} {\sc Mathijssen, A.} 2006.
\newblock Nominal algebra.
\newblock In {\em Proc. of the 18$^th$ Nordic Workshop on Programming Theory,
  NWPT'06}.

\bibitem[\protect\citeauthoryear{Gabbay and Mathijssen}{Gabbay and
  Mathijssen}{2007}]{GM07}
{\sc Gabbay, M.} {\sc and} {\sc Mathijssen, A.} 2007.
\newblock A formal calculus for informal equality with binding.
\newblock In {\em Logic, Language, Information and Computation}. LNCS, vol.
  4576. Springer, 162--176.

\bibitem[\protect\citeauthoryear{Gabbay and Mathijssen}{Gabbay and
  Mathijssen}{2009}]{GM09}
{\sc Gabbay, M.} {\sc and} {\sc Mathijssen, A.} 2009.
\newblock Nominal (universal) algebra: equational logic with names and binding.
\newblock {\em Journal of Logic and Computation\/}~{\em 19,\/}~6, 1455--1508.

\bibitem[\protect\citeauthoryear{Gabbay and Pitts}{Gabbay and
  Pitts}{2001}]{GP01}
{\sc Gabbay, M.} {\sc and} {\sc Pitts, A.} 2001.
\newblock A new approach to abstract syntax with variable binding.
\newblock {\em Formal Aspects of Computing\/}~{\em 13,\/}~3--5, 341--363.

\bibitem[\protect\citeauthoryear{Gabbay and Pitts}{Gabbay and
  Pitts}{1999}]{GP99}
{\sc Gabbay, M.} {\sc and} {\sc Pitts, A.~M.} 1999.
\newblock A new approach to abstract syntax involving binders.
\newblock In {\em Proc.~of the 14th Symp.~on Logic in Computer Science,
  LICS'99}. 214--224.

\bibitem[\protect\citeauthoryear{Goldfarb}{Goldfarb}{1981}]{Gol81}
{\sc Goldfarb, W.~D.} 1981.
\newblock The undecidability of the second-order unification problem.
\newblock {\em Theoretical Computer Science\/}~{\em 13}, 225--230.

\bibitem[\protect\citeauthoryear{Levy}{Levy}{1998}]{Lev98}
{\sc Levy, J.} 1998.
\newblock Decidable and undecidable second-order unification problems.
\newblock In {\em Proc.~of the 9th Int.~Conf.~on Rewriting Techniques and
  Applications, RTA'98}. LNCS, vol. 1379. 47--60.

\bibitem[\protect\citeauthoryear{Levy and Veanes}{Levy and Veanes}{2000}]{LV00}
{\sc Levy, J.} {\sc and} {\sc Veanes, M.} 2000.
\newblock On the undecidability of second-order unification.
\newblock {\em Information and Computation\/}~{\em 159}, 125--150.

\bibitem[\protect\citeauthoryear{Levy and Villaret}{Levy and
  Villaret}{2008}]{LV08}
{\sc Levy, J.} {\sc and} {\sc Villaret, M.} 2008.
\newblock Nominal unification from a higher-order perspective.
\newblock In {\em Proc. of the 19th Int. Conf on Rewriting Techniques and
  Applications, RTA'08}. LNCS, vol. 5117. Springer, 246--260.

\bibitem[\protect\citeauthoryear{Levy and Villaret}{Levy and
  Villaret}{2010}]{LV10}
{\sc Levy, J.} {\sc and} {\sc Villaret, M.} 2010.
\newblock An efficient nominal unification algorithm.
\newblock In {\em Proc. of the 21th Int. Conf on Rewriting Techniques and
  Applications, RTA'10}. LNCS. Springer.

\bibitem[\protect\citeauthoryear{Lucchesi}{Lucchesi}{1972}]{Luc72}
{\sc Lucchesi, C.~L.} 1972.
\newblock The undecidability of the unification problem for third-order
  languages.
\newblock Tech. Rep. CSRR 2059, Dept. of Applied Analysis and Computer Science,
  Univ. of Waterloo.

\bibitem[\protect\citeauthoryear{Miller}{Miller}{1991}]{Mil91}
{\sc Miller, D.} 1991.
\newblock A logic programming language with lambda-abstraction, function
  variables, and simple unification.
\newblock {\em J. of Logic and Computation\/}~{\em 1,\/}~4, 497--536.

\bibitem[\protect\citeauthoryear{Nipkow}{Nipkow}{1993}]{Nip93}
{\sc Nipkow, T.} 1993.
\newblock Functional unification of higher-order patterns.
\newblock In {\em Proc. of the 8th Symp.~on Logic in Computer Science,
  LICS'93}. 64--74.

\bibitem[\protect\citeauthoryear{Paterson and Wegman}{Paterson and
  Wegman}{1978}]{PW78}
{\sc Paterson, M.} {\sc and} {\sc Wegman, M.~N.} 1978.
\newblock Linear unification.
\newblock {\em J. Comput. Syst. Sci.\/}~{\em 16,\/}~2, 158--167.

\bibitem[\protect\citeauthoryear{Pitts}{Pitts}{2001}]{Pit01}
{\sc Pitts, A.~M.} 2001.
\newblock Nominal logic: A first order theory of names and binding.
\newblock In {\em Proc. of the 4th Int.~Symp.~on Theoretical Aspects of
  Computer Software, TACS'01}. LNCS, vol. 2215. 219--242.

\bibitem[\protect\citeauthoryear{Pitts}{Pitts}{2003}]{Pitts03}
{\sc Pitts, A.~M.} 2003.
\newblock Nominal logic, a first order theory of names and binding.
\newblock {\em Information and Computation\/}~{\em 186}, 165--193.

\bibitem[\protect\citeauthoryear{Qian}{Qian}{1996}]{Qia96}
{\sc Qian, Z.} 1996.
\newblock Unification of higher-order patterns in linear time and space.
\newblock {\em J. of Logic and Computation\/}~{\em 6,\/}~3, 315--341.

\bibitem[\protect\citeauthoryear{Urban and Cheney}{Urban and
  Cheney}{2005}]{UC05}
{\sc Urban, C.} {\sc and} {\sc Cheney, J.} 2005.
\newblock Avoiding equivariance in alpha-prolog.
\newblock In {\em Proc. of the Int.~Conf.~on Typed Lambda Calculus and
  Applications, TLCA'05}. LNCS, vol. 3461. 401--416.

\bibitem[\protect\citeauthoryear{Urban, Pitts, and Gabbay}{Urban
  et~al\mbox{.}}{2003}]{UPG03}
{\sc Urban, C.}, {\sc Pitts, A.~M.}, {\sc and} {\sc Gabbay, M.~J.} 2003.
\newblock Nominal unification.
\newblock In {\em Proc. of the 17th Int. Work. on Computer Science Logic,
  CSL'03}. LNCS, vol. 2803. 513--527.

\bibitem[\protect\citeauthoryear{Urban, Pitts, and Gabbay}{Urban
  et~al\mbox{.}}{2004}]{UPG04}
{\sc Urban, C.}, {\sc Pitts, A.~M.}, {\sc and} {\sc Gabbay, M.~J.} 2004.
\newblock Nominal unification.
\newblock {\em Theoretical Computer Science\/}~{\em 323}, 473--497.

\end{thebibliography}

\end{document}